\newcommand{\CHI}{\mathcal{H}_{I}}
\def\C{{\mathbb C}}
\def\R{{\mathbb R}}
\def \tH {\tilde{H}}
\def\S{\mathbb{S}}
\def\CS{\mathcal{S}}
\def\CV{\mathcal{V}}
\def\CU{\mathcal{U}}
\def\CH{\mathcal{H}}
\def\ra{{\rightarrow}}
\newcommand{\G}[2]{\mathbb{G}_{#1,#2}}
\DeclareMathOperator*{\per}{per}
\DeclareMathOperator*{\rank}{rank}
\def\Xint#1{\mathchoice
{\XXint\displaystyle\textstyle{#1}}%
{\XXint\textstyle\scriptstyle{#1}}%
{\XXint\scriptstyle\scriptscriptstyle{#1}}%
{\XXint\scriptscriptstyle\scriptscriptstyle{#1}}%
\!\int}
\def\XXint#1#2#3{{\setbox0=\hbox{$#1{#2#3}{\int}$}
\vcenter{\hbox{$#2#3$}}\kern-.5\wd0}}
\def\dashint{\Xint-}
\newcommand{\bi}{\begin{itemize}}
\newcommand{\ei}{\end{itemize}}
\newcommand{\bd}{\begin{description}}
\newcommand{\ed}{\end{description}}
\newcommand{\beq}{\begin{equation}}
\newcommand{\eeq}{\end{equation}}
\newcommand{\beqn}{\begin{eqnarray}}
\newcommand{\eeqn}{\end{eqnarray}}
\newcommand{\beqna}{\begin{eqnarray*}}
\newcommand{\eeqna}{\end{eqnarray*}}
\newenvironment{DIFnomarkup}{}{}
\newtheorem{corollary}{Corollary}
\newtheorem{lemma}{Lemma}
\newtheorem{theorem}{Theorem}
\newtheorem{proposition}{Proposition}
\newtheorem{definition}{Definition}[section]
\newtheorem{example}{Example}
\newtheorem{remark}{Remark}
\newcommand{\ignorecite}[1]{{\@fileswfalse\cite{#1}}}%
\begin{document}
%\bibliographystyle{unsrt}
%
% paper title
\title{On the Number of Interference Alignment Solutions for the K-User MIMO Channel with Constant Coefficients}
%
%
% author names and IEEE memberships
% note positions of commas and nonbreaking spaces ( ~ ) LaTeX will not break
% a structure at a ~ so this keeps an author's name from being broken across
% two lines.
% use \thanks{} to gain access to the first footnote area
% a separate \thanks must be used for each paragraph as LaTeX2e's \thanks
% was not built to handle multiple paragraphs
\author{\'Oscar~Gonz\'alez, \IEEEmembership{Student~Member,~IEEE}, Carlos~Beltr\'an, and~Ignacio~Santamar\'ia,~\IEEEmembership{Senior~Member,~IEEE}\thanks{\'O. Gonz\'alez and I. Santamar\'ia are with the Communications Engineering Department (DICOM), University of Cantabria, Santander, 39005, Spain. C. Beltr\'an is with the Departamento de Matem\'aticas, Estad\'istica y Computaci\'on, Universidad de Cantabria. Avda. Los Castros s/n, Santander, Spain. The work of \'O. Gonz\'alez and I. Santamar\'ia was supported by MICINN (Spanish Ministry for Science and Innovation) under grants TEC2010-19545-C04-03 (COSIMA), CONSOLIDER-INGENIO 2010 CSD2008-00010 (COMONSENS) and FPU grant AP2009-1105. Carlos Beltr\'an was partially supported by MICINN grant MTM2010-16051.

This paper was presented in part at the IEEE 2013 International Symposium on Information Theory (ISIT 2013), Istanbul, Turkey \protect\ignorecite{Gonzalez2013}.
}}%

%% make the title area
\maketitle
% The abstract should be less than 200 words, self-contained (no citations), written in the passive voice

\begin{abstract}
In this paper, we study the number of different interference alignment (IA) solutions in a $K$-user multiple-input multiple-output (MIMO) interference channel, when the alignment is performed via beamforming and no symbol extensions are allowed. We focus on the case where the number of IA equations matches the number of variables. In this situation, the number of IA solutions is finite and constant for any channel realization out of a zero-measure set and, as we prove in the paper, it is given by an integral formula that can be numerically approximated using Monte Carlo integration methods. More precisely, the number of alignment solutions is the scaled average of the determinant of a certain Hermitian matrix related to the geometry of the problem. Interestingly, while the value of this determinant at an arbitrary point can be used to check the feasibility of the IA problem, its average (properly scaled) gives the number of solutions. For single-beam systems the asymptotic growth rate of the number of solutions is analyzed and some connections with classical combinatorial problems are presented. Nonetheless, our results can be applied to arbitrary interference MIMO networks, with any number of users, antennas and streams per user.
\end{abstract}
\clearpage
\begin{IEEEkeywords}
Interference Alignment, MIMO Interference Channel, Polynomial Equations, Algebraic Geometry
\end{IEEEkeywords}

\begin{center}
\textbf{Editorial Area}\\
Communications
\end{center}

%\IEEEpeerreviewmaketitle

%-----------------------------------------------------------------

%-----------------------------------------------------------------
\section{Introduction}

Interference alignment (IA) has received a lot of attention in recent years as a key technique to achieve the maximum degrees of freedom (DoF) of wireless networks in the presence of interference. Originally proposed in \cite{Jafar08},\cite{Motahari08}, the basic idea of IA consists of designing the transmitted signals in such a way that the interference at each receiver falls within a lower-dimensional subspace, therefore leaving a subspace free of interference for the desired signal \cite{JafarTut}. This idea has been applied in different forms (e.g., ergodic interference alignment \cite{Nazer12}, signal space alignment \cite{Jafar08}, or signal scale alignment \cite{Bresler2010},\cite{Cadambe2009c}), and adapted to various wireless networks such as interference networks \cite{Jafar08}, X channels \cite{Motahari08}, downlink broadcast channels in cellular communications \cite{TseDownlink11} and, more recently, to two-hop relay-aided networks in the form of interference neutralization \cite{Gou12}.

In this paper we consider the linear IA problem (i.e., signal space alignment by means of linear beamforming) for the $K$-user multiple-input multiple-output (MIMO) interference channel with constant channel coefficients. Moreover, the MIMO channels are considered to be generic, without any particular structure, which happens, for instance, when the channel matrices have independent entries drawn from a continuous distribution. This setup has also been the preferred option for recent experimental studies on IA \cite{Katabi09},\cite{Ayach10},\cite{Oscar11}.

The feasibility of linear IA for MIMO interference networks, which amounts to study the solvability of a set of polynomial equations, has been an active research topic during the last years \cite{Yetis10},\cite{Razaviyayn1},\cite{BreslerCartwrightTseToappear},\cite{Slock10},\cite{LauISIT12}. Combining algebraic geometry tools with differential topology, it has been recently proved in \cite{Gonzalez2012b} that an IA problem with any number of users, antennas and streams per user, is feasible \textit{iff} the linear mapping given by the projection from the tangent space of $\CV$ (the solution variety, whose elements are the triplets formed by the channels, decoders and precoders satisfying the IA equations) to the tangent space of $\mathcal{H}$ (the complex space of MIMO interference channels) at some element of $\CV$ is surjective. Note that this implies, in particular, that the dimension of $\CV$ must be larger than or equal to the dimension of $\mathcal{H}$ \cite{Razaviyayn1},\cite{BreslerCartwrightTseToappear}.

Exploiting this result, a general IA feasibility test with polynomial complexity has also been proposed in \cite{Gonzalez2012b}. This test reduces to check whether the determinant of a given square Hermitian matrix is zero (meaning infeasible almost surely) or not (feasible).

In this paper we build on the results in \cite{Gonzalez2012b} to study the problem of how many different alignment solutions exist for a given IA scenario. While the number of solutions is known for some particular cases (e.g the 3-user interference channel \cite{Bresler2011_3user}), a general result is not available yet. In \cite{Gonzalez2012b} it was proved that systems for which the algebraic dimension of the solution variety is strictly larger than that of the input space can have either zero or an infinite number of alignment solutions. In plain words, these are MIMO interference networks for which the number of variables is larger than the number of equations in the polynomial system. On the other hand, systems with less variables than equations are always infeasible \cite{Razaviyayn1,BreslerCartwrightTseToappear,Gonzalez2012b}.
Herein we will focus on the case in between, where the dimensions of $\CV$ and $\mathcal{H}$ are exactly the same (identical number of variables and equations), and consequently, the number of IA solutions is finite (it may be even zero) and constant out of a zero measure set of $\mathcal{H}$ as also proved in \cite{Gonzalez2012b}. In summary, rather than just characterizing feasible or infeasible system configurations, we seek to provide a more refined answer to the feasibility problem.

The number of solutions for single-beam  MIMO networks (i.e., all users wish to transmit a single stream of data) follows directly from a classical result from algebraic geometry, Bernstein's Theorem, as shown in \cite{Yetis10}.
More specifically, the number of alignment solutions coincides with the mixed volume of the Newton polytopes that support each equation of the polynomial system. Although this solves theoretically the problem for single-beam networks, in practice the computation of the mixed volume of a set of IA equations using the available software tools \cite{Lee07mixedvolume} can be very demanding. As a consequence, only a few cases have been solved so far. For single-beam networks, some upper bounds on the number of solutions using Bezout's Theorem have also been proposed in\cite{Yetis10},\cite{Schmidt2010}. For multi-beam scenarios, however, the genericity of the polynomials system of equations is lost and it is not possible to resort to mixed volume calculations to find the number of solutions. Furthermore, the existing bounds in multi-beam cases are very loose.

The main contribution of this paper is an integral formula for the number of IA solutions for arbitrary feasible networks. More specifically, we prove that while the feasibility problem is solved by checking the determinant of a certain Hermitian matrix, the number of IA solutions is given by the integral of the same determinant over a subset of the solution variety scaled by an appropriate constant. Although the integral, in general, is hard to compute analytically, it can be easily estimated using Monte Carlo integration. To speed up the convergence of the Monte Carlo integration method, we specialize the general integral formula for square symmetric multi-beam cases (i.e., equal number of transmit and receive antennas and equal number of streams per user). Analogously, in the particular case of single-beam networks, we provide a combinatorial counting procedure that allows us to compute the exact number of solutions and analyze its asymptotic growth rate.

In addition to being of theoretical interest, the results proved in this work might also have some practical implications. For instance, finding scaling laws for the number of solutions with respect to the number of users could serve to analyze the asymptotic performance of linear IA, as discussed in \cite{Schmidt2010}, where information about the number of solutions is used to predict system performance when the best solution (or the best out of N) solutions is picked. Recent results \cite{Bresler2013} also suggest that the number of solutions is related to the computational complexity of designing the precoders and decoders satisfying the IA conditions.

The paper is organized as follows. In Section \ref{systemmodel}, the system model and the IA feasibility problem are briefly reviewed, paying special attention to the feasibility test in \cite{Gonzalez2012b} which is the starting point of this work. The main results of the paper are presented in Section \ref{sec:number}, where an integral formula, valid for arbitrary networks, for the number of IA solutions is given. Two special cases, square symmetric and single-beam networks, are analyzed in Section \ref{sec:special_cases}. A short review on Riemmanian manifolds and other mathematical results that will also be used during the derivations as well as the proofs of the main theorems in Section \ref{sec:number} are relegated to appendices. Numerical results are included in Section \ref{sec:numerical_experiments}.

%\subsection{Notation}

%--------------------------------------------------------------------
\section{System model and background material}
\label{systemmodel}

In this section we describe the system model considered in the paper, introduce the notation, define the main algebraic sets used throughout the paper, and briefly review the feasibility conditions of linear IA problems for arbitrary wireless networks.

\subsection{Linear IA}
%\todo[inline]{Use consistent indexes: i,j vs k,l}
We consider the $K$-user MIMO interference channel with transmitter $k$ having $M_k\geq1$ antennas and receiver $k$ having $N_k\geq1$ antennas. Each user $k$ wishes to send $d_k\geq0$ streams or messages. We adhere to the notation used in \cite{Yetis10} and denote this (fully connected) asymmetric interference channel as $\prod_{k=1}^K \left(M_k\times N_k,d_k\right)=\left(M_1\times N_1,d_1\right)\cdots \left(M_K\times N_K,d_K\right)$. The symmetric case in which all users transmit $d$ streams and are equipped with $M$ transmit and $N$ receive antennas is denoted as $\left(M\times N,d\right)^K$. In the square symmetric case all users have the same number of antennas at both sides of the link $M=N$. In this paper we focus on the fully connected interference channel and, consequently, the number of interfering links will be $K(K-1)$.

User $j$ encodes its message using an $M_j \times d_j$ precoding matrix $V_j$ and the received signal is given by
\begin{equation}
\label{eq:received}
y_j=H_{jj}V_jx_j + \sum_{i\neq j}H_{ji}V_ix_i +n_j, \hspace{1cm} 1\leq j \leq K
\end{equation}
where $x_j$ is the $d_j \times 1$ transmitted signal and $n_j$ is the zero mean unit variance circularly symmetric additive white Gaussian noise vector. The MIMO channel from transmitter $l$ to receiver $k$ is denoted as $H_{kl}$ and assumed to be flat-fading and constant over time. Each $H_{kl}$ is an $N_k\times M_l$ complex matrix with independent entries drawn from a continuous distribution. The first term in (\ref{eq:received}) is the desired signal, while the second term represents the interference space. The receiver $j$ applies a linear decoder $U_j$ of dimensions $N_j \times d_j$, i.e.,
\begin{equation}
\label{eq:received1}
U_j^T y_j=U_j^T H_{jj}V_jx_j + \sum_{i\neq j}U_j^T H_{ji}V_ix_i +U_j^T n_j, \hspace{1cm} 1\leq j \leq K,
\end{equation}
where superscript $T$ denotes transpose.

The interference alignment (IA) problem is to find the decoders and precoders, $V_j$ and $U_j$, in such a way that the interfering signals at each receiver fall into a reduced-dimensional subspace and the receivers can then extract the projection of the desired signal that lies in the interference-free subspace. To this end it is required that the polynomial equations
\begin{equation}\label{eq:1}
U_k^TH_{kl}V_l=0,\qquad k\neq l,
%Hermitian
\end{equation}
are satisfied, while the signal subspace for each user must be linearly independent of the interference subspace and must have dimension $d_k$, that is
\begin{equation}\label{eq:rank}
\rank(U_k^TH_{kk}V_k)=d_k,\qquad\forall\;k.
%Hermitian
\end{equation}
We recall that all matrices $H_{kl}$ (including direct link matrices, $H_{kk}$) are generic, that is, their entries are independently drawn from a
continuous probability distribution. Consequently, once \eqref{eq:1} holds, \eqref{eq:rank} is satisfied almost surely if $U_k$ and $V_k$ are of maximal rank.
%--------------------------------------------------------
\subsection{Feasibility of IA: a brief review}

The IA feasibility problem amounts to study the relationship between $d_j,M_j,N_j,K$ such that the linear alignment problem is feasible. If the problem is feasible, the tuple $\left( d_1,\ldots,d_K \right)$ defines the degrees of freedom (DoF) of the system, that is the maximum number of independent data streams that can be transmitted without interference in the channel. The IA feasibility problem and the closely related problem of finding the maximum DoF of a given network have attracted a lot of research over the last years. For instance, the DoF for the 2-user and, under some conditions, for the symmetric $K$-user MIMO interference channel have been found in \cite{JafarDoF07} and \cite{JafarDoF10}, respectively. In this work we make the following assumptions:
\begin{equation}\label{eq:8}
1\leq d_k\leq N_k,\; \hspace{0.1cm}  \forall\;k,\qquad 1\leq d_l\leq M_l,\; \hspace{0.1cm}\forall\;l,
\end{equation}
and
\begin{equation}\label{eq:9}
d_kd_l<N_kM_l,\qquad\forall\;k \neq l,
\end{equation}
which are necessary conditions for feasibility which arise from the fact that two users of an interference channel cannot
reach their point-to-point bounds simultaneously since they have to leave at least a one-dimensional subspace for the interference.

The IA feasibility problem has also been intensively investigated in \cite{Yetis10,Razaviyayn1,BreslerCartwrightTseToappear,Slock10,LauISIT12}. In the following we make a short review of the main feasibility result presented in \cite{Gonzalez2012b}, which forms the starting point of this work.

We start by describing the three main algebraic sets involved in the feasibility problem which were first introduced in \cite{BreslerCartwrightTseToappear}:
\begin{itemize}
\item Input space formed by the MIMO matrices, which is formally defined as
\begin{equation} \label{eq:input}
\mathcal{H}=\prod_{k\neq l}\mathcal{M}_{N_k\times M_l}(\C)
\end{equation}
where $\prod$ holds for Cartesian product, and $\mathcal{M}_{N_k\times M_l}(\C)$ is the set of $N_k\times M_l$ complex matrices. Note that in \cite{OscarISIT12,Gonzalez2012b}, we let $\CH$ be the product of projective spaces instead of the product of affine spaces. The use of affine spaces is more convenient for the purposes of root counting.

\item Output space of precoders and decoders (i.e., the set where the possible outputs exist)

\begin{equation}\label{eq:output}
\mathcal{S}=\left(\prod_{k}\G{d_k}{N_k}\right)\times\left(\prod_{l}\G{d_l}{M_l}\right),
\end{equation}
where $\G{a}{b}$ is the Grassmannian formed by the linear subspaces of (complex) dimension $a$ in $\C^b$.

\item The solution variety, which is given by

\begin{equation}\label{eq:solution}
\CV=\{(H,U,V)\in\mathcal{H}\times\mathcal{S}:\text{ (\ref{eq:1}) holds}\}
\end{equation}

where $H$ is the collection of all matrices $H_{kl}$ and, similarly, $U$ and $V$ denote the set of $U_k$ and $V_l$, respectively. The set $\CV$ is given by certain polynomial equations, linear in each of the $H_{kl},U_k,V_l$ and therefore is an algebraic subvariety of the product space $\mathcal{H}\times\mathcal{S}$. Let us remind here that the IA equations given by (\ref{eq:1}) hold or do not hold independently of the particular chosen affine representatives of $U,V$.

\end{itemize}

%%%%%%%%%%%%%%%%%%%%%%%%%%%

The following diagram, illustrating the sets and the main projections involved in the feasibility problem, was considered in \cite{BreslerCartwrightTseToappear}:
\begin{equation}\label{eq:diag}
\begin{matrix}
&&\CV&&\\
\pi_1&\swarrow&&\searrow&\pi_2\\
\mathcal{H}&&&&\mathcal{S}
\end{matrix}
\end{equation}
Note that, given $H\in\mathcal{H}$, the set $\pi_1^{-1}(H)$ is a copy of the set of $U,V$ such that (\ref{eq:1}) holds, that is the solution set of the linear interference alignment problem. On the other hand, given $(U,V)\in\mathcal{S}$, the set $\pi_2^{-1}(U,V)$ is a copy of the set of $H\in\mathcal{H}$ such that (\ref{eq:1}) holds.

The feasibility question can then be restated as, {\em is $\pi_1^{-1}(H)\neq\emptyset$ for a generic $H$?} Following this formulation, the problem was first tackled in \cite{BreslerCartwrightTseToappear} and \cite{Razaviyayn1} where some necessary and sufficient conditions were given. Analytical expressions were limited to some symmetric scenarios of interest.
In \cite{Gonzalez2012b}, a solution to this problem was given by proposing a probabilistic polynomial time feasibility test for completely arbitrary interference channels. The test exploited the fact that system is feasible if and only if two conditions are fulfilled:
\begin{enumerate}
\item The algebraic dimension of $\CV$ must be larger than or equal to the dimension of $\mathcal{H}$, i.e.,
\begin{equation}\label{eq:2}
s=\left( \sum_{k} d_k(N_k+ M_k -2d_k) \right)-\left(\sum_{k \neq l}d_kd_l\right) \geq 0.
\end{equation}
In other words this condition means that, for the problem of polynomial equations to have a solution, the total number of variables must be larger than or equal to the total number of equations ($s\geq 0$).
We recall that a more general version of this condition was first established in \cite{Yetis10}. In that work, an interference channel was classified as \textit{proper} when the number of variables was larger than or equal to the number of equations for every subset of equations. Otherwise, it was classified as improper. More recently, in \cite{Razaviyayn1} it was rigorously proved that improper systems are always infeasible which implies that a system with $s<0$ is infeasible. 

\item For {\it some} element $(H,U,V) \in \CV$, the linear mapping
\begin{equation}\label{eq:4}
\begin{matrix}
\theta:&\left(\prod_{k}\mathcal{M}_{N_k\times d_k}(\C)\right)\times \left(\prod_{l}\mathcal{M}_{M_l\times d_l}(\C)\right)&\ra &\prod_{k \neq l}\mathcal{M}_{d_k\times d_l}(\C)\\
&(\{\dot{U}_k\},\{\dot{V}_l\})&\mapsto& \left\{ \dot{U}_k^T H_{kl}{V}_l+{U}_k^T H_{kl}\dot{V}_l\right\}_{k \neq l}
\end{matrix}
\end{equation}
is surjective, i.e., it has maximal rank equal to $\sum_{k\neq l}d_kd_l$. This condition amounts to saying that the projection from the tangent plane at an arbitrary point of the solution variety to the tangent plane of the input space must be surjective: that is, one tangent plane must cover the other. Moreover, in this case, the mapping (\ref{eq:4}) is surjective for {\em almost every} $(H,U,V)\in\CV$.
\end{enumerate}

We recall that these conditions were essentially found in \cite{Razaviyayn1} and \cite{BreslerCartwrightTseToappear} by using different mathematical tools than the ones used in \cite{Gonzalez2012b}. In this paper we will build on the results in \cite{Gonzalez2012b} using as a starting point the result stating that, when a system is feasible and $s=0$, the number of IA solutions is finite and constant for almost all channel realizations. This is formally stated in the following lemma.

\begin{lemma}[See Th. 1 in \cite{Gonzalez2012b}]\label{lem:constant}
For a feasible scenario and for almost every $H$, the solution set is a smooth complex algebraic submanifold of dimension $s$. If $s=0$, then there is constant $C\geq 1$ such that for every choice of $H_{kl}$ out of a proper algebraic subvariety (thus, for every choice out of a zero measure set) the system has exactly $C$ aligment solutions. 
\end{lemma}

%-------------------------------------------------------------------------------------
\begin{IEEEproof}
See \cite[Section V]{Gonzalez2012b}.
\end{IEEEproof}
%--------------------------------------------------------------------------------------

%%%%%%%%%%%%%%%%%%%%%%%%%%%%%%%%%%%%%%%%%%%%%%%%%%%%%%%%%%%%%%%%%%%%%%%%%%%%%%
\section{The number of solutions of feasible IA problems}

\label{sec:number}

%%%%%%%%%%%%%%%%%%%%%%%%%%%%%%%%%%%%%%%%%%%%%%%%%%%%%%%%%%%%%%%%%%%%%%%%%%%%%%

\subsection{Preliminaries}
 As shown in \cite{OscarISIT12,Gonzalez2012b}, the surjectivity of the mapping $\theta$ in (\ref{eq:4}) can easily checked by a polynomial-complexity test that can be applied to arbitrary $K$-user MIMO interference channels. The test basically consists of two main steps: i) to find an {\it arbitrary} point in the solution variety and ii) to check the rank of a matrix constructed from that point. 
As a solution to the first step we follow \cite[Sec. IV]{Gonzalez2012b} and choose a simple solution to the IA equations. Specifically, we take structured channel matrices given by
\begin{equation}\label{eq:form1}
 H_{kl}=\begin{pmatrix}0_{d_k\times d_l}&A_{kl}\\B_{kl}&C_{kl}\end{pmatrix},
\end{equation}
with precoders and decoders given by
\begin{equation}\label{eq:form2}
V_l=\binom{I_{d_l}}{0_{(M_l-d_l)\times d_l}}, \hspace{0.5cm}  U_k=\binom{I_{d_k}}{0_{(N_k-d_k)\times d_k}},
\end{equation}
which trivially satisfy $U_k^T H_{kl}V_l=0$ and therefore belong to the solution variety. We claim that essentially all the useful information about $\CV$ can be obtained from the subset of $\CV$ consisting of the triples $(H_{kl},U_k,V_l)$ where its elements have the form (\ref{eq:form1}) and (\ref{eq:form2}). In order to see this, we pick any other element $(\tilde{H}_{kl},\tilde{U}_k,\tilde{V}_l)\in\CV$.  Without loss of generality we can assume $\tilde{U}_k$ and $\tilde{V}_l$ lie in the Stiefel manifold i.e. they satisfy $\tilde{U}_k^*\tilde{U}_k=I$ and $\tilde{V}_l^*\tilde{V}_l=I$ where the superscript $*$ denotes Hermitian (conjugate transpose). Now, we will show how this element of $\CV$ can be converted into one of the form (\ref{eq:form1}) and (\ref{eq:form2}). First, we compute a QR decomposition of $\tilde{U}_k$ and $\tilde{V}_l$, that is
\[
\tilde{U}_k=P_k\binom{I_{d_k}}{0_{(N_k-d_k)\times d_k}}=P_kU_k,\quad \tilde{V}_l=Q_l\binom{I_{d_l}}{0_{(M_l-d_l)\times d_l}}=Q_lV_l.
\]
where $P_k$ and $Q_l$ are unitary matrices. Then, the IA condition can be written as
\[
\tilde{U}_k^T\tilde{H}_{kl}\tilde{V}_l=U_k^TP_k^T\tilde{H}_{kl}Q_lV_l=0.
\]
It is now clear that the transformed channels $H_{kl}=P_k^T\tilde{H}_{kl}Q_l$ have the form (\ref{eq:form1}), and the transformed precoders $V_l$ and decoders $U_k$ have the form (\ref{eq:form2}). We have just described an isometry that sends $(\tilde{H}_{kl},\tilde{U}_k,\tilde{V}_l)$ to $(H_{kl},U_k,V_l)$. The situation is thus similar to that of a torus: every point can be sent to some predefined vertical circle through a rotation, thus the torus is essentially understood by ``moving'' a circumference and keeping track of the visited places. The same way, $\CV$ can be thought of as moving the set of triples of the form (\ref{eq:form1}) and (\ref{eq:form2}), and keeping track of the visited places. Technically, $\CV$ is the orbit of the set of triples of the form (\ref{eq:form1}) and (\ref{eq:form2}) under the isometric action of a product of unitary groups.

In \cite{Gonzalez2012b} this idea is rigorously exploited, proving that, for the purpose of checking feasibility or counting solutions, we can replace the set of arbitrary complex matrices $\CH$ by the set of structured matrices
\begin{equation}\label{eq:H_I}
\CHI=\prod_{k\neq l}\begin{pmatrix}0_{d_k\times d_l}&A_{kl}\\B_{kl}&C_{kl}\end{pmatrix}\equiv\pi_2^{-1}\left(\left\lbrace\binom{I_{d_k}}{0_{(N_k-d_k)\times d_k}}\right\rbrace_k,\left\lbrace\binom{I_{d_l}}{0_{(M_l-d_l)\times d_l}}\right\rbrace_l\right).
\end{equation}
The mapping $\theta$ in (\ref{eq:4}) has a simpler form for triples of the form (\ref{eq:form1}) and (\ref{eq:form2}), and can be replaced by a new mapping $\Psi$ defined as
\begin{equation}\label{eq:4bis}
\begin{matrix}
\Psi:&\left(\prod_{k}\mathcal{M}_{(N_k-d_k)\times d_k}(\C)\right)\times \left(\prod_{l}\mathcal{M}_{(M_l-d_l)\times d_l}(\C)\right)&\ra &\prod_{k \neq l}\mathcal{M}_{d_k\times d_l}(\C)\\
&(\{\dot{{U}}_k\}_{k},\{\dot{{V}}_l\}_{l})&\mapsto&\left(\dot{{U}}_k^T {B}_{kl}+{A}_{kl}\dot{{V}}_l\right)_{k\neq l}
\end{matrix}.
\end{equation}
We remark that, since the mapping \eqref{eq:4bis} is linear in both
$\dot{U}_k$ and $\dot{V}_l$, it can be represented by a matrix. With a slight abuse of notation we will use the symbol $\Psi$ to refer to both the mapping and the matrix representing that mapping. In this paper, we will be interested in the function $\det(\Psi\Psi^*)$, which depends on the channel realization $H$ through the blocks ${A}_{kl}$ and ${B}_{kl}$ only. The dimensions of $\Psi$ are $\sum_{k\ne l}d_kd_l\times \sum_{k=1}^K(M_k+N_k-2d_k)d_k$. In the particular case of $s=0$, the one of interest for this paper, $\Psi$ is a square matrix of size $\sum_{k\ne l}d_kd_l$ and, therefore, $\det(\Psi\Psi^*)=|\det(\Psi)|^2$. The interested reader can find additional details on the structure of the matrix $\Psi$ in \cite{Gonzalez2012b} and in the example in Section \ref{sec:example} below.

%%%%%%%%%%%%%%%%%%%%%%%%%%%%%%%%%%%%%%%%%%%%%%%%%%%%%%%%%%%%%%%%%%%%%%%%%%%%%%%%%%%%%%%%%%%%%%%%%%

\subsection{Main results}

We use the following notation: given a Riemannian manifold $X$ with total finite volume denoted as $Vol(X)$ (the volume of the manifolds used in this paper are reviewed in Appendix \ref{appendix:preliminaries}), let
\[
\dashint_{x\in X}f(x)\,dx=\frac{1}{Vol(X)}\int_{x\in X}f(x)\,dx
\]
be the average value of a integrable function $f:X\ra\R$. Fix $d_j,M_j,N_j$ and $\Phi$ satisfying (\ref{eq:8}) and (\ref{eq:9}) and let $s$ be defined as in (\ref{eq:2}). The main results of the paper are Theorems 1 and 2 below, which give integral expressions for the number of IA solutions when $s=0$ which is denoted as $\#(\pi_1^{-1}(H_0))$. For the sake of rigorousness, we denote a generic channel realization as $H_0$. Recall that the particular choice of $H_0$ is irrelevant since the number of solutions is the same for all channel realizations out of some zero-measure set.

\begin{theorem}\label{th:main1}
Assume that $s=0$, and let $\CH_\epsilon\subseteq\CH$ be any open set such that the following holds: if $H=(H_{kl})\in\CH_\epsilon$ and $P_k,Q_k$, $1\leq k\leq K$ are unitary matrices of respective sizes $N_k,M_k$, then
\[
(P_k^TH_{kl}Q_l)\in\CH_\epsilon.
\]
(We may just say that $\CH_\epsilon$ is invariant under unitary transformations). Then, for every $H_0\in\CH$ out of some zero--measure set, we have:
\begin{align}\label{eq:main1}
\#(\pi_1^{-1}(H_0))=C\int_{H\in \CHI\cap\CH_\epsilon}\det(\Psi\Psi^*)\,dH,
\end{align}
where
\[
C=\frac{Vol(\CS)}{Vol(\CH_\epsilon)},
\]
with $\CS$ being the output space (Cartesian product of Grasmannians) in Eq. (\ref{eq:output}) and $\CHI$ defined in \eqref{eq:H_I}.
\end{theorem}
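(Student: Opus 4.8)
The plan is to prove the formula by a double-fibration / integral-geometry argument built on the coarea formula applied to the two projections $\pi_1$ and $\pi_2$ in diagram (\ref{eq:diag}). Two facts make this work. First, when $s=0$ the map $\pi_1$ is generically a finite covering of $\CH$ with constant fiber cardinality $\sharp(\pi_1^{-1}(H_0))$ outside a zero-measure set, as established in \cite{Gonzalez2012b}. Second, $\pi_2$ realizes $\CV$ as a smooth complex vector bundle over the compact base $\CS$: the fiber $\pi_2^{-1}(U,V)$ is the linear space of channels $H$ satisfying the equations (\ref{eq:1}), so $\CV$ is smooth and $\pi_2$ is a submersion with linear fibers, the fiber over the structured point (\ref{eq:form2}) being exactly $\CHI$. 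I write $NJ\pi_i$ for the normal Jacobian of $\pi_i$.

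First I would apply the coarea formula to $\pi_1$ restricted to $\pi_1^{-1}(\CH_\epsilon)$. Using that the fiber cardinality equals $\sharp(\pi_1^{-1}(H_0))$ for almost every $H\in\CH_\epsilon$, this gives
\[
\int_{\pi_1^{-1}(\CH_\epsilon)} NJ\pi_1\, d\CV = \int_{\CH_\epsilon}\sharp(\pi_1^{-1}(H))\,dH = \sharp(\pi_1^{-1}(H_0))\,Vol(\CH_\epsilon).
\]
Next I would rewrite the same integral over $\CV$ by inserting the factor $NJ\pi_2/NJ\pi_2$ (legitimate since, as shown below, $NJ\pi_2>0$ everywhere) and applying the coarea formula to $\pi_2$, turning the integral over $\CV$ into an iterated integral over the base and the fibers:
\[
\int_{\pi_1^{-1}(\CH_\epsilon)} \frac{NJ\pi_1}{NJ\pi_2}\,NJ\pi_2\, d\CV = \int_{(U,V)\in\CS}\Big(\int_{\pi_2^{-1}(U,V)\cap\,\CH_\epsilon}\frac{NJ\pi_1}{NJ\pi_2}\,dH\Big)\,d(U,V).
\]

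Now I would use the unitary symmetry. The product group $\prod_k U(N_k)\times\prod_l U(M_l)$ acts by isometries on $\CH\times\CS$, preserves $\CV$, and acts transitively on $\CS$; since $\CH_\epsilon$ is assumed invariant, the inner integral is the same for every $(U,V)$ and equals its value on the standard fiber $\CHI$. Thus the outer integral contributes a factor $Vol(\CS)$ and everything reduces to the single pointwise identity $NJ\pi_1/NJ\pi_2=\det(\Psi\Psi^*)$ on $\CHI\cap\CH_\epsilon$. Combining with the first step and dividing by $Vol(\CH_\epsilon)$ then yields exactly (\ref{eq:main1}) with $C=Vol(\CS)/Vol(\CH_\epsilon)$.

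The heart of the proof, and the step I expect to be the main obstacle, is the pointwise computation of $NJ\pi_1/NJ\pi_2$ at a structured triple. The plan is to describe $T_{(H,U,V)}\CV$ as the kernel of the derivative of $(U_k^TH_{kl}V_l)$; with the standard representatives (\ref{eq:form2}) the $H$-derivative is the orthogonal projection onto the top-left blocks, so $T\CV$ is the graph $\{(-\Psi(\dot U,\dot V)\oplus\dot H_1,\dot U,\dot V)\}$ over $W_1\times T\CS$, where $W_1$ is the tangent to the fiber $\CHI$ and $\Psi$ is the map (\ref{eq:4bis}). Computing the induced Gram matrix of this graph gives the block form $\mathrm{diag}(I,\,I+\Psi^*\Psi)$, from which I would read off $NJ\pi_2=\det(I+\Psi^*\Psi)^{-1}>0$ and, since $s=0$ forces $\Psi$ square, $NJ\pi_1=\det(\Psi\Psi^*)/\det(I+\Psi^*\Psi)$, so that the quotient is $\det(\Psi\Psi^*)$. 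The delicate points are the consistent use of real versus complex determinants (each complex direction contributes a factor $|\det_{\C}|^2$), and the verification that the coarea formula applies despite the ramification locus $\{\det(\Psi\Psi^*)=0\}$ of $\pi_1$, which is handled by observing that it is a proper analytic subset, hence of measure zero, contributing nothing to either side of the identity.
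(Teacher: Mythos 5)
Your proposal is correct and takes essentially the same route as the paper: what you re-derive by applying the coarea formula twice and forming the ratio ${\rm NJ}\pi_1/{\rm NJ}\pi_2$ is exactly the double-fibration formula (Theorem~\ref{th:doublefibration}, cited from Blum--Cucker--Shub--Smale) together with its unitary-invariance corollary (Corollary~\ref{cor:doublefibration}) that the paper invokes. Your pointwise Gram-matrix computation of the normal Jacobians at a structured triple is the same calculation the paper performs via the implicit function $DG(H)\dot H=-\Psi_H^{-1}(\dot R_{kl})$, yielding $\det(DG(H)DG(H)^*)^{-1}=\det(\Psi^*\Psi)=\det(\Psi\Psi^*)$.
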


%-------------------------------------------------------------------------------------
\begin{IEEEproof}
See Appendix \ref{sec:proof1}.
\end{IEEEproof}
%--------------------------------------------------------------------------------------

%We also compute explicitly the value of the constant $C$ in the case that $\CH_\epsilon=\CH$:
If we take $\CH_\epsilon$ to be the set
\[
\{(H_{kl}):\|H_{kl}\|_F\in(1-\epsilon,1+\epsilon)\}
\]
(with $\|\cdot\|_F$ denoting Frobenius norm) and we let $\epsilon\rightarrow0$ we get:
\begin{theorem}\label{th:main2}
For an interference channel with $s=0$, and for every $H_0\in\CH$ out of some zero--measure set, we have:
\[
\#(\pi_1^{-1}(H_0))=C\dashint_{H\in \CHI,\|H_{kl}\|_F=1 }\det(\Psi\Psi^*)\,dH,
\]
where
\[
C=\prod_{k \neq l}\left(\frac{\Gamma(N_kM_l)}{\Gamma(N_kM_l-d_kd_l)}\right)\times
\]
\[
\prod_{k}\left(\frac{\Gamma(2)\cdots \Gamma(d_k)\cdot  \Gamma(2)\cdots \Gamma(N_k-d_k)}{\Gamma(2)\cdots \Gamma(N_k)}\right)\times
\]
\[
 \prod_{l}\left(\frac{\Gamma(2)\cdots \Gamma(d_l)\cdot  \Gamma(2)\cdots \Gamma(M_l-d_l)}{\Gamma(2)\cdots \Gamma(M_l)}\right)
\]
\end{theorem}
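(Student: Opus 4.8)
The plan is to apply Theorem \ref{th:main1} to the specific family $\CH_\epsilon=\{(H_{kl}):\|H_{kl}\|_F\in(1-\epsilon,1+\epsilon)\}$ and then read off the constant by letting $\epsilon\to0$. First I would verify that this $\CH_\epsilon$ meets the hypothesis of Theorem \ref{th:main1}, namely invariance under unitary transformations. This is immediate from the unitary invariance of the Frobenius norm: if $P_k,Q_l$ are unitary then $\bar{P_k}P_k^T=I$ and $Q_lQ_l^*=I$, whence $\|P_k^TH_{kl}Q_l\|_F=\|H_{kl}\|_F$, so membership in $\CH_\epsilon$ is preserved. Theorem \ref{th:main1} then gives, for every $\epsilon>0$ and every $H_0$ outside a zero--measure set,
\[
\sharp(\pi_1^{-1}(H_0))=\frac{Vol(\CS)}{Vol(\CH_\epsilon)}\int_{H\in\CHI\cap\CH_\epsilon}\det(\Psi\Psi^*)\,dH .
\]
Since the left--hand side does not depend on $\epsilon$, the quotient on the right is constant in $\epsilon$, and it suffices to compute its value (equivalently, its $\epsilon\to0$ limit).

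The core of the argument is a polar--coordinate decomposition that exploits the homogeneity of $\det(\Psi\Psi^*)$. Because $\CHI$ is a coordinate subspace, it factors over $(k,l)\in\Phi$ into the structured blocks, each living in $\C^{n_{kl}}$ with $n_{kl}=N_kM_l-d_kd_l$, and the restriction of $dH$ is Lebesgue measure in these coordinates. From \eqref{eq:Psi_A_Psi_B}, $\Psi$ is linear in the blocks $A_{kl},B_{kl}$, and both $\Psi^{(A)}_{kl}$ and $\Psi^{(B)}_{kl}$ lie in the row partition $(k,l)$, which has exactly $d_kd_l$ rows. Since $s=0$ makes $\Psi$ square, $\det(\Psi\Psi^*)=\abs{\det\Psi}^2$, and scaling the $(k,l)$--block by $r_{kl}$ multiplies that entire row partition by $r_{kl}$; hence $\det(\Psi\Psi^*)$ is separately homogeneous of degree $2d_kd_l$ in the $(k,l)$--block. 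Writing each block as $r_{kl}\,\omega_{kl}$ with $\omega_{kl}$ on the unit sphere of $\C^{n_{kl}}$, the radial and angular parts separate and the radial exponent becomes $(2d_kd_l)+(2n_{kl}-1)=2N_kM_l-1$. This is precisely the radial weight in $Vol(\CH_\epsilon)=\prod_{(k,l)}\frac{2\pi^{N_kM_l}}{\Gamma(N_kM_l)}\int_{1-\epsilon}^{1+\epsilon}r^{2N_kM_l-1}\,dr$, so the radial integrals cancel identically for every $\epsilon$ (confirming the $\epsilon$--independence) and leave, using $2\pi^{n_{kl}}/\Gamma(n_{kl})$ for the volume of the unit sphere of $\C^{n_{kl}}$,
\[
\frac{1}{Vol(\CH_\epsilon)}\int_{\CHI\cap\CH_\epsilon}\det(\Psi\Psi^*)\,dH=\prod_{(k,l)\in\Phi}\frac{\pi^{n_{kl}}\,\Gamma(N_kM_l)}{\pi^{N_kM_l}\,\Gamma(n_{kl})}\dashint_{\CHI,\,\|H_{kl}\|_F=1}\det(\Psi\Psi^*)\,dH .
\]

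It then remains to assemble $C=Vol(\CS)\prod_{(k,l)}\pi^{-d_kd_l}\,\Gamma(N_kM_l)/\Gamma(N_kM_l-d_kd_l)$ and simplify. I would substitute the Grassmannian volume $Vol(\G{d}{N})=\pi^{d(N-d)}\,\frac{(\Gamma(2)\cdots\Gamma(d))(\Gamma(2)\cdots\Gamma(N-d))}{\Gamma(2)\cdots\Gamma(N)}$ from Appendix \ref{appendix:preliminaries} into $Vol(\CS)=\prod_kVol(\G{d_k}{N_k})\prod_lVol(\G{d_l}{M_l})$. The total power of $\pi$ collected from $Vol(\CS)$ is $\sum_kd_k(N_k-d_k)+\sum_ld_l(M_l-d_l)$, while the factors $\pi^{-d_kd_l}$ contribute $-\sum_{(k,l)\in\Phi}d_kd_l$; these cancel exactly because $s=0$ is, after expanding \eqref{eq:2}, the identity $\sum_kd_k(N_k-d_k)+\sum_ld_l(M_l-d_l)=\sum_{(k,l)\in\Phi}d_kd_l$. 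The surviving Gamma--function factors are then precisely the three products in the statement, which finishes the proof.

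The hard part will not be conceptual but a matter of bookkeeping: I must keep the metric normalizations mutually consistent across $Vol(\CH_\epsilon)$, the sphere surface measures, and the Grassmannian volumes, since the final cancellation of all powers of $\pi$ and the matching of the superfactorial products $\Gamma(2)\cdots\Gamma(m)$ rely on using the same induced Riemannian metrics throughout. The single step I would write out most carefully is the block--by--block homogeneity of $\det(\Psi\Psi^*)$ — in particular that each row partition $(k,l)$ contributes exactly $d_kd_l$ to the exponent, upgrading the radial weight from $2n_{kl}-1$ to $2N_kM_l-1$ — because this is exactly what forces the radial integrals against $Vol(\CH_\epsilon)$ to cancel.
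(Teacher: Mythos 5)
Your proposal is correct and reaches the stated constant, but it handles the key computation by a genuinely different route than the paper. The paper's proof in Appendix~\ref{sec:proof2} works with the same shell (the set $\{\|H_{kl}\|_F\in(1-\epsilon,1+\epsilon)\}$ coincides with the tubular neighborhood $\{d(H_{kl},\{\|R\|_F=1\})<\epsilon\}$ used there), but it evaluates both $Vol(\CH_\epsilon)$ and the integral only to first order in $\epsilon$: the volume via the tube formula of Theorem~\ref{th:gray}, and the integral via the coarea formula applied to the normalization map $H_{kl}\mapsto H_{kl}/\|H_{kl}\|_F$, expanding the integrand as $\det(\Psi_H\Psi_H^*){\rm NJ}f(H)+O(\epsilon)$ and then letting $\epsilon\to0$. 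You instead observe that $\det(\Psi\Psi^*)$ is homogeneous of degree $2d_kd_l$ in the $(k,l)$-block (because, with $s=0$, $\Psi$ is square and scaling that block scales exactly the $d_kd_l$ rows of the corresponding row partition), so that in polar coordinates the radial weight becomes $r^{2N_kM_l-1}$ and cancels \emph{exactly} against the radial weight in $Vol(\CH_\epsilon)$ for every $\epsilon$, with no limit or $O(\epsilon)$ bookkeeping needed. This is a cleaner argument for this particular $\CH_\epsilon$, and it has the side benefit of directly verifying the remark after Theorem~\ref{th:main1} that the right-hand side of (\ref{eq:main1}) is independent of the choice of admissible $\CH_\epsilon$ within this family; the paper's tube-formula/coarea machinery is heavier here but is the template reused in the harder proof of Theorem~\ref{th:main3}, where no such exact homogeneity cancellation is available. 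Your final bookkeeping (cancellation of the powers of $\pi$ via the $s=0$ identity and assembly of the Gamma products from (\ref{eq:volumesphere}) and (\ref{eq:volumegrass})) matches the paper's.
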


%-------------------------------------------------------------------------------------
\begin{IEEEproof}
See Appendix \ref{sec:proof2}.
\end{IEEEproof}
%--------------------------------------------------------------------------------------

\begin{remark}
As proved in \cite{Gonzalez2012b}, if the system is infeasible then $\det(\Psi\Psi^*)=0$ for every choice of $H,U,V$ and hence Theorem \ref{th:main1} still holds. On the other hand, if the system is feasible and $s>0$ then there is a continuous of solutions for almost every $H_{kl}$ and hence it is meaningless to count them (the value of the integrals in our theorems is not related to the number of solutions in that case). Note also that the equality of Theorem \ref{th:main1} holds for every unitarily invariant open set $\CH_\epsilon$, which from Lemma \ref{lem:constant} implies that the right--hand side of (\ref{eq:main1}) has the same value for all such $\CH_\epsilon$.
\end{remark}

\subsection{Example: the $(2 \times 2,1 )^3$ system}\label{sec:example}
\label{ex:example_th2}
In this example we specialize Theorem \ref{th:main2} to the $(2 \times 2,1 )^3$ scenario. Although the number of IA solutions for this network is known to be $2$ from the seminal work \cite{Jafar08}, this example will serve to illustrate the main steps followed to find the solution of the integral equation, and the difficulties to extend this analysis to more complex scenarios.

Let us start by considering structured $(2 \times 2)$ matrices of the form
\begin{equation}\label{eq:matricesapp}
 \bar{H}_{kl}=\begin{pmatrix}0&A_{kl}\\B_{kl}&C_{kl}\end{pmatrix},
\end{equation}
whose entries, without loss of generality, can be taken as independent complex normal random variables with zero mean and variance 2: $A_{kl}\sim CN(0,2)$, $B_{kl} \sim CN(0,2)$ and $C_{kl}\sim CN(0,2)$\footnote{The real and imaginary parts of each entry are independent real Gaussian random variables with zero mean and variance 1}. Each one of these random matrices is now normalized to get
\begin{equation}\label{eq:matricesapp_norm}
 H_{kl}= \begin{pmatrix}0&A_{kl}/\|\bar{H}_{kl}\|_F\\B_{kl}/\|\bar{H}_{kl}\|_F&C_{kl}/\|\bar{H}_{kl}\|_F\end{pmatrix}.
\end{equation}
The collection of matrices generated in this way is uniformly distributed on the set $\{ \CHI \bigcap \|H_{kl}\|_F=1 \}$ in Theorem \ref{th:main2}. Therefore, the integral formula given in Theorem 2 yields:
\begin{equation}\label{eq:expectation}
\sharp(\pi_1^{-1}(H_0)) = C \, E\left[ \det(\Psi \Psi^*) \right]=C \, E\left[|\det(\Psi)|^2\right],
\end{equation}
where $C=3^6=729$.

Choosing a natural order in the image space, the $6\times 6$ matrix $\Psi$ defining the mapping for the $(2 \times 2,1)^3$ scenario is 
\begin{equation}
\Psi = \left[ \begin{array}{cccccc}
B_{12}/\|\bar{H}_{12}\|_F&0&0&0&A_{12}/\|\bar{H}_{12}\|_F&0\\
B_{13}/\|\bar{H}_{13}\|_F&0&0&0&0&A_{13}/\|\bar{H}_{13}\|_F\\
0&B_{21}/\|\bar{H}_{21}\|_F&0&A_{21}/\|\bar{H}_{21}\|_F&0&0\\
0&B_{23}/\|\bar{H}_{23}\|_F&0&0&0&A_{23}/\|\bar{H}_{23}\|_F\\
0&0&B_{31}/\|\bar{H}_{31}\|_F&A_{31}/\|\bar{H}_{31}\|_F&0&0\\
0&0&B_{32}/\|\bar{H}_{32}\|_F&0&A_{32}/\|\bar{H}_{32}\|_F&0
\end{array}\right].
\nonumber
\end{equation}
It is easy to compute the determinant of this matrix expanding it along the first column:
\[
\det(\Psi)=\frac{B_{12}A_{13}A_{32}B_{23}B_{31}A_{21}}{\|\bar{H}_{12}\|_F\|\bar{H}_{13}\|_F\|\bar{H}_{32}\|_F\|\bar{H}_{23}\|_F\|\bar{H}_{31}\|_F\|\bar{H}_{21}\|_F}-\frac{B_{13}A_{12}A_{23}B_{21}A_{31}B_{32}}{\|\bar{H}_{13}\|_F\|\bar{H}_{12}\|_F\|\bar{H}_{23}\|_F\|\bar{H}_{21}\|_F\|\bar{H}_{31}\|_F\|\bar{H}_{32}\|_F}.
\]
Therefore,
\begin{align*}
|\det(\Psi)|^2&=\left|\frac{B_{12}A_{13}A_{32}B_{23}B_{31}A_{21}}{\|\bar{H}_{12}\|_F\|\bar{H}_{13}\|_F\|\bar{H}_{32}\|_F\|\bar{H}_{23}\|_F\|\bar{H}_{31}\|_F\|\bar{H}_{21}\|_F}\right|^2\\
&+\left|\frac{B_{13}A_{12}A_{23}B_{21}A_{31}B_{32}}{\|\bar{H}_{13}\|_F\|\bar{H}_{12}\|_F\|\bar{H}_{23}\|_F\|\bar{H}_{21}\|_F\|\bar{H}_{31}\|_F\|\bar{H}_{32}\|_F}\right|^2\\
&-2\mathcal{R}e\left(\frac{B_{12}A_{13}A_{32}B_{23}B_{31}A_{21}B_{13}A_{12}A_{23}B_{21}A_{31}B_{32}}{(\|\bar{H}_{12}\|_F\|\bar{H}_{13}\|_F\|\bar{H}_{32}\|_F\|\bar{H}_{23}\|_F\|\bar{H}_{31}\|_F\|\bar{H}_{21}\|_F)^2}\right).
\end{align*}
The first of these quantities is the product of $6$ i.i.d. random variables, thus
\[
E\left[\left|\frac{B_{12}A_{13}A_{32}B_{23}B_{31}A_{21}}{\|\bar{H}_{12}\|_F\|\bar{H}_{13}\|_F\|\bar{H}_{32}\|_F\|\bar{H}_{23}\|_F\|\bar{H}_{31}\|_F\|\bar{H}_{21}\|_F}\right|^2\right]=E\left[\left|\frac{B_{12}}{\|\bar{H}_{12}\|_F}\right|^2\right]^6.
\]
Similarly,
\[
E\left[\left|\frac{B_{13}A_{12}A_{23}B_{21}A_{31}B_{32}}{\|\bar{H}_{13}\|_F\|\bar{H}_{12}\|_F\|\bar{H}_{23}\|_F\|\bar{H}_{21}\|_F\|\bar{H}_{31}\|_F\|\bar{H}_{32}\|_F}\right|^2\right]=E\left[\left|\frac{B_{12}}{\|\bar{H}_{12}\|_F}\right|^2\right]^6.
\]
Finally,
\[
E\left[\frac{B_{12}A_{13}A_{32}B_{23}B_{31}A_{21}B_{13}A_{12}A_{23}B_{21}A_{31}B_{32}}{(\|\bar{H}_{12}\|_F\|\bar{H}_{13}\|_F\|\bar{H}_{32}\|_F\|\bar{H}_{23}\|_F\|\bar{H}_{31}\|_F\|\bar{H}_{21}\|_F)^2}\right]=0,
\]
because $B_{12}$ has the same distribution as $-B_{12}$. That is, the isometry $B_{12}\mapsto -B_{12}$ changes the sign of the function inside the expectation symbol but the expectation is unchanged when multiplied by $-1$. Hence, the expectation is $0$. We have thus proved that
\[
\sharp(\pi_1^{-1}(H_0)) = 2\cdot 3^6 E\left[\left|\frac{B_{12}}{\|\bar{H}_{12}\|_F}\right|^2\right]^6.
\]
We now compute the last term using the fact that $\left|\frac{B_{12}}{\|\bar{H}_{12}\|_F}\right|^2 \sim  {\rm Beta}(1,2)$, where ${\rm Beta}(1,2)$ denotes a beta-distributed random variable with shape parameters 1 and 2.

%We now compute the last term, renaming $A_{12}=a=a_1+\i a_2$, $B_{12}=b=b_1+\i b_2$, $C_{12}=c=c_1+\i c_2$:
%\[
%E\left[\left|\frac{A_{12}}{\|\bar{H}_{12}\|_F}\right|^2\right]=E_{a,b,c\sim CN(0,2)}\left(\frac{|a|^2}{|a|^2+|b|^2+|c|^2}\right)=\frac{1}{3}
%\]
%\[
%\frac{1}{(2\pi)^3}\int_{a,b,c\in\C}\frac{|a|^2e^{-\frac{|a|^2+|b|^2+|c|^2}{2}}}{|a|^2+|b|^2+|c|^2}=\frac{1}{(2\pi)^3}\int_0^\infty\frac{e^{-\rho^2/2}}{\rho^2}\int_{a,b,c\in\C:|a|^2+|b|^2+|c|^2=\rho}|a|^2\,d(a,b,c)\,d\rho=
%\]
%\[
%\frac{1}{(2\pi)^3}\int_0^\infty\rho^5e^{-\rho^2/2}\,d\rho\int_{a,b,c\in\C:|a|^2+|b|^2+|c|^2=1}|a|^2\,d(a,b,c)=\frac{1}{\pi^3}\int_{a,b,c\in\C:|a|^2+|b|^2+|c|^2=1}|a|^2\,d(a,b,c)=
%\]
%\[
%\frac{1}{\pi^3}\int_{|a|^2\leq 1}\frac{|a|^2}{\sqrt{1-|a|^2}}Vol(b,c:|b|^2+|c|^2=1-|a|^2)\,da=\frac{2}{\pi}\int_{|a|^2\leq 1}|a|^2(1-|a|^2)\,da=4\int_0^1s^3(1-s^2)\,ds=\frac{1}{3}.
%\]

Consequently,
\[
\sharp(\pi_1^{-1}(H_0)) = 2\cdot 3^6\left(\frac{1}{3}\right)^6=2,
\]
as desired.

\subsection{Estimating the number of solutions via Monte Carlo integration}\label{sec:montecarlo}
Given the complexity of analytically computing the integral in Theorem \ref{th:main2} for general scenarios (as illustrated with a simple example in Section \ref{ex:example_th2}), we will provide, in this section, a method to approximate its value using Monte Carlo integration. Our main reference here is \cite{MonteCarlo}. The {\em Crude Monte Carlo} method for computing the average
\[
E_X(f)=\dashint_{x\in X}f(x)\,dx
\]
of a function $f$ defined on a finite-volume manifold $X$ consists just in choosing many points at random, say $x_1,\ldots,x_n$ for $n>>1$, uniformly distributed in $X$, and approximating
\begin{equation}\label{eq:mean}
\dashint_{x\in X}f(x)\,dx\approx E_n=\frac{1}{n}\sum_{j=1}^nf(x_j).
\end{equation}

The most reasonable way to implement this in a computer program is to write down an iteration that computes $E_1,E_2,E_3,\ldots$ The key question to be decided is how many such $x_j$ we must choose to get a reasonably good approximation of the integral. To do so, we follow the ideas in \cite[Sec. 5]{MonteCarlo}: first note that the random variable $Y_n=\sqrt{n}(E_X(f)-E_n)$ approaches, by the Central Limit Theorem, a Normal distribution, that is the density function of $Y_n$ can be approximated by
\[
\frac{1}{\sigma\sqrt{2\pi}}e^{-\frac{t^2}{2\sigma^2}},
\]
for some $\sigma$ which is actually the standard deviation of $f$, given by
\[
\sigma^2=\dashint_{x\in X}\left(f(x)-E_X(f)\right)^2\,dx.
\]
Now note that
\[
\frac{1}{\sigma\sqrt{2\pi}}\int_{-2\sigma}^{2\sigma}e^{-\frac{t^2}{2\sigma^2}}\,dt\underset{t=s\sigma}{=}\frac{1}{\sqrt{2\pi}}\int_{-2}^{2}e^{-\frac{s^2}{2}}\,dt=0.9544\ldots
\]
Namely, for any random variable $Y$ following a normal distribution $N(0,\sigma)$, we have $|Y|\leq2\sigma$ with probability greater than $0.95$. Note that the reasoning above is not a formal proof but a heuristic argument. First, $Y_n$ is not exactly normal but, for a large $n$, our approximation will still serve its purpose. Second, there exists no way to guarantee that the integral of a generic function is correctly computed by Monte Carlo methods, see \cite[Sec. 5]{MonteCarlo}.

In order to get an estimate, we need to approximate $\sigma$. The unbiased estimator of $\sigma$ is
\begin{equation}\label{eq:variance}
\sigma_n=\left(\frac{1}{n-1}\sum_{j=1}^n(f(x_j)-E_n)^2\right)^{1/2}.
\end{equation}
We thus have that, with probability greater than $0.95$,
\[
|E_X(f)-E_n|\lesssim\frac{2\sigma_n}{\sqrt{n}}.
\]
If we stop the iteration when $\frac{\sigma_n}{\sqrt{n}E_n}\leq\varepsilon$, then, with a probability of $0.95$ on the set of random sequences of $n$ terms, the relative error satisfies
\[
\frac{\left|E_X(f)-E_n\right|}{|E_n|}\lesssim 2\varepsilon.
\]
For example, if we stop the iteration when $\frac{\sigma_n}{\sqrt{n}E_n}\leq0.05$, then, we can expect to be making an error of about $10$ percent in our calculation of $E_X(f)$.
The whole procedure for a general system is illustrated in Algorithm \ref{alg:alg_numsol_general}, which is based on Theorem \ref{th:main2}. 

\begin{algorithm}[!t]
%\small
\DontPrintSemicolon
\SetAlgoVlined
\KwIn{Relative error, $\varepsilon$; number of antennas, $\{M_l\}$ and $\{N_k\}$; streams, $d_k$; and users, $K$.}
\KwOut{Approximate number of IA solutions, $E_n$.}
$n=1$\;
\Repeat(){$\frac{\sigma_n}{\sqrt{n}E_n}<\varepsilon$}{
Generate a set of random matrices $\{A_{kl}\}$, $\{B_{kl}\}$ and $\{C_{kl}\}$ with i.i.d. $CN(0,2)$ entries.\;
Build channel matrices $\{H_{kl}\}$ according to \eqref{eq:form1}.\;
Normalize every channel matrix $H_{kl}$ such that $\|H_{kl}\|_F=1$.\;
Build the matrix $\Psi$ defining \eqref{eq:4bis}.\;
Compute $D_n=C\det(\Psi \Psi^*)$ where $C$ is taken from Theorem \ref{th:main2}.\;
Calculate $E_n$ and $\sigma_n$ according to \eqref{eq:mean} and \eqref{eq:variance}, respectively, where $f(x_j)$ is now $D_j$.\;
$n=n+1$. %\tcp*{Move to the next iteration}
}()
\caption{Computing the number of IA solutions for general scenarios $\prod_{k=1}^K \left(M_k\times N_k,d_k\right)$.}
\label{alg:alg_numsol_general}
\end{algorithm}

\section{Algorithmic aspects and special cases}
\label{sec:special_cases}
We have shown how Theorem \ref{th:main2} can be used to approximate the number of IA solutions of a given interference channel using Monte Carlo integration. Nevertheless, our numerical experiments demonstrate that the convergence of the integral is, in general, slow. In this section, with the aim of mitigating this problem, we provide specializations for two cases of interest: square symmetric and single-beam scenarios.

\subsection{The square symmetric case}
The so-called square symmetric case is that in which all the $d_k$ and all the $N_k$ and $M_k$ are equal for all $k$. Furthermore, we are restricted to $s=0$ (for the solution counting to be meaningful) and to $K\geq3$ (for IA to make sense); which implies $N=M\geq 2d$.
Under these assumptions, we can write another integral such that Monte Carlo integration has been experimentally observed to converge faster:

\begin{theorem}\label{th:main3}
Let us consider a symmetric {\it square} interference channel ($N_k=M_k=N$ and $d_k=d$, $\forall k$) with $s=0$. Assuming additionally that $K\geq3$, then for every $H_0\in\CH$ out of some zero--measure set, we have:
\[
\#\pi_1^{-1}(H_0)=\left(\frac{2^{d^2}Vol(\CU_{N-d})^2}{Vol(\CU_{N})Vol(\CU_{N-2d})}\right)^{K(K-1)}Vol(\CS)\dashint_{(A_{kl}^*,B_{kl}) \in \CU_{(N-d)\times d}^2}\det(\Psi \Psi^*)\;dH,
\]
where $\Psi$ is again defined by (\ref{eq:4bis}) and the input space of MIMO channels where we have to integrate are now
\[
H_{kl}=\begin{pmatrix}0_{d\times d}&A_{kl}\\B_{kl}&0_{(N-d)\times (N-d)}\end{pmatrix},
\]
whose blocks, $A_{kl}^*$ and $B_{kl}$, are matrices in the complex Stiefel manifold, denoted as $\CU_{(N-d)\times d}$, and formed by all the (ordered) collections of $d$ orthonormal vectors in  $\C^{(N-d)}$. On the other hand, $\CU_{a}$ denotes the unitary group of dimension $a$, whose volume can be found in Appendix \ref{appendix:preliminaries}.
\end{theorem}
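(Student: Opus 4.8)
The cleanest route is to re-run the integral-geometry computation behind Theorems \ref{th:main1}--\ref{th:main2}, but in the square case coordinatize the reference channels by the Stiefel-adapted variables of the statement rather than by the Frobenius sphere. Recall from the orbit description in the Preliminaries that $\CV$ is the unitary orbit of the reference family $\CHI$, so that $\sharp\pi_1^{-1}(H_0)$ equals $Vol(\CS)$ times an average over $\CHI$ of the normal Jacobian of $\pi_1$, which coincides with $\det(\Psi\Psi^*)$ up to normalization. The whole task is therefore to evaluate this average once $\CHI$ is parametrized by the blocks $C_{kl}$, the positive/triangular parts, and the Stiefel parts of $A_{kl}$ and $B_{kl}$.

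I would carry this out using four properties of the integrand, all visible from \eqref{eq:4bis}--\eqref{eq:Psi_A_Psi_B}. First, $\det(\Psi\Psi^*)$ depends on $H$ only through the $A_{kl},B_{kl}$, so the $C_{kl}$-directions integrate out to a volume factor. Second, scaling a single link's pair $(A_{kl},B_{kl})\mapsto(tA_{kl},tB_{kl})$ rescales the $d^2$ rows of $\Psi$ indexed by that link, whence $\det(\Psi\Psi^*)$ is homogeneous of degree $2d^2$ in each pair; this handles the radial directions and produces the same $\Gamma$-quotients as in the passage from Theorem \ref{th:main1} to Theorem \ref{th:main2}. Third, the basis changes $A_{kl}\mapsto A_{kl}M_l$ on the $V_l$-factor and $B_{kl}\mapsto N_kB_{kl}$ on the $U_k$-factor of the domain of $\Psi$ multiply $\det(\Psi\Psi^*)$ by $\prod_l|\det M_l|^{2d}\prod_k|\det N_k|^{2d}$, so $\det(\Psi\Psi^*)$ is a relative invariant that controls the non-angular directions. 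Fourth, the shared block-diagonal unitary action $A_{kl}\mapsto U_kA_{kl}W_l$, $B_{kl}\mapsto W_k'B_{kl}U_l'$ (with $U_k,U_l'\in\CU_d$ and $W_l,W_k'\in\CU_{N-d}$) leaves $\det(\Psi\Psi^*)$ invariant, as inherited from the unitary invariance already used in Theorems \ref{th:main1}--\ref{th:main2}, and governs the angular directions. Threading the complex polar decompositions $A_{kl}=P_{kl}\Omega_{kl}$ and $B_{kl}=\Xi_{kl}Q_{kl}$ through these identities reduces the average to an integral over $(A_{kl}^*,B_{kl})\in\CU_{(N-d)\times d}^2$.

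The constant is then pure bookkeeping: the $\Gamma$-factors from the radial and positive-part integrals, the volumes $Vol(\CU_{(N-d)\times d})=Vol(\CU_{N-d})/Vol(\CU_{N-2d})$ of the two Stiefel manifolds, the factor $2^{d^2}$ coming from the complex polar-decomposition Jacobian, and the global $Vol(\CS)$ assemble into the per-link constant $2^{d^2}Vol(\CU_{N-d})^2/(Vol(\CU_N)Vol(\CU_{N-2d}))$, and raising it to the power $\sharp(\Phi)$ yields $C_3$. The standing hypotheses $N=M\geq 2d$ and $K\geq 3$ (so that $N-2d\geq 0$) are exactly what make $\CU_{N-2d}$ and the Stiefel manifolds $\CU_{(N-d)\times d}$ well defined.

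The genuine obstacle is that the normalization to the \emph{per-block} Stiefel form is not a pure symmetry reduction. Indeed $\det(\Psi\Psi^*)$ is only jointly (not separately) homogeneous in each $(A_{kl},B_{kl})$; the relative-invariance actions are shared across links ($M_l$ over all $k$, $N_k$ over all $l$), so they normalize stacked blocks rather than individual ones; and the block-diagonal unitaries do not alter singular values. Consequently the dependence of the coupling determinant on the singular values of the individual $A_{kl},B_{kl}$ cannot be removed by invariance and must be shown, on the relevant domain, to integrate to exactly the stated volume quotient. Establishing this measure identity in the square symmetric regime, through matrix-variate integrals relating the complex Gaussian, spherical and Stiefel measures via Gamma functions and unitary-group volumes, is the crux of the proof; the single-beam case $d=1$, where $\CU_{(N-d)\times d}$ is a torus and the computation is elementary, already fixes the constant and serves as the model for the general argument.
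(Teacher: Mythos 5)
Your proposal correctly identifies where the difficulty lies, but it does not resolve it, and the route you sketch is not the one that closes the gap. The four invariance/homogeneity properties you list are true but, as you yourself concede in the last paragraph, they only normalize stacked blocks and angular directions: they cannot eliminate the dependence of $\det(\Psi\Psi^*)$ on the individual singular values of each $A_{kl},B_{kl}$, because the determinant couples all links through the shared variables $\dot U_k,\dot V_l$ and is only jointly (not separately, not per-link-factorizably) homogeneous. So the ``measure identity'' you defer to --- that the spherical average of Theorem \ref{th:main2} transforms into the Stiefel average with exactly the constant $2^{d^2}Vol(\CU_{N-d})^2/(Vol(\CU_N)Vol(\CU_{N-2d}))$ per link --- is precisely the content of the theorem, and no argument for it is supplied. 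Checking the constant at $d=1$ only works if you already know the two averages are proportional, which is what needs proof.

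The paper closes this gap by a different mechanism: it does not compare the two averages directly, but returns to Theorem \ref{th:main1}, which holds for \emph{every} unitarily invariant open set $\CH_\epsilon$, and chooses $\CH_\epsilon$ to be the $\epsilon$-tube around the product of unitary groups. As $\epsilon\to0$ the integral over $\CHI\cap\CH_\epsilon$ localizes onto the product of the sets $\CU_N\cap\mathcal{T}$, where $\mathcal{T}$ is the linear space of matrices with the $d\times d$ zero block; this intersection is parametrized by the map $\xi:\CU_{N-d}^2\to\CU_N\cap\mathcal{T}$, which is what produces the Stiefel measure on $(A_{kl}^*,B_{kl})$ and, via the coarea formula, the volume $Vol(\CU_{N-d})^2/Vol(\CU_{N-2d})$. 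Crucially, the factor $2^{d^2}$ is \emph{not} a polar-decomposition Jacobian: it arises from the second-order comparison between $d(H,\CU_N)$ and $d(H,\CU_N\cap\mathcal{T})$ for $H\in\mathcal{T}$ (Propositions \ref{prop:TtH} and \ref{prop:applyfubini}), where the block $C_1$ enters the first distance with weight $1/2$ and the second with weight $1$, so the two tube volumes differ by the Jacobian $\sqrt{2}^{\,2d^2}=2^{d^2}$ of the map $C_1\mapsto\sqrt{2}\,C_1$ (Lemma \ref{lem:fixABvolC}). Without this localization-and-tube-comparison step, or some genuine substitute for it, your outline does not constitute a proof.
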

\begin{IEEEproof}
See Appendix \ref{sec:proof3}.
\end{IEEEproof}

\begin{remark}
The value of the constant preceding the integral in Theorem \ref{th:main3} is (using that $2N-dK-d=0$ when $s=0$):
\[
C=\left(\frac{2^{d^2}Vol(\CU_{N-d})^2}{Vol(\CU_{N})Vol(\CU_{N-2d})}\right)^{K(K-1)}Vol(\CS)=
\]
\[
\left(\frac{\Gamma(N-d+1)\cdots\Gamma(N)}{\Gamma(N-2d+1)\cdots\Gamma(N-d)}\right)^{K(K-1)}\left(\frac{\Gamma(2)\cdots\Gamma(d)}{\Gamma(N-d+1)\cdots\Gamma(N)}\right)^{2K}
\]
\end{remark}

\begin{example}\label{ex:example_th3}
In this example we will use Theorem \ref{th:main3} to calculate the number of solutions for the scenario $(2\times 2,1)^3$ again. First, we calculate the value of the constant $C$ which happens to be equal to 1 and, consequently, the number of solutions is directly given by the average of the determinant. \footnote{Indeed, $C=1$ for all systems whenever $N=2d$ or, equivalently, $K=3$.}

Subsequent calculations are similar to those in the example in Section \ref{ex:example_th2}. The main difference is that, in this case, $A_{kl}$ and $B_{kl}$ are restricted to be elements of the complex Stiefel manifold, in this case, the unit-circle. Then, 
\[
\#(\pi^{-1}(H_0))=2E[|A_{12}|^2]^6=2.
\]
\end{example}
From Example \ref{ex:example_th3} it is clear that Theorem \ref{th:main3} has remarkably simplified the calculation of the integral by reducing the dimensionality of the integration domain. However, for larger scenarios we may still need to resort to the Monte Carlo integration procedure in Section \ref{sec:montecarlo} to approximate the integral in Theorem \ref{th:main3}. Algorithm \ref{alg:alg_numsol_square} summarizes the proposed method.

\begin{algorithm}[!t]
%\small
\DontPrintSemicolon
\SetAlgoVlined
\KwIn{Relative error, $\varepsilon$; number of antennas, $N$; streams, $d$; and users, $K$.}
\KwOut{Approximate number of IA solutions, $E_n$.}
$n=1$\;
\Repeat(){$\frac{\sigma_n}{\sqrt{n}E_n}<\varepsilon$}{
Generate a set of $(N-d)\times d$ matrices $\{A_{kl}^*\}$ and $\{B_{kl}\}$, independently and uniformly distributed in the Stiefel manifold.\;
Build the matrix $\Psi$ defining \eqref{eq:4bis}.\;
Compute $D_n=C\det(\Psi \Psi^*)$ where $C$ is taken from Theorem \ref{th:main3}.\;
Calculate $E_n$ and $\sigma_n$ according to \eqref{eq:mean} and \eqref{eq:variance}, respectively, where $f(x_j)$ is now $D_j$.\;
$n=n+1$. %\tcp*{Move to the next iteration}
}()
\caption{Computing the number of IA solutions for symmetric square scenarios $(N\times N,d)^K$.}
\label{alg:alg_numsol_square}
\end{algorithm}
%%%%%%%%%%%%%%%%%%%%%%%%%%%%%%%%%%%%%%%%%%%%%%%%%%%%%%%%%%%%%%%%%%%%%

\subsection{The single-beam case}
\label{sec:single_beam}
The results of Theorems \ref{th:main1}, \ref{th:main2} and \ref{th:main3} are general and can be applied to systems where each user wishes to transmit an arbitrary number of streams. This subsection is devoted to specialize Theorem \ref{th:main2} to the particular case of single-beam MIMO networks (i.e. $d_k=1$, $\forall\,k$). First, we should mention that, from a theoretical point of view, the single-beam case was solved in \cite{Yetis10}, where it was shown that the number of IA solutions for single-beam feasible systems matches the mixed volume of the Newton polytopes that support each equation of the system\footnote{This is not true for multibeam cases because, in this case, the genericity of the system of equations is lost.}. However, from a practical point of view, the computation of the mixed volume of a set of bilinear equations using the available software tools \cite{Lee07mixedvolume} can be very demanding. As a consequence, the exact number of IA solutions is only known for some particular cases \cite{Yetis10,Schmidt2010}.

\begin{theorem}\label{th:numsol_singlebeam}
The number of IA solutions for an arbitrary single beam scenario with $s=0$ is given by 
\begin{equation}\label{eq:numsols_singlebeam_closedform}
\#(\pi_1^{-1}(H_0)) = \frac{\per(T)}{\prod_k(N_k-1)!\prod_l(M_l-1)!}
\end{equation}
where $T$ is the matrix built by replacing the non-zero elements of $\Psi$ by ones and $\per(T)$ denotes its permanent. 

Equivalently, 
\begin{equation}\label{eq:numsols_singlebeam_closedform_as_numtables}
\#(\pi_1^{-1}(H_0)) = \#\mathcal{A}^*(R,C)
\end{equation}
where $R=(K-N_1,\ldots,K-N_K)$, $C=(M_1-1,\ldots,M_K-1)$ and $\#\mathcal{A}^*(R,C)$ denotes the number of elements in $\mathcal{A}^*(R,C)$ which is defined as the class of zero-trace $K\times K$ binary matrices with row sums $R$ and column sums $C$.  
\end{theorem}
\begin{IEEEproof}
See Appendix \ref{sec:proof4}.
\end{IEEEproof}
In spite of its apparent simplicity, evaluating \eqref{eq:numsols_singlebeam_closedform} may be very hard. In fact, computing the permanent is, in general, proven to be \#P-complete \cite{Valiant1979} even for (0,1)-matrices where \#P is defined as the class of functions that count the
number of solutions in an NP problem. 

On the other hand, \eqref{eq:numsols_singlebeam_closedform_as_numtables} establishes an equivalence between the problem of computing the number of solutions of single-beam scenarios and the problem of counting the number of zero-trace binary matrices with prescribed rows and column sums. From a practical point of view, the result in \eqref{eq:numsols_singlebeam_closedform_as_numtables} suggests that the IA problem can be interpreted as transmitters and receivers collaborating to cancel every single interfering link. A transmitter zero-forcing a link is encoded as a one in $S$ whereas a receiver zero-forcing a link is encoded   
as a zero. The total number of possible collaboration strategies gives the number of IA solutions.

Unfortunately, calculating $\#\mathcal{A}^*(R,C)$ is a non-trivial particular case of a problem which is also known to be \#P-complete \cite[Theorem 9.1]{Diaconis1995}. For the interested reader, we have computed several exact values which are compiled in Table \ref{tab:num_solutions_1beam_exact_approx}. Our algorithm performs a recursive tree search, commonly known as \textit{backtracking} \cite{Golomb_Baumert_1965} and is summarized in Algorithm \ref{alg:backtracking}.

\begin{algorithm}[!t]
%\small
\DontPrintSemicolon
\SetAlgoVlined
\SetKwProg{Fn}{function}{}{} %Defines
\SetKwFunction{Candidate}{get\_candidates}
\SetKwFunction{Backtrack}{backtrack}
\KwIn{Number of antennas, $\{M_k\}$ and $\{N_k\}$; and users, $K$. }
\KwOut{Number of solutions, $S$.}
$S=0$ \tcp*{No solutions found yet}
$table={\bf 0}$ \tcp*{Empty K$\times$K table to fill with 1s}
$row=0$, $col=0$ \tcp*{Row and column indexes}
$S=$ \Backtrack{$table$, $row$, $col$, $S$}\;
\hrule
\BlankLine
\Fn{$S=$ \Backtrack{$table$, $row$, $col$, $S$}}{
\eIf{table is a valid solution}{
$S=S+1$ \tcp*{Valid solution found}
}{
 \ForEach{$(row,col)$ in \Candidate{$table$,$row$,$col$}}{
$table(row,col)=1$ \tcp*{Fill the cell with a 1}
\Backtrack{$table$, $row$, $col$, $S$} \tcp*{Recursive call}
$table(row,col)=0$ \tcp*{Remove the 1}
}
}
\KwRet{S}\;
}
\hrule
\BlankLine
\Fn{$((crow_1,ccol_1),\ldots,(crow_N,ccol_N))=$ \Candidate{$table$,$row$,$col$}}{
\KwRet{list of candidate cells to store the next 1}\;
}
\caption{Backtracking procedure for counting the number of IA solutions for arbitrary single-beam scenarios $\prod_{k=1}^K \left(M_k\times N_k,1\right)$.}
\label{alg:backtracking}
\end{algorithm}

\subsubsection{Connections with graph theory problems}
\label{sec:connection_graph_theory}
For the particular case of symmetric $(M\times N, 1)^K$ scenarios, the IA solution counting problem can be restated as several well-studied combinatorial and graph theory problems. Most of these problems have been of historical interest and hence a lot of research has been done on them. Specifically, when the matrices in $\mathcal{A}^*(R,C)$ are seen as the adjacency matrix of a graph some connections to graph theory problems arise. It is natural, then, to find out that the number of solutions for some scenarios have already been computed in the literature:

\begin{itemize}
\item The number of solutions for $(2\times (K-1),1)^K$ scenarios is given by the number of derangements (permutations of $K$ elements with no fixed points), also known as \textit{rencontres numbers} or subfactorial. It is also the number of simple loop-free labeled $1$-regular digraphs with $K$ nodes. Interestingly, as demonstrated in \cite{OEIS}\cite[p.195]{Graham1994}, a closed-form solution is available:
$$\operatorname{round}\left(\cfrac{K!}{e}\right)\, .$$

\item The number of solutions for $(3\times (K-2),1)^K$ systems matches the number of simple loop-free labeled $2$-regular digraphs with $K$ nodes. In this case, a closed-form expression is also available \cite{OEIS}:

$$\sum_{k=0}^{K} \sum_{s=0}^{k} \sum_{j=0}^{K-k}\frac{(-1)^{k+j-s}K!(K-k)!(2K-k-2j-s)!}{s!(k-s)!((K-k-j)!)^{2}j!2^{2K-2k-j}}\, .$$

\item  In general, the number of solutions for the $(M\times (K-M+1),1)^K$ scenario matches the number of simple loop-free labeled $(M-1)$-regular digraphs with $K$ nodes. However, as far as we are aware, additional closed-form expressions do not exist.
\end{itemize}

\subsubsection{Bounds and asymptotic rate of growth}
\label{sec:bounds_growth_rate}
In order to derive appropriate bounds for the number of solutions it is convenient to go back to \eqref{eq:numsols_singlebeam_closedform} and apply some classical combinatorial results to bound the value of $\operatorname{per}(T)$. Herein, we will focus on symmetric systems: $(M\times N, 1)^K$. B\'{e}rgman's Theorem \cite[Theorem 7.4.5]{Brualdi1991} gives an upper bound for the permanent of an arbitrary matrix as a function of its row sums, $r_i$. In our case, every row (and column) sum is $K-1$ and the bound simplifies quite notably:
\begin{equation}\label{eq:bregman_theorem}
\operatorname{per}(T)\leq \prod_i^{K(K-1)}(r_i!)^{1/r_i}=((K-1)!)^K.
\end{equation} 
Additionally, we can use the fact that $T'=T/(K-1)$ is doubly stochastic to apply van der Waerden's conjecture (now proven) \cite{Egorycev1980,Falikman1981}, i.e. $\operatorname{per}(T')\geq n!/n^n$, where $n$ denotes the size of the matrix:
\begin{equation}\label{eq:vanderwaerden_theorem}
\operatorname{per}(T) = (K-1)^{K(K-1)}\operatorname{per}(T')\geq \frac{(K(K-1))!}{K^{K(K-1)}}.
\end{equation}
From the previous bounds and \eqref{eq:numsols_singlebeam_closedform}, the number of solutions is shown to be bounded above and below as follows:
\begin{equation}\label{eq:numsols_singlebeam_sandwich}
L=\frac{(K(K-1))!}{((M-1)!)^K ((N-1)!)^K K^{K(K-1)}}\leq \#(\pi_1^{-1}(H_0))\leq {{K-1} \choose {M-1}}^K=U.
\end{equation} 
Now, we study the growth rate of the number of solutions when the number of users increases. As a first step, we approximate every factorial in both bounds applying Stirling's formula, i.e. $\log(n!)\approx n \log n$ for large $n$. Interestingly, this approximation demonstrates that both upper and lower bounds are asymptotically equivalent
\begin{equation}\label{eq:rate_of_growth}
\log L\approx \log U \approx K(K-1) \log \frac{K-1}{K-M} + K(M-1) \log \frac{K-M}{M-1},  
\end{equation}
and the actual number of solutions, which is bounded above and below by these bounds, will be asymptotically equivalent as well. In order to calculate the rate of growth, we distinguish two different scenarios of interest. First, an scenario where we fix the number of antennas at one side of each link, for example $M$, and let the number of users, $K$, grow to infinity. %The number of antennas at the other side of the link, $N=K-M$, will obviously grow as $K$.
Under this assumption, it is clear that the growth rate of \eqref{eq:rate_of_growth} will be dominated by the second addend, $K(M-1) \log \frac{K-M}{M-1}$, and, thus 
\begin{equation}
\log(\#(\pi_1^{-1}(H_0))) \in \Theta(K\log K),
\end{equation}
where $\Theta(K\log K)$ represents the class of functions that are asymptotically bounded both above and below by $K\log K$. Equivalently, $c_1 K\log K \leq \log(\#(\pi_1^{-1}(H_0))) \leq c_2 K\log K$ for some positive $c_1$ and $c_2$. Note that $\Theta(K\log K)$ denotes a polynomial rate of growth which is faster than linear, $\Theta(K)$, but slower than quadratic, $\Theta(K^2)$. Consequently, it can be said that the logarithm of the number of solutions grows as $K^{1+c}$ where $c\in (0,1)$, i.e. the number of solutions grows exponentially with $K^{1+c}$.

Now, we consider a second scenario where the ratio $\gamma=M/N$ is fixed. Given that $M+N=K+1$, we have that both $M$ and $N$ will grow as fast as $K$, i.e. $N=\frac{K+1}{\gamma+1}$ and $M=\frac{\gamma}{\gamma+1}(K+1)$. Taking this into account, it is trivial to see that both terms on the right hand side of \eqref{eq:rate_of_growth} grow as $K^2$ and, consequently 
\begin{equation}
\log(\#(\pi_1^{-1}(H_0))) \in \Theta(K^2).
\end{equation}
In summary, the logarithm of the number of solutions is quadratic in $K$ or, in other words, the number of solutions grows exponentially with $K^2$. Note that this rate is asymptotically equivalent to that obtained from B\'{e}zout's Theorem
which bounds the number of solutions by $2^{K(K-1)}$. Despite being asymptotically equivalent, the upper bound proposed herein is remarkably tighter.

%%%%%%%%%%%%%%%%%%%%%%%%%%%%%%%%%%%%%%%%%%%%%%%%%%%%%%%
\section{Numerical experiments}
\label{sec:numerical_experiments}
In this section we present some results obtained by means of the integral formulae in Theorem \ref{th:main2} (for arbitrary interference channels) and Theorem \ref{th:main3} (for square symmetric interference channels). We first evaluate the accuracy provided by the approximation of the integrals by Monte Carlo methods. To this end, we focus initially on single-beam systems, for which the procedure described in Section \ref{sec:single_beam} allows us to efficiently obtain the exact number of IA solutions for a given scenario. The true number of solutions can thus be used as a benchmark to assess the accuracy of the approximation.

Table \ref{tab:num_solutions_1beam_exact_approx} compares the number of solutions given by both the exact and the approximate procedures. To simplify the analysis, we have considered $(M \times (K-M+1),1)^K$ symmetric single-beam networks for increasing values of $M$ and $K$. As shown in Section \ref{sec:connection_graph_theory}, counting IA solutions for this scenario is equivalent to the well-studied graph theory problem of counting siple loop-free labeled $(M-1)$-regular digraphs with $K$ nodes. Thus, additional terms and further information can be retrieved from integer sequences databases such as \cite{OEIS} from its corresponding A-number given in the last row of Table \ref{tab:num_solutions_1beam_exact_approx}. Percentages represent the estimated relative error, $2\varepsilon\cdot 100$, for each scenario (see Section \ref{sec:montecarlo}).
\begin{DIFnomarkup}
\begin{table*}[h!]\centering
\renewcommand\arraystretch{1}
\begin{tabular}{@{}cccccc@{}}\toprule
& $M=2$ & \phantom{a}& $M=3$ & \phantom{a} & $M=4$\\
\cmidrule{2-2} \cmidrule{4-4} \cmidrule{6-6}
& $(2\times (K-1),1)^K$ &&  $(3\times (K-2),1)^K$ &&  $(4\times (K-3),1)^K$ \\
& Exact / Approx. &&  Exact / Approx. &&  Exact / Approx. \\ \midrule
  $K=2$ & 1 / 1 $\pm$ 0.0 \% && -- && -- \\
   $K=3$ & 2 / 2 $\pm$ 1.0 \% && 1 / 1 $\pm$ 0.5 \% && -- \\
   $K=4$ & 9 / 9 $\pm$ 1.6 \% && 9 / 9 $\pm$ 1.6 \% && 1 / 1 $\pm$ 0.6 \% \\
   $K=5$ & 44 / 44 $\pm$ 2.6 \% && 216 / 216 $\pm$ 1.5 \% && 44 / 44 $\pm$ 2.6 \% \\
   $K=6$ & 265 / 266 $\pm$ 3.3 \% && 7\,570 / 7\,291 $\pm$ 5.5 \% && 7\,570 / 7\,291 $\pm$ 5.5 \% \\
   $K=7$ & 1\,854 / 1\,868 $\pm$ 9.6 \% && 357\,435 / 361\,762 $\pm$ 8.7 \% && 1\,975\,560 / 1\,936\,679 $\pm$ 7.0 \% \\
   $K=8$ & 14\,833 / 13\,144 $\pm$ 20.6 \%&& 22\,040\,361 / 22\,419\,610 $\pm$ 11.3 \% && 749\,649\,145 / 739\,668\,504 $\pm$ 14.1 \% \\
   &  $\vdots$ && $\vdots$ &&  $\vdots$\\
   $K>8$ & \cite[Seq. A000166]{OEIS} && \cite[Seq. A007107]{OEIS} && \cite[Seq. A007105]{OEIS}\\
%Aproximate\\
\bottomrule
\end{tabular}
\caption{Comparison of exact and approximate number of IA solutions for several symmetric single-beam scenarios, $(M \times (K-M+1),1)^K$.}
\label{tab:num_solutions_1beam_exact_approx}
\end{table*}
\end{DIFnomarkup}
Figure \ref{fig:asymptotic_growth_rate_fixed_M} depicts the evolution of the exact number of solutions with a growing $K$, and the area between the proposed upper and lower bounds, for different values of $M$ (form top to bottom, $M=2,3,4$). It shows that all three are asymptotically equivalent, as proved in Section \ref{sec:bounds_growth_rate}. The exact number of solutions has been obtained from the A-sequences mentioned in Table \ref{tab:num_solutions_1beam_exact_approx}. For the case $M=4$, the solid line corresponds to the values which are available at the time of writing in \cite[Seq. A007105]{OEIS}, i.e. $K\leq 14$. Beyond that point, the dashed line extrapolates new values following the model $aK\log(K)+bK+c$. The coefficients $a,b$ and $c$ are those providing the best least squares fit of the available data for $K\leq 14$.

\begin{figure}[h!]
\centering
\includegraphics[width=12cm]{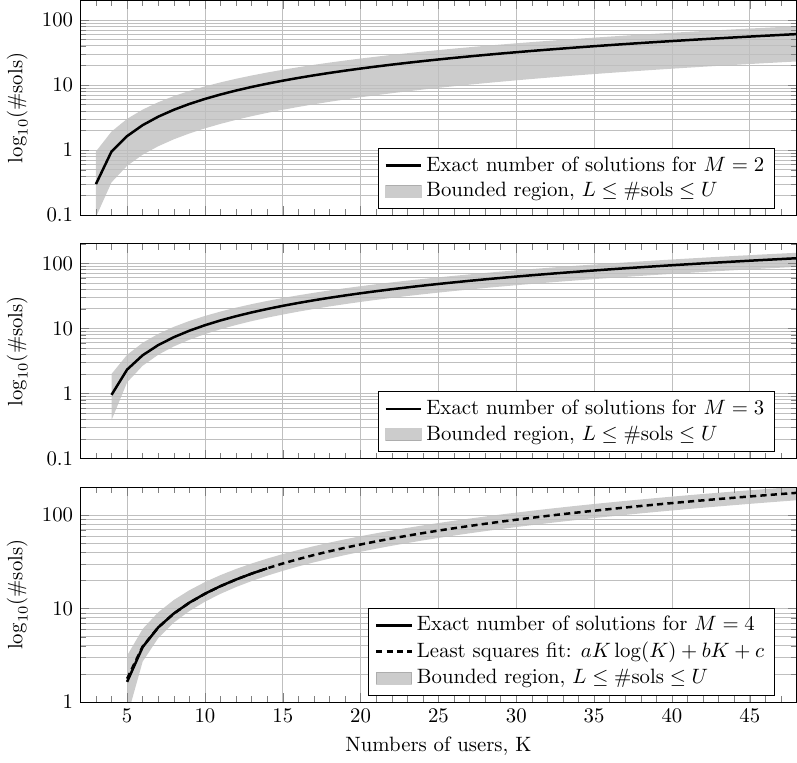}
\caption{Growth rate of the number of IA solutions in single beam systems, $(M \times (K-1),1)^K$, for $M=2,3,4$.}
\label{fig:asymptotic_growth_rate_fixed_M}
\end{figure}

Now we move to multi-beam scenarios, for which the exact number of solutions is only know for a few scenarios. Table \ref{tab:num_solutions_2beam_approx} shows the results obtained for some instances of the $(M \times (2K-M+2),2)^K$ network. These results have been obtained using the integral formula in Theorem \ref{th:main2}, except the square cases ($M=N$), for which we used the expression in Theorem \ref{th:main3}. For instance, we can mention that the system $(5 \times 5, 2)^4$ has, with a high confidence level, about 3700 different solutions (this result has been independently confirmed in \cite{Bresler2013}). As numerical results show, the integral formula in Theorem \ref{th:main3} can be approximated much faster than that of Theorem \ref{th:main2}, thus allowing us to get smaller relative errors. For the sake of completeness, Table \ref{tab:num_solutions_square} shows the approximate number of solutions for some additional square symmetric multi-beam scenarios. For some of them the exact number of solutions was already known, as indicated in the table. For others (those indicated as N/A in the table) the exact number of solutions was unknown.
\begin{DIFnomarkup}
\begin{table*}[h!]\centering
\renewcommand\arraystretch{1}
\begin{tabular}{@{}cccccccc@{}}\toprule
& $M=3$ & \phantom{a}& $M=4$ & \phantom{a} & $M=5$ & \phantom{a} & $M=6$\\
\cmidrule{2-2} \cmidrule{4-4} \cmidrule{6-6} \cmidrule{8-8}
& $(3\times (2K-1),2)^K$ &&  $(4\times (2K-2),2)^K$ &&  $(5\times (2K-3),2)^K$ &&  $(6\times (2K-4),2)^K$ \\ \midrule
   $K=2$ & 0 $\pm$ 0.0 \% && 1 $\pm$ 4.1 \% && -- && -- \\
   $K=3$ & 1 $\pm$ 4.2 \% && 6 $\pm$ 0.0 \% && 1 $\pm$ 4.8 \% && 1 $\pm$ 5.2 \% \\
   $K=4$ & 9 $\pm$ 5.8 \% && 973 $\pm$ 7.0 \% && 3\,700 $\pm$ 0.1 \% && 973 $\pm$ 7.0 \% \\
   $K=5$ & 223 $\pm$ 14.8 \% && 530\,725 $\pm$ 11.3 \% && 72\,581\,239 $\pm$ 17.8 \% && 387\,682\,648 $\pm$ 0.7 \% \\
%Aproximate\\
\bottomrule
\end{tabular}
\caption{Approximate number of IA solutions for several symmetric $2$-beam scenarios, $(M \times (2K-M+2),2)^K$.}
\label{tab:num_solutions_2beam_approx}
\end{table*}
\end{DIFnomarkup}

\begin{DIFnomarkup}
\begin{table*}[h!]\centering
\renewcommand\arraystretch{1}
\begin{tabular}{@{}cccccccccc@{}}\toprule
K & \phantom{a}& d & \phantom{a} & Scenario & \phantom{a} & Exact & Ref. &  \phantom{a} & Approximate \\ \midrule
3 && 1 && $(2\times 2,1)^3$ && 2 & \cite{Jafar08} && 2 $\pm$ 0.9 \% \\
3 && 2 && $(4\times 4,2)^3$ && 6 & \cite{Jafar08} && 6 $\pm$ 0.9 \% \\
3 && 3 && $(6\times 6,3)^3$ && 20 & \cite{Jafar08} && 20 $\pm$ 1.4 \% \\
4 && 2 && $(5\times 5,2)^4$ && N/A & N/A && 3\,700 $\pm$ 0.1 \% \\
4 && 4 && $(10\times 10,4)^4$ && N/A & N/A && 13\,887\,464\,893\,004 $\pm$ 6.8 \% \\
5 && 1 && $(3\times 3,1)^5$ && 216 & \cite{Schmidt2010} && 216 $\pm$ 0.6 \% \\
5 && 2 && $(6\times 6,2)^5$ && N/A & N/A && 387\,724\,347 $\pm$ 0.7 \% \\
\bottomrule
\end{tabular}
\caption{Approximate number of IA solutions for selected square symmetric scenarios, \smash{$(\frac{K+1}{2}d \times \frac{K+1}{2}d,d)^K$}.}
\label{tab:num_solutions_square}
\end{table*}
\end{DIFnomarkup}

Although these results have a mainly theoretical interest, they might also have some important practical implications. In the following, we illustrate this point with a numerical experiment. Let us assume that we have a moderate-size network for which the total number of solutions is relatively small. One such example could be the $(4 \times 6, 2)^4$ system which, according to the results in Table \ref{tab:num_solutions_2beam_approx}, has a number solutions in the interval $[904,1042]$ ($95\%$ confidence interval).
It is obvious that, since the exact number of solutions is unknown, a systematic way to compute all interference alignment solutions for a given channel realization does not exist. Still, one may try to compute them by repeatedly running some iterative algorithm such as the ones in \cite{Gomadam11}, \cite{Gonzalez2014a} or \cite{OscarIcassp} from different initialization points.\footnote{Note that the last one is restricted to single-beam scenarios and, consequently, cannot be applied to the scenario at hand.} This idea is illustrated in Figure \ref{fig:sumrate_comparison}, where the sum-rate performance associated to 973 different solutions is shown. The fact that we have been able to find 973 solutions only demonstrates that, at least, 973 solutions exist. The actual number may be even larger and presumably below 1042, but it seems hard to be determined by means of the algorithm in \cite{Gomadam11}.
In Figure \ref{fig:sumrate_comparison}, the maximum sum-rate solution is represented with a thicker solid line, while the average sum-rate of all solutions is represented with a dashed line. Interestingly, the relative performance improvement provided by the maximum sum-rate solution over the average is substantial, i.e. it is always above 10 \% for SNR values below 40 dB, and more than 20 \% for SNR=20 dB. We note that this improvement is comparable to the one provided by sum-rate optimization algorithms which take into account additional information in the optimization procedure such as direct channels and noise variance.

\begin{figure}[h!]
\centering
\includegraphics[width=12cm]{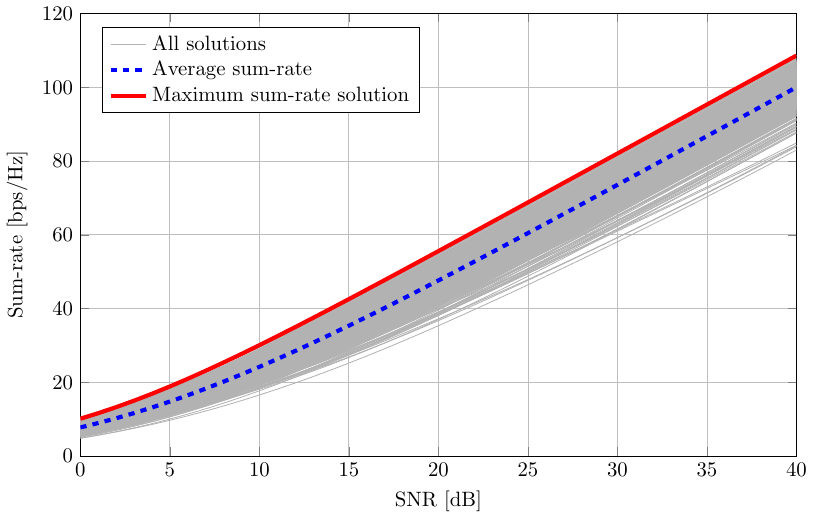}
\caption{Comparison of the sum rate achieved by 973 different solutions for the system $(4\times 6,2)^4$.}
\label{fig:sumrate_comparison}
\end{figure}

%%%%%%%%%%%%%%%%%%%%%%%%%%%%%%%%%%%%%%%%%%%%%%%

\section{Conclusion}
In this paper we have provided two integral formulae to compute the finite number of IA solutions in MIMO interference channels, including multi-beam ($d_k > 1$) systems. The first one can be applied to arbitrary $K$-user interference channels, whereas the second one solves the symmetric square case. Both integrals can be estimated by means of Monte Carlo methods. We have also specialized our results to single-beam networks, leading to a combinatorial counting procedure that allows to obtain the exact number of solutions and interesting connections with well-known graph counting problems.
%%%%%%%%%%%%%%% APPENDICES  %%%%%%%%%%%%%%%%%%%

\appendices

%\appendix
\section{Mathematical preliminaries}

\label{appendix:preliminaries}

To facilitate reading, in this section we recall the mathematical results used in this paper. Firstly, we provide a short review on mappings between Riemannian manifolds and the main mathematical result used to derive the number of IA solutions, which is the Coarea formula. Secondly, we review the volume of the Grassmanian manifolds and the volume of the unitary group, which are also used throughout the paper.

\subsection{Tubes in Riemannian manifolds and the Coarea formula}
A general result about tubes states that the volume of a tubular neighborhood about a compact embedded submanifold is essentially given by the intrinsic volume of the submanifold times the volume of a ball of the appropiate dimension. We write down a simplified version of \cite[Th. 9.23]{Gray04}:
\begin{theorem}\label{th:gray}
Let $X$ be a compact, embedded, (real) codimension $c$ submanifold of the Riemannian manifold $Y$. Then, for sufficiently small $\epsilon>0$,
\[
Vol(y\in Y:d(y,X)<\epsilon)=Vol(X)Vol(r\in\R^c:\|r\|\leq1)\epsilon^c+O(\epsilon^{c+1}).
\]
Here, $Vol(X)$ is the volume of $X$ w.r.t. its natural Riemannian structure inherited from that of $Y$.
\end{theorem}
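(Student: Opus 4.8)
The plan is to realize the tube $\{y\in Y:d(y,X)<\epsilon\}$ as the diffeomorphic image, under the normal exponential map, of the radius-$\epsilon$ disk bundle of the normal bundle of $X$, and then to compute its volume by a change of variables. Write $NX$ for the normal bundle of $X$ in $Y$, with fibers $N_xX\cong\R^c$, and let $\exp^\perp:NX\ra Y$, $\exp^\perp(x,v)=\exp_x(v)$, be the normal exponential map. Since $X$ is compact and embedded, the tubular neighborhood theorem provides an $\epsilon_0>0$ such that for all $\epsilon<\epsilon_0$ the restriction of $\exp^\perp$ to the open disk bundle $D_\epsilon=\{(x,v)\in NX:\|v\|<\epsilon\}$ is a diffeomorphism onto the tube; this $\epsilon_0$ can be chosen uniformly precisely because $X$ is compact (the normal injectivity radius is then bounded below).

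First I would express the volume of the tube, via the change-of-variables formula, as
\[
Vol(\{y\in Y:d(y,X)<\epsilon\})=\int_{x\in X}\int_{v\in N_xX,\ \|v\|<\epsilon}J(x,v)\,dv\,dx,
\]
where $J(x,v)$ is the Jacobian of $\exp^\perp$ at $(x,v)$ relative to the natural product measure on $NX$ and the Riemannian volume of $Y$. The next step is to control $J$ near the zero section. By Gauss's lemma the differential of $\exp_x$ at the origin of $N_xX$ is the identity, so $J(x,0)=1$; differentiating once more (using that the metric of $Y$ and the second fundamental form of $X$ are smooth and bounded on the compact set $X$) yields the expansion $J(x,v)=1+O(\|v\|)$, with the implied constant uniform in $x\in X$.

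Then I would carry out the inner integral. Using $J(x,v)=1+O(\|v\|)$,
\[
\int_{\|v\|<\epsilon}J(x,v)\,dv=\int_{\|v\|<\epsilon}dv+O\!\left(\int_{\|v\|<\epsilon}\|v\|\,dv\right)=Vol(r\in\R^c:\|r\|\leq1)\,\epsilon^c+O(\epsilon^{c+1}),
\]
the error being uniform over $x$. Integrating over the compact manifold $X$ (of finite volume $Vol(X)$) and pulling the uniform error out of the integral gives
\[
Vol(\{y\in Y:d(y,X)<\epsilon\})=Vol(X)\,Vol(r\in\R^c:\|r\|\leq1)\,\epsilon^c+O(\epsilon^{c+1}),
\]
which is the claimed formula.

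The main obstacle is verifying the two uniformity statements—that a single $\epsilon_0$ serves for the tubular neighborhood and that the Jacobian expansion $J(x,v)=1+O(\|v\|)$ holds with an $x$-independent constant. Both follow from the compactness of $X$, which bounds the relevant geometric quantities (the normal injectivity radius from below, and the curvature of $Y$ together with the second fundamental form of $X$ from above). In fact this is exactly the content of Gray's tube formula \cite[Th. 9.23]{Gray04}, which expresses the volume of the tube as a power series in $\epsilon$ whose coefficients are integrals over $X$ of universal curvature invariants; the $\epsilon^c$-coefficient is $Vol(X)\,Vol(r\in\R^c:\|r\|\leq1)$ and all remaining terms are of order $\epsilon^{c+1}$ or higher, so the stated estimate is immediate.
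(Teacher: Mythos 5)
Your proposal is correct: the paper offers no proof of this statement beyond citing Gray's tube formula \cite[Th.~9.23]{Gray04}, and your argument is the standard derivation of precisely that formula's leading term (normal exponential map, uniform tubular neighborhood via compactness, Jacobian expansion $J(x,v)=1+O(\|v\|)$), ending with the same citation the paper relies on. The only point worth a word more is the identification of the metric neighborhood $\{y:d(y,X)<\epsilon\}$ with the exponential image of the $\epsilon$-disk bundle, which follows because any minimizing geodesic from $y$ to the compact $X$ meets $X$ orthogonally; with that noted, your argument is complete and consistent with the paper.
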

 One of our main tools is the so--called Coarea Formula. The most general version we know may be found in \cite{Fe69}, but for our purposes a smooth version as used in \cite[p. 241]{BlCuShSm98} or \cite{Ho93} suffices. We first need a definition.
\begin{definition}\label{def:NJ}
Let $X$ and $Y$ be Riemannian manifolds, and let $\varphi:X\longrightarrow Y$ be a $C^1$ surjective map. Let $k=\dim(Y)$ be the real dimension of $Y$. For every point $x\in X$ such that the differential mapping $D\varphi(x)$ is surjective,  let $v_1^x,\ldots,v_k^x$ be an orthogonal basis of $Ker(D\varphi(x))^\perp$. Then, we define the Normal Jacobian of $\varphi$ at $x$, ${\rm NJ}\varphi(x)$, as the volume in the tangent space $T_{\varphi(x)}Y$ of the parallelepiped spanned by $D\varphi(x)(v_1^x),\ldots,D\varphi(x)(v_k^x)$. In the case that $D\varphi(x)$ is not surjective, we define ${\rm NJ}\varphi(x)=0$.
\end{definition}
\begin{theorem}[Coarea formula]\label{th:coarea}
Let $X,Y$ be two Riemannian manifolds of respective dimensions $k_1\geq k_2$. Let $\varphi:X\longrightarrow Y$ be a $C^\infty$ surjective map, such that the differential mapping $D\varphi(x)$ is surjective for almost all $x\in X$. Let $\psi:X\longrightarrow\R$ be an integrable mapping. Then, the following equality holds:
\begin{equation}\label{eq:coarea}
\int_{x\in X}\psi(x) {\rm NJ}\varphi(x)\;dX=\int_{y\in Y}\int_{x\in \varphi^{-1}(y)}\psi(x)~dx~dy.
\end{equation}
\end{theorem}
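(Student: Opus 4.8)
The plan is to reduce this global statement to an elementary pointwise computation in adapted coordinates, with the transition handled by a partition of unity and by discarding the critical set via Sard's theorem. Write $C\subseteq X$ for the closed set of points at which $D\varphi(x)$ fails to be surjective. By hypothesis $C$ has measure zero, and by definition ${\rm NJ}\varphi\equiv 0$ on $C$, so $C$ contributes nothing to the left-hand side of \eqref{eq:coarea}. By Sard's theorem $\varphi(C)$ has measure zero in $Y$; hence for almost every $y\in Y$ the fiber $\varphi^{-1}(y)$ is disjoint from $C$ and is therefore, by the regular value theorem, a smooth embedded submanifold of dimension $k_1-k_2$ carrying a well-defined induced Riemannian volume. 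Thus $C$ may be deleted from both sides, and it suffices to work on the open set $U=X\setminus C$ on which $\varphi$ is a submersion.

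First I would localize. Using a partition of unity subordinate to a cover of $U$ by coordinate charts, together with linearity of both sides of \eqref{eq:coarea} in $\psi$, it is enough to prove the identity when $\psi$ is supported in a single chart around a regular point. Around such a point the constant-rank (submersion) normal form provides coordinates $\xi=(\xi',\xi'')$ on $X$, with $\xi'=(\xi^1,\dots,\xi^{k_2})$ and $\xi''=(\xi^{k_2+1},\dots,\xi^{k_1})$, and coordinates $\eta$ on $Y$, in which $\varphi$ becomes the linear projection $\eta\circ\varphi=\xi'$. In these coordinates the fibers are the slices $\{\xi'=\text{const}\}$ parametrized by $\xi''$, so Fubini lets me rewrite the right-hand side as a single integral over the chart: if $G$ denotes the metric matrix of $X$, $H$ that of $Y$, and $G''$ the lower-right $(k_1-k_2)\times(k_1-k_2)$ block of $G$ (the induced metric on the fiber), then the right-hand side equals $\int \psi\,\sqrt{\det G''}\,\sqrt{\det H}\;d\xi$, while the left-hand side equals $\int \psi\,{\rm NJ}\varphi\,\sqrt{\det G}\;d\xi$.

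At this point the theorem collapses to the pointwise algebraic identity ${\rm NJ}\varphi(x)=\sqrt{\det H}\,\sqrt{\det G''}/\sqrt{\det G}$. I would prove this from the singular-value description of the Normal Jacobian, namely ${\rm NJ}\varphi(x)=\sqrt{\det(L L^*)}$ for $L=D\varphi(x)$ and $L^*$ its adjoint with respect to the inner products $G$ and $H$. In the straightening coordinates $L$ is the coordinate projection $P=[\,I_{k_2}\ 0\,]$, so $L^*=G^{-1}P^{T}H$ and $LL^*=(PG^{-1}P^{T})H$. Since $PG^{-1}P^{T}$ is exactly the top-left $k_2\times k_2$ block of $G^{-1}$, a Schur-complement identity gives $\det(PG^{-1}P^{T})=\det G''/\det G$, whence $\det(LL^*)=\det H\cdot\det G''/\det G$ and the identity follows.

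The main obstacle is not the localization or Sard's theorem but the clean bookkeeping of the three Riemannian densities — the ambient volume $\sqrt{\det G}$, the target volume $\sqrt{\det H}$, and the induced fiber volume $\sqrt{\det G''}$ — and verifying that the coordinate change produces exactly the factor ${\rm NJ}\varphi$ with no spurious constant; this rests entirely on the Schur-complement computation above. A secondary technical point I would address is the passage from smooth (or continuous) $\psi$, where Fubini applies directly, to a general integrable $\psi$, handled by monotone approximation together with the measurability of the map $y\mapsto\int_{\varphi^{-1}(y)}\psi\,dx$.
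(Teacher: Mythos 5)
Your proof is correct, but there is nothing in the paper to compare it against: the paper does not prove Theorem~\ref{th:coarea} at all, it imports it as a classical result, citing Federer for the general version and \cite[p.~241]{BlCuShSm98} and \cite{Ho93} for the smooth version it actually uses. What you have written is essentially the standard self-contained proof of the smooth coarea formula, and its key steps all check out: (i) deleting the critical set $C$ is legitimate, since ${\rm NJ}\varphi\equiv 0$ on $C$ kills its contribution to the left-hand side, while Sard's theorem (applicable because $\varphi$ is $C^\infty$) makes $\varphi(C)$ null in $Y$, and for $y\notin\varphi(C)$ the whole fiber $\varphi^{-1}(y)$ avoids $C$, so the right-hand side is also unaffected; (ii) the submersion normal form plus a countable partition of unity reduces the identity to one straightening chart, with monotone/dominated convergence handling general integrable $\psi$; (iii) the density bookkeeping is exactly right: with $L=D\varphi$ represented by $P=[\,I_{k_2}\ 0\,]$, the metric adjoint is $L^*=G^{-1}P^{T}H$, so $\det(LL^*)=\det\bigl(PG^{-1}P^{T}\bigr)\det H$, and the Schur-complement identity $\det\bigl(PG^{-1}P^{T}\bigr)=\det G''/\det G$ gives ${\rm NJ}\varphi=\sqrt{\det H}\,\sqrt{\det G''}/\sqrt{\det G}$, which is precisely the factor converting $\sqrt{\det G}\,d\xi$ into the product of the target and fiber volume elements. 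Two remarks for a write-up: first, you should verify explicitly that the paper's parallelepiped definition of ${\rm NJ}\varphi$ (via an orthonormal basis of $\ker(D\varphi)^\perp$; the paper says ``orthogonal'' but means orthonormal) coincides with $\sqrt{\det(LL^*)}$ — this follows because the Gram determinant of the images equals $\det\bigl(L^*L|_{\ker(L)^\perp}\bigr)=\det(LL^*)$, as the two operators share their nonzero spectrum; second, your argument proves the $C^\infty$ version only, not Federer's Lipschitz/rectifiable generality, but that is exactly the version the paper declares sufficient for its purposes, so nothing is lost.
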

Note that from the Preimage Theorem and Sard's Theorem (see \cite[Ch. 1]{GuilleminPollack1974}), the set $\varphi^{-1}(y)$ is a manifold of dimension equal to $\dim(X)-\dim(Y)$ for almost every $y\in Y$. Thus, the inner integral of (\ref{eq:coarea}) is well defined as an integral in a manifold. Moreover, if $\dim(X)=\dim(Y)$ then $\varphi^{-1}(y)$ is a finite set for almost every $y$, and then the inner integral is just a sum with $x\in\varphi^{-1}(y)$.

The following result, which follows from the Coarea formula, is \cite[p. 243, Th. 5]{BlCuShSm98}.
\begin{theorem}\label{th:doublefibration}
Let $X,Y$ and $\mathcal{V}\subseteq X\times Y$ be smooth Riemannian manifolds, with $\dim(\mathcal{V})=\dim(X)$ and $\mathcal{Y}$ compact. Assume that $\pi_2:\mathcal{V}\rightarrow Y$ is regular (i.e. $D\pi_2$ is everywhere surjective) and that $D\pi_1(x,y)$ is surjective for every $(x,y)\in \mathcal{V}$ out of some zero measure set. Then, for every open set $U\subseteq\mathcal{X}$ contained in some compact set $K\subseteq\mathcal{X}$,
\begin{equation}\label{eq:doublef}
\int_{x\in U}\#(\pi_1^{-1}(x))\,dx=\int_{y\in Y}\int_{x\in U:(x,y)\in \CV}DET(x,y)^{-1}\,dx\,dy,
\end{equation}
where $DET(x,y)=\det(DG_{x,y}(x)DG_{x,y}(x)^*)$ and $G_{x,y}$ is the (locally defined) implicit function of $\pi_1$ near $x=\pi_1(x,y)$. That is, close to $(x,y)$ the sets $\CV$ and $\{(x,G_{x,y}(x))\}$ coincide.
\end{theorem}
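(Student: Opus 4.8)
The plan is to apply the Coarea formula (Theorem \ref{th:coarea}) twice to the solution variety $\CV$, once through each of the two projections, and then to compare the two resulting expressions. Since $\dim(\CV)=\dim(X)$, the projection $\pi_1$ has generically finite (zero--dimensional) fibers, so the equality case of the Coarea formula applies to $\varphi=\pi_1:\CV\to X$. Taking the integrand $\psi=\mathbf{1}_U\circ\pi_1$ collapses the inner fiber integral into a plain sum over $\pi_1^{-1}(x)$, which yields
\[
\int_{x\in U}\sharp(\pi_1^{-1}(x))\,dx=\int_{(x,y)\in\CV,\,x\in U}{\rm NJ}\pi_1(x,y)\,d\CV.
\]
Here the hypothesis that $D\pi_1$ is surjective off a zero--measure set guarantees that ${\rm NJ}\pi_1>0$ almost everywhere and that $\pi_1^{-1}(x)$ is finite for almost every $x$.

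Next I would apply the Coarea formula to $\varphi=\pi_2:\CV\to Y$, which is a submersion by the regularity assumption (so $\dim\CV\geq\dim Y$ and ${\rm NJ}\pi_2>0$ everywhere), using the integrand $\psi=(\mathbf{1}_U\circ\pi_1)\,{\rm NJ}\pi_1/{\rm NJ}\pi_2$ chosen precisely so that $\psi\cdot{\rm NJ}\pi_2$ reproduces the integrand of the previous display. This rewrites the same quantity as an iterated integral over $y\in Y$ and over the fibers $\pi_2^{-1}(y)$. The crucial geometric observation is that the tangent space to the fiber $\pi_2^{-1}(y)$ at $(x,y)$ consists of vectors of the form $(\dot x,0)$ --- the $Y$--component vanishes because $y$ is held fixed along the fiber --- so $\pi_1$ restricts to an \emph{isometry} from $\pi_2^{-1}(y)$ onto $\{x\in X:(x,y)\in\CV\}$; this lets me replace the induced fiber measure by $dx$ on that submanifold.

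It then remains to evaluate the pointwise ratio ${\rm NJ}\pi_1(x,y)/{\rm NJ}\pi_2(x,y)$ and identify it with $DET(x,y)^{-1}$. Writing $A=DG_{x,y}(x)$, the tangent space $T_{(x,y)}\CV$ is the graph of $A$, and a singular value decomposition of $A$, with nonzero singular values $\sigma_1,\dots,\sigma_m$, simultaneously diagonalizes all the relevant Gram matrices. A direct computation in the complex setting then gives ${\rm NJ}\pi_1=\prod_i(1+\sigma_i^2)^{-1}$ and ${\rm NJ}\pi_2=\prod_i\sigma_i^2/(1+\sigma_i^2)$, so that their quotient is $\prod_i\sigma_i^{-2}=\det(AA^*)^{-1}=DET(x,y)^{-1}$. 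Substituting this, together with the isometry of the previous step, into the doubly--transformed integral produces the claimed identity.

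I expect the main obstacle to be this last, pointwise linear--algebra step, and in particular the bookkeeping forced by the complex structure: because $A$ is complex linear, each of its complex singular values contributes \emph{twice} to the real volume scalings ${\rm NJ}\pi_1$ and ${\rm NJ}\pi_2$, and it is exactly this doubling that produces the power $DET^{-1}$ rather than $DET^{-1/2}$. A secondary technical point is integrability: compactness of $Y$ and of the compact set $K\supseteq U$, combined with the everywhere--regularity of $\pi_2$ (which bounds ${\rm NJ}\pi_2$ away from zero on compacta) and the uniform bound ${\rm NJ}\pi_1\leq1$, ensures that all integrands are bounded on the relevant compact sets and hence that both applications of the Coarea formula are legitimate.
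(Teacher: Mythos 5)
Your proposal is correct. Note that the paper does not actually prove this statement: it is quoted verbatim from Blum--Cucker--Shub--Smale \cite[p.~243, Th.~5]{BlCuShSm98}, and the argument given there is precisely the double application of the coarea formula that you carry out --- once through $\pi_1$ to produce $\int_U\sharp(\pi_1^{-1}(x))\,dx$, once through $\pi_2$ with the compensating integrand ${\rm NJ}\pi_1/{\rm NJ}\pi_2$, followed by the identification of the fiber of $\pi_2$ with $\{x:(x,y)\in\CV\}$ via the isometry $(\dot x,0)\mapsto\dot x$. Your pointwise computation is also right: writing $T_{(x,y)}\CV$ as the graph of $A=DG_{x,y}(x)$ and diagonalizing by singular values gives ${\rm NJ}\pi_1/{\rm NJ}\pi_2=\prod_i\sigma_i^{-2}=\det(AA^*)^{-1}$ in the complex setting, and you correctly isolate the only delicate point, namely that the exponent $-1$ rather than $-1/2$ comes from each complex singular value counting twice in the real volume scalings --- which is exactly the convention under which the paper (and BCSS) read $DET(x,y)=\det(DG_{x,y}(x)DG_{x,y}(x)^*)$ as a complex determinant.
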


\begin{corollary}\label{cor:doublefibration}
In addition to the hypotheses of Theorem \ref{th:doublefibration}, assume that there exists $y_0\in Y$ such that for every $y\in Y$ there exists an isometry $\varphi_y:Y\rightarrow Y$ with $\varphi_y(y)=y_0$ and an associated isometry $\chi_y:X\rightarrow X$ such that $\chi_y(U)=U$ and $(\chi_y\times\varphi_y)(\CV)=\CV$. Then,
\[
\int_{x\in U}\#(\pi_1^{-1}(x))\,dx=Vol(Y)\int_{x\in U:(x,y_0)\in \CV}DET(x,y_0)^{-1}\,dx.
\]
\end{corollary}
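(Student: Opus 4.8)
The plan is to derive Corollary \ref{cor:doublefibration} directly from Theorem \ref{th:doublefibration} by exploiting the symmetry hypothesis to show that the inner integral appearing on the right-hand side of (\ref{eq:doublef}) is in fact independent of $y$. Once this is established, the outer integral over $Y$ collapses into a multiplication by $Vol(Y)$, which is exactly the asserted formula.

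The first step is to fix an arbitrary $y\in Y$ and compare the inner integral
\[
I(y)=\int_{x\in U:(x,y)\in \CV}DET(x,y)^{-1}\,dx
\]
with $I(y_0)$. The hypothesis provides isometries $\varphi_y:Y\ra Y$ and $\chi_y:X\ra X$ with $\varphi_y(y)=y_0$, satisfying $\chi_y(U)=U$ and $(\chi_y\times\varphi_y)(\CV)=\CV$. I would change variables in the integral $I(y_0)$ via the isometry $\chi_y$, sending $x\mapsto\chi_y(x)$. Because $\chi_y$ is an isometry, its Jacobian is identically $1$ and the measure $dx$ is preserved; moreover, since $\chi_y(U)=U$, the domain of integration is unchanged. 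The key point is that the compatibility condition $(\chi_y\times\varphi_y)(\CV)=\CV$ ensures $(x,y)\in\CV$ if and only if $(\chi_y(x),\varphi_y(y))=(\chi_y(x),y_0)\in\CV$, so the constraint set $\{x\in U:(x,y)\in\CV\}$ maps exactly onto $\{x\in U:(x,y_0)\in\CV\}$ under $\chi_y$.

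The main obstacle, and the step requiring the most care, is verifying that the integrand transforms correctly, i.e.\ that $DET(\chi_y(x),y_0)^{-1}=DET(x,y)^{-1}$. Here $DET(x,y)=\det(DG_{x,y}(x)DG_{x,y}(x)^*)$ is built from the implicit function $G_{x,y}$ describing $\CV$ locally as a graph over $X$. Since $\CV$ is carried to itself isometrically by $\chi_y\times\varphi_y$, the graph of $G_{x,y}$ near $(x,y)$ is carried to the graph of $G_{\chi_y(x),y_0}$ near $(\chi_y(x),y_0)$, and the differentials $DG$ are conjugated by the isometries $D\chi_y$ and $D\varphi_y$. Because these are linear isometries between the relevant tangent spaces, they preserve the singular values of the differential, hence preserve the Gram determinant $\det(DG\,DG^*)$. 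This gives $DET(\chi_y(x),y_0)=DET(x,y)$ and therefore equality of the integrands. One must check that the isometry $\varphi_y$ acts compatibly on the target of the projection so that the normal Jacobian quantity defining $DET$ is genuinely invariant; this is where the precise geometric meaning of $G_{x,y}$ as the local implicit function of $\pi_1$ must be used carefully.

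Combining these observations yields $I(y)=I(y_0)$ for every $y\in Y$. Substituting into (\ref{eq:doublef}) gives
\[
\int_{x\in U}\sharp(\pi_1^{-1}(x))\,dx=\int_{y\in Y}I(y)\,dy=\int_{y\in Y}I(y_0)\,dy=Vol(Y)\,I(y_0),
\]
which is precisely the claimed identity, with $DET(x,y)$ evaluated at $y=y_0$ in the final integral. This completes the argument modulo the invariance computation for $DET$ flagged above as the crux.
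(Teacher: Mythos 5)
Your proposal is correct and follows essentially the same route as the paper: change variables via the isometry $\chi_y$ to identify the fibre $\{x\in U:(x,y)\in\CV\}$ with $\{x\in U:(x,y_0)\in\CV\}$, and verify that $DET$ is invariant because the local implicit function $G$ at the two points is conjugated by the linear isometries $D\chi_y$ and $D\chi_y\times D\varphi_y$, which preserves the Gram determinant. The invariance step you flag as the crux is exactly what the paper establishes with its commutative diagram and the chain rule, so nothing essential is missing.
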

\begin{proof}
Let $y\in Y$ and let $\varphi_y,\chi_y$ as in the hypotheses. Then, consider the mapping
\[
\begin{matrix}
\chi_y\mid_{\{x\in U:(x,y)\in\CV\}}:&\{x\in U:(x,y)\in\CV\}:&\rightarrow&\{x\in U:(x,y_0)\in\CV\}\\
&x&\mapsto&\chi_y(x),
\end{matrix}
\]
which is the restriction of an isometry, hence an isometry. Let $G_{x,y}$ be the local inverse of $\pi_1$ close to $(x,y)\in \CV$. The change of variables formula then implies:
\begin{equation}\label{eq:cov}
\int_{x\in U:(x,y)\in\CV}DET(x,y)^{-1}\,dx=
\int_{x\in U:(x,y_0)\in\CV}DET(\chi_y^{-1}(x),y)^{-1}\,dx.
\end{equation}
 Note that the following diagram is commutative:
\[
\begin{matrix}
\CV\cap\pi_1^{-1}(U)&\overset{\chi_y^{-1}\times\varphi_y^{-1}}{\longrightarrow}&\CV\cap\pi_1^{-1}(U)\\
\pi_1\downarrow\uparrow G_{x,y_0}&&\downarrow\pi_1\\
X&\overset{\chi_y^{-1}}{\longrightarrow}&X
\end{matrix}
\]
Thus, the mapping $(\chi_y^{-1}\times\varphi_y^{-1})\circ G_{x,y_0}\circ\chi_y$ is a local inverse of $\pi_1$ near $(\chi_y^{-1}(x),y)$, that is
\[
G_{\chi_y^{-1}(x),y}=(\chi_y^{-1}\times\varphi_y^{-1})\circ G_{x,y_0}\circ\chi_y,
\]
and the composition rule for the derivative gives:
\[
DG_{\chi_y^{-1}(x),y}(\chi_y^{-1}(x))=D(\chi_y^{-1}\times\varphi_y^{-1})(G_{x,y_0}(x)) DG_{x,y_0}(x)D\chi_y^{-1}(\chi_y(x)).
\]
Now, $\chi_y$, $\varphi_y$ and $\chi_y\times\varphi_y$ are isometries of their respective spaces. Thus, we conclude:
\[
\det(DG_{\chi_y^{-1}(x)}(\chi_y^{-1}(x))DG_{\chi_y^{-1}(x),y}(\chi_y^{-1}(x))^*)=\det(DG_{x,y_0}(x)DG_{x,y_0}(x)^*),
\]
that is $DET(\chi_y^{-1}(x),y)=DET(x,y_0)$. Then, (\ref{eq:cov}) reads
\[
\int_{x\in U:(x,y)\in\CV}DET(x,y)^{-1}\,dx=
\int_{x\in U:(x,y_0)\in\CV}DET(x,y_0)^{-1}\,dx.
\]
That is, the inner integral in the right--hand side term (\ref{eq:doublef}) is constant. The corollary follows.
\end{proof}

\subsection{The volume of classical spaces}
Some helpful formulas are collected here:
\begin{equation}\label{eq:volumesphere}
\text{(cf. \cite[p. 248]{Gray04})}\quad Vol(\S(\C^a))=Vol(\S(\R^{2a}))=\frac{2\pi^{a}}{\Gamma(a)}
\end{equation}
is the volume of the complex sphere of dimension $a$.
\begin{equation}\label{eq:volumeunitary}
\text{(cf. \cite[p. 54]{Hua1979})}\quad Vol(\CU_{a})=\frac{(2\pi)^{\frac{a(a+1)}{2}}}{\Gamma(1)\cdots\Gamma(a)},
\end{equation}
is the volume of the unitary group of dimension $a$. Note that, as pointed out in \cite[p. 55]{Hua1979} there are other conventions for the volume of unitary groups. Our choice here is the only one possible for Theorem \ref{th:gray} to hold: the volume of $\CU_a$ is the one corresponding to its Riemannian metric inherited from the natural Frobenius metric in $\mathcal{M}_a(\C)$.

We finally recall the volume of the complex Grassmannian. Let $1\leq a\leq b$; then,
\begin{equation}\label{eq:volumegrass}
Vol(\G{a}{b})=\pi^{a(b-a)}\frac{\Gamma(2)\cdots \Gamma(a)\cdot\Gamma(2)\cdots\Gamma(b-a)}{\Gamma(2)\cdots\Gamma(b)}.
\end{equation}

%More generally, by considering the mapping sending $(v|A)$ to $A$ where $(v|A)\in Stiefel_{b\times a}$, $b\geq a$, we get:
%\[
%Vol(Stiefel_{b\times a})=2^{\frac{a(a+1)}{2}}\frac{\pi^{\frac{(2b-a+1)a}{2}}}{\Gamma(b-a+1)\cdots\Gamma(b)}.
%\]

%%%%%%%%%%%%%%%%%%%%%%%
\section{Proof of Theorem \ref{th:main1}}\label{sec:proof1}
We will apply Corollary \ref{cor:doublefibration} to the double fibration given by (\ref{eq:diag}). In the notations of Corollary \ref{cor:doublefibration}, we consider $X=\CH$, $Y=\CS$, $\mathcal{V}$ the solution variety and
\[
y_0=\left(\binom{I_{d_k}}{0_{N_k-d_k}},\binom{I_{d_k}}{0_{M_k-d_k}}\right)=(U_0,V_0)\in\CS.
\]
Given any other element $y=(U_k,V_k)\in\CS$, let $P_k$ and $Q_k$ be unitary matrices of respective sizes $N_k$ and $M_k$ such that
\[
U_k=P_k\binom{I_{d_k}}{0_{N_k-d_k}},\quad V_k=Q_k\binom{I_{d_k}}{0_{M_k-d_k}}.
\]
Then consider the mapping
\[
\varphi_y(\tilde{U}_k,\tilde{V}_k)=(P_k^*\tilde{U}_k,Q_k^*\tilde{V}_k),
\]
which is an isometry of $\CS$ and satisfies $\varphi_y(y)=y_0$ as demanded by Corollary \ref{cor:doublefibration}. We moreover have the associated mapping $\chi_y:\CH\rightarrow\CH$ given by
\[
\chi_y((H_{kl})_{k\neq l})=(P_k^TH_{kl}Q_l)_{k\neq l}
\]
which is an isometry of $\CH$. Moreover, $\chi_y(\CH_\epsilon)=\CH_\epsilon$ and $\chi_y\times\varphi_y(\CV)=\CV$. We can thus apply Corollary \ref{cor:doublefibration} which yields
\begin{equation}\label{eq:doubleftext}
\int_{H\in \CH_\epsilon}\#(\pi_1^{-1}(x))\,dx=vol(\CS)\int_{H\in\CH_I\cap\CH_\epsilon}\det(DG(H)DG(H)^*)^{-1}\,dH,
\end{equation}
where $G$ is the local inverse of $\pi_1$ close to $H$ at $(H,U_0,V_0)$. We now compute $\det(DG(H)DG(H)^*)^{-1}$. From the definition of $G$ we have
\[
T_{(H,U,V)}\CV=\{(\dot H,DG(H)\dot H):\dot H\in T_H\CH\}.
\]
On the other hand, from the defining equations (\ref{eq:1}) and considering $H\in\mathcal{H}_I$ and $\dot H\in T_H\CH$ as block matrices
\[
H=\begin{pmatrix}0_{d_k\times d_l}&A\\B&C\end{pmatrix},\quad \dot H=\begin{pmatrix}\dot R_{kl}&\dot A_{kl}\\\dot B_{kl}&\dot C_{kl}\end{pmatrix}
\]
we can identify
\[
T_{(H,y_0)}\CV=\left\{\left(\dot H,\binom{0}{\dot U},\binom{0}{\dot V}\right):\dot U_k^TB_{kl}+\dot R_{kl}+A_{kl}\dot V_l=0,k\neq l\right\}=
\]
\[
\{(\dot H,\dot U,\dot V):(\dot U,\dot V)=
-\Psi_{H}^{-1}( \dot R_{kl})\}.
\]
Hence\footnote{Note that in the appendices we will sometimes refer to $\Psi$ as $\Psi_H$ to make the dependence on $H$ explicit.},
\[
DG(H)\dot H=-\Psi_{H}^{-1}( \dot R_{kl})=-\Psi_H^{-1}(U_0^*\dot H_{kl} V_0).
\]
A straight--forward computation shows that:
\[
DG(H)^*(\dot U,\dot V)=(-U_0\Psi_{H}^{-*}(\dot U,\dot V)V_0^*)_{k\neq l}.
\]
Thus, writing $\Psi=\Psi_{H}$, we have:
\[
DG(H)DG(H)^*(\dot U,\dot V)=\Psi^{-1}\Psi^{-*}(\dot U,\dot V).
\]
Therefore, $(DG(H)DG(H)^*)^{-1}=\Psi^*\Psi$ and
\[
\det(DG(H)DG(H)^*)^{-1}=\det(\Psi^*\Psi)=|\det(\Psi)|^2=\det(\Psi\Psi^*).
\]
From this last equality and (\ref{eq:doubleftext}) we have:
\[
\int_{H\in \CH_\epsilon}\#(\pi_1^{-1}(H))\,dH=Vol(\CS)\int_{H\in\CH_I\cap\CH_\epsilon}\det(\Psi\Psi^*)\,dH.
\]
Theorem \ref{th:main1} follows dividing both sides of this equation by $Vol(\CH_\epsilon)$ and using the fact that for every choice of $H$ out of a zero measure set, the number of elements in $\pi_1^{-1}(H)$ is constant (see Lemma \ref{lem:constant}).

%%%%%%%%%%%%%%%%%%%%%%%%%%%%%%%%%%%%%%%%%%%%%%%%%%%%%%%%%%%%%%%%%%%%%%%%%%%%%%%%%%%%%%%%%%%%%%%%%%%%%%%%%%

\section{Proof of Theorem \ref{th:main2}}\label{sec:proof2}
Let $\epsilon<1$ and let $\CH_\epsilon$ be the product for $k\neq l$ of the sets
\[
\{H_{kl}:d(H_{kl},\{R\in\mathcal{M}_{N_k\times M_l}(\C):\|R\|_F=1\})<\epsilon\}.
\]
From Theorem \ref{th:gray}, each of these sets have volume equal to
\[
2Vol(\{R\in\mathcal{M}_{N_k\times M_l}(\C):\|R\|_F=1\})\epsilon+O(\epsilon^{2})\underset{(\ref{eq:volumesphere})}{=}\frac{4\pi^{N_kM_l}}{\Gamma(N_kM_l)}\epsilon+O(\epsilon^2)
\]
Thus,
\[
Vol(\CH_\epsilon)=\left(\prod_{k\neq l}\frac{4\pi^{N_kM_l}}{\Gamma(N_kM_l)}\right)\epsilon^{K(K-1)}+O(\epsilon^{K(K-1)+1}).
\]
On the other hand, consider the smooth mapping
\[
\begin{matrix}
f:&\CHI&\rightarrow&\CHI\cap\prod_{k\neq l}\{H_{kl}:\|H_{kl}\|_F=1\}\\
&\left(H_{kl}\right)_{k,l}&\rightarrow&\left(\frac{H_{kl}}{\|H_{kl}\|_F}\right)_{k,l}
\end{matrix}
\]
and apply Theorem \ref{th:coarea} to get
\[
\int_{H\in\CH_I\cap\CH_\epsilon}\det(\Psi_H\Psi_H^*)\,dH= \int_{H\in \CHI\cap\prod_{k\neq l}\{H_{kl}:\|H_{kl}\|_F=1\}}\int_{\vec{t}=(t_{kl})\in[-\epsilon,\epsilon]^{K(K-1)}}\det(\Psi_{\hat{H}} \Psi_{\hat{H}}^*){\rm NJ} f(\hat{H})\,d\vec{t}\,dH,
\]
where $\hat{H}_{kl}=H_{kl}(1+t_{kl})$. Note that the function inside the inner integral is smooth and hence for any $H\in \CHI\cap\prod_{k\neq l}\{H_{kl}:\|H_{kl}\|_F=1\}$ we have
\[
\det(\Psi_{\hat{H}} \Psi_{\hat{H}}^*){\rm NJ} f(\hat{H})= \det(\Psi_{H} \Psi_{H}^*){\rm NJ} f(H)+O(\epsilon).
\]
We have thus proved (using $\approx$ for equalities up to $O(\epsilon)$):
\[
\int_{H\in\CH_I\cap\CH_\epsilon}\det(\Psi_H\Psi_H^*)\,dH\approx \int_{H\in \CHI\cap\prod_{k\neq l}\{H_{kl}:\|H_{kl}\|_F=1\}}(2\epsilon)^{K(K-1)} \det(\Psi_{H} \Psi_{H}^*){\rm NJ} f(H)\,dH,
\]
It is very easy to see that ${\rm NJ} f(H)=1$ if $H=(H_{kl})$ with $\|H_{kl}\|_F=1$. Thus, we have
\[
\int_{H\in\CH_I\cap\CH_\epsilon}\det(\Psi_H\Psi_H^*)\,dH\approx (2\epsilon)^{K(K-1)}\int_{H\in \CHI\cap\prod_{k\neq l}\{H_{kl}:\|H_{kl}\|_F=1\}} \det(\Psi_{H} \Psi_{H}^*)\,dH.
\]
From Theorem \ref{th:main1} and taking limits we then have that for almost every $H_0\in\CH$,
\begin{equation}\label{eq:xxx}
\#(\pi_1^{-1}(H_0))=C
\dashint_{H\in \CHI\cap\prod_{k\neq l}\{H_{kl}:\|H_{kl}\|_F=1\}}\det(\Psi\Psi^*)\,dH,
\end{equation}
where
\[
C=\frac{2^{K(K-1)}Vol\left(H\in \CHI\cap\prod_{k\neq l}\{H_{kl}:\|H_{kl}\|_F=1\}\right)}{\prod_{k\neq l}\frac{4\pi^{N_kM_l}}{\Gamma(N_kM_l)}}Vol(\CS)
\]
Now, $\CHI\cap\prod_{k\neq l}\{H_{kl}:\|H_{kl}\|_F=1\}$ is a product of spheres and thus from (\ref{eq:volumesphere})
\[
Vol\left(\CHI\cap\prod_{k\neq l}\{H_{kl}:\|H_{kl}\|_F=1\}\right)=\prod_{k\neq l}\frac{2\pi^{N_kM_l-d_kd_l}}{\Gamma(N_kM_l-d_kd_l)}.
\]
Finally, $\CS=\left(\prod_{k}\G{d_k}{N_k}\right)\times \left(\prod_{l}\G{d_l}{M_l}\right)$ is a product of complex Grassmannians, and its volume is thus the product of the respective volumes, given in (\ref{eq:volumegrass}).  That is,
\[
Vol(\CS)=\left(\prod_{k}\pi^{d_k(N_k-d_k)}\frac{\Gamma(2)\cdots \Gamma(d_k)\cdot\Gamma(2)\cdots\Gamma(N_k-d_k)}{\Gamma(2)\cdots\Gamma(N_k)} \right)\times
\]
\[
\left(\prod_{l}\pi^{d_l(M_l-d_l)}\frac{\Gamma(2)\cdots \Gamma(d_l)\cdot\Gamma(2)\cdots\Gamma(M_l-d_l)}{\Gamma(2)\cdots\Gamma(M_l)}\right).
\]
Putting these computations together, and using $s=0$, we get the value of $C$ claimed in Theorem \ref{th:main2}.

%%%%%%%%%%%%%%%%%%%%%%%%%%%%%%%%%%%%%%%%%%%%%%%%%%%%%%%%%%%%%%%%%%
%%%%%%%%%%%%%%%%%%%%%%%%%%%%%%%%%%%%%%%%%%%%%%%%%%%%%%%%%%%%%%%%%%%%

\section{Proof of Theorem \ref{th:main3}}
\label{sec:proof3}

The proof of this theorem is quite long and nontrivial. We will apply Theorem \ref{th:main1} to the sets
\begin{equation}\label{eq:Hepsilon}
\CH_\epsilon=\{(H_{kl}):d(H_{kl},\CU_{N_k})\leq\epsilon,k\neq l\}.
\end{equation}
Then, because (\ref{eq:main1}) holds for every $\epsilon$, one can take limits and conclude that for almost every $H_0\in\CH$,
\begin{equation}\label{eq:Hepsilon2}
\#(\pi_1^{-1}(H_0))=\lim_{\epsilon\rightarrow0}\frac{Vol(\CHI\cap\CH_\epsilon)Vol(\CS)}{Vol(\CH_\epsilon)}\dashint_{H\in \CHI\cap\CH_\epsilon}\det(\Psi\Psi^*)\,dH.
\end{equation}
The claim of Theorem \ref{th:main3} will follow from the (difficult) computation of that limit. We organize the proof in several subsections.

\subsection{Unitary matrices with some zeros}
In this section we study the set of unitary matrices of size $N\geq2d$ which have a principal $d\times d$ submatrix equal to $0$, and the set of closeby matrices. For simplicity of the exposition, the notations of this section are inspired in, but different from, the notations of the rest of the paper. Let
\[
\mathcal{T}=\mathcal{T}_{N,d}=\left\{H=\begin{pmatrix}0_{d\times d}&A\\B&C\end{pmatrix}\right\}\subseteq\mathcal{M}_{N\times N}(\C).
\]
Note that $\mathcal{T}$ is a vector space of complex dimension $N^2-d^2$. Our three main results are:
\begin{proposition}\label{prop:T}
The set $\CU_N\cap\mathcal{T}$ is a manifold of codimension $N^2$ inside $\mathcal{T}$. Moreover,
\[
Vol(\CU_N\cap\mathcal{T})=\frac{Vol(\CU_{N-d})^2}{Vol(\CU_{N-2d})}.
\]
\end{proposition}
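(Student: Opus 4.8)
My plan is to first describe $\CU_N\cap\mathcal{T}$ explicitly, then recognize it as a single orbit of a product of unitary groups acting by Frobenius isometries, and finally read off both its dimension and its volume from the orbit structure and the coarea formula (Theorem \ref{th:coarea}). The characterization comes from writing $H=\begin{pmatrix}0_{d\times d}&A\\B&C\end{pmatrix}$ and imposing orthonormality of the columns and rows: the first $d$ columns force $B^{*}B=I_{d}$ (so $B\in\CU_{(N-d)\times d}$), the first $d$ rows force $AA^{*}=I_{d}$ (so $A^{*}\in\CU_{(N-d)\times d}$), and the remaining relations $B^{*}C=0$, $CA^{*}=0$, $C^{*}C=I-A^{*}A$, $CC^{*}=I-BB^{*}$ say exactly that $C$ is a unitary isomorphism from $\mathrm{col}(A^{*})^{\perp}$ onto $\mathrm{col}(B)^{\perp}$, both of dimension $N-2d$. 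Conversely, every such triple $(A,B,C)$ yields a unitary $H\in\mathcal{T}$.

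Next I would introduce the action of $G=\CU_{N-d}\times\CU_{N-d}$ on $\mathcal{M}_{N\times N}(\C)$ by $(P,Q)\cdot H=\hat{P}H\hat{Q}^{*}$ with $\hat{P}=\mathrm{diag}(I_{d},P)$. These maps are Frobenius isometries, and since $\hat P,\hat Q^{*}$ act as the identity on the first $d$ coordinates they preserve the vanishing of the top-left block, hence preserve $\mathcal{T}$. Using the characterization above, $P$ rotates the frame $B$ to a standard frame, $Q$ rotates $A$ likewise, and the residual freedom rotates $C$; thus $G$ acts transitively on $\CU_N\cap\mathcal{T}$ with base point
\[
H_{0}=\begin{pmatrix}0_{d\times d}&I_{d}&0\\ I_{d}&0&0\\ 0&0&I_{N-2d}\end{pmatrix}.
\]
Consequently $\CU_N\cap\mathcal{T}$ is the orbit $G\cdot H_{0}$, an embedded submanifold. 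The stabilizer $G_{0}$ is readily computed to be the diagonally embedded copy $\{(P,Q):P=Q=\mathrm{diag}(I_{d},U),\,U\in\CU_{N-2d}\}\cong\CU_{N-2d}$, so $\dim(\CU_N\cap\mathcal{T})=2(N-d)^{2}-(N-2d)^{2}=N^{2}-2d^{2}$. Since $\mathcal{T}$ has real dimension $2(N^{2}-d^{2})$, this gives codimension $N^{2}$, proving the first assertion.

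For the volume I would apply the coarea formula to the orbit map $\phi:G\to\CU_N\cap\mathcal{T}$, $\phi(P,Q)=\hat{P}H_{0}\hat{Q}^{*}$, taking $\psi\equiv1$:
\[
\int_{G}\mathrm{NJ}\,\phi\,d(P,Q)=\int_{H\in\CU_N\cap\mathcal{T}}Vol\big(\phi^{-1}(H)\big)\,dH .
\]
By $G$-homogeneity both $\mathrm{NJ}\,\phi$ and the fiber volume are constant, so
\[
Vol(\CU_N\cap\mathcal{T})=\frac{\mathrm{NJ}\,\phi\cdot Vol(\CU_{N-d})^{2}}{Vol(\phi^{-1}(H_{0}))}.
\]
The fiber $\phi^{-1}(H_{0})$ is precisely the stabilizer $G_{0}$, sitting \emph{diagonally} in $\CU_{N-d}\times\CU_{N-d}$; a tangent vector $(\mathrm{diag}(0,\dot U),\mathrm{diag}(0,\dot U))$ has squared product-Frobenius norm $2\|\dot U\|_{F}^{2}$, so $G_{0}$ carries a $\sqrt2$-rescaled metric and $Vol(\phi^{-1}(H_{0}))=2^{(N-2d)^{2}/2}Vol(\CU_{N-2d})$. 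Finally one computes $d\phi$ at $H_{0}$, finds the horizontal complement of the stabilizer algebra (where $\dot P_{33}=-\dot Q_{33}$ on the bottom $(N-2d)\times(N-2d)$ blocks), and checks that $d\phi$ stretches exactly those $(N-2d)^{2}$ directions by $\sqrt2$ and is isometric elsewhere, giving $\mathrm{NJ}\,\phi=2^{(N-2d)^{2}/2}$. The two factors cancel, leaving $Vol(\CU_N\cap\mathcal{T})=Vol(\CU_{N-d})^{2}/Vol(\CU_{N-2d})$.

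The main obstacle is exactly this last metric bookkeeping. The delicate point is that the orbit map is \emph{not} a Riemannian submersion for the natural product metric on $G$: the differential introduces a genuine length distortion, and at the same time the stabilizer inherits a non-standard (diagonally scaled) metric. Neither factor is $1$, and the clean ratio of unitary-group volumes emerges only because the normal Jacobian of $\phi$ and the fiber-metric distortion are forced to be the \emph{same} power of $2$ and cancel; getting either exponent wrong, or overlooking the diagonal embedding of $G_{0}$, would spoil the cancellation. The care needed to compute $d\phi(H_{0})$ and identify the horizontal space precisely is what makes the argument long, and it is also the source of the $2^{d^{2}}$-type constants that reappear in Theorem \ref{th:main3}.
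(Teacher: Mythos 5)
Your argument is correct and follows essentially the same route as the paper: both describe $\CU_N\cap\mathcal{T}$ as the orbit of the base point $J$ under $\CU_{N-d}^2$ acting through $\mathrm{diag}(I_d,\cdot)$ on each side, identify the diagonally embedded stabilizer $\cong\CU_{N-2d}$ to get codimension $N^2$, and then apply the coarea formula to the orbit map, where the normal Jacobian $\sqrt{2}^{(N-2d)^2}$ and the fiber volume $\sqrt{2}^{(N-2d)^2}Vol(\CU_{N-2d})$ cancel exactly as you describe. Your explicit orthonormality characterization of $(A,B,C)$ is a slightly more detailed route to surjectivity than the paper's direct completion-to-a-unitary-basis argument, but the substance is identical.
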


\begin{proposition}\label{prop:comparevolumes}
The following equality holds:
\[
\lim_{\epsilon\rightarrow0}\frac{Vol(H\in\mathcal{T}:d(H,\CU_N)\leq\epsilon)}{\epsilon^{N^2}}=2^{d^2}Vol(\CU_N\cap\mathcal{T})Vol(x\in\R^{N^2}:\|x\|\leq1).
\]
\end{proposition}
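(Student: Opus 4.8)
The plan is to realize the set in question as a tube of radius $\epsilon$ around the intersection manifold $\CU_N\cap\mathcal{T}$ inside the vector space $\mathcal{T}$, but with the crucial caveat that the radius is measured by the \emph{ambient} distance to all of $\CU_N$ rather than by the distance to $\CU_N\cap\mathcal{T}$ within $\mathcal{T}$. This discrepancy is exactly what produces the factor $2^{d^2}$. First I would fix a point $q\in\CU_N\cap\mathcal{T}$ and split $\mathcal{T}=T_q(\CU_N\cap\mathcal{T})\oplus\nu_q$ orthogonally, where $\nu_q$ (of real dimension $N^2$, by Proposition \ref{prop:T}) is the normal space of the intersection inside $\mathcal{T}$. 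Writing $\mathcal{N}_q=(T_q\CU_N)^{\perp}$ for the ambient normal space of $\CU_N$ at $q$, the first variation of the nearest--point distance shows that for $H=q+v$ with $v\in\mathcal{T}$ one has $d(H,\CU_N)=\|\Pi_{\mathcal{N}_q}v\|+O(\|v\|^2)$, and since $T_q(\CU_N\cap\mathcal{T})\subseteq T_q\CU_N=\ker\Pi_{\mathcal{N}_q}$ this depends only on the $\nu_q$--component of $v$.

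Parametrizing the tubular neighbourhood by the normal bundle of $\CU_N\cap\mathcal{T}$ in $\mathcal{T}$ (the endpoint map has Jacobian $1+O(\epsilon)$ near the zero section), the volume factors as an integral over $\CU_N\cap\mathcal{T}$ of the $\nu_q$--slice volume $Vol\{v\in\nu_q:\|\Pi_{\mathcal{N}_q}v\|\le\epsilon\}$. Since $\Pi_{\mathcal{N}_q}|_{\nu_q}:\nu_q\to\mathcal{N}_q$ is a linear isomorphism of $N^2$--dimensional spaces (injectivity holds because $v\in\nu_q\cap T_q\CU_N$ would lie in $\mathcal{T}\cap T_q\CU_N=T_q(\CU_N\cap\mathcal{T})\perp\nu_q$, hence be $0$), this slice volume equals $J_q^{-1}\epsilon^{N^2}Vol(x\in\R^{N^2}:\|x\|\le1)$ with $J_q=|\det(\Pi_{\mathcal{N}_q}|_{\nu_q})|$. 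Dividing by $\epsilon^{N^2}$ and letting $\epsilon\to0$ reduces the proposition to showing that $J_q=2^{-d^2}$ for every $q$, so that the constant $J_q^{-1}=2^{d^2}$ comes out of the integral and leaves $Vol(\CU_N\cap\mathcal{T})$ behind.

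The computation of $J_q$ is the heart of the argument and the step I expect to be the main obstacle. I would pass to the adjoint, using that $\Pi_{\mathcal{T}}|_{\mathcal{N}_q}$ is the adjoint of $\Pi_{\mathcal{N}_q}|_{\nu_q}$ so that $J_q=|\det(\Pi_{\mathcal{T}}|_{\mathcal{N}_q})|$, and the isometric parametrization $S\mapsto qS$ of $\mathcal{N}_q$ by the Hermitian matrices, together with $\Pi_{\mathcal{T}}(qS)=qS-E(qS)E$ (subtraction of the top--left $d\times d$ block), where $E=\bigl(\begin{smallmatrix}I_d&0\\0&0\end{smallmatrix}\bigr)$. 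This gives $\|\Pi_{\mathcal{T}}(qS)\|^2=\|S\|^2-\|E(qS)E\|^2$, so $J_q^2=\det(I-\widetilde Q)$, where $\widetilde Q$ is the operator of the quadratic form $Q(S)=\|E(qS)E\|^2=\mathrm{tr}(E S R S)$ with $R=q^*Eq$, both $E$ and $R$ being rank--$d$ orthogonal projections. The key structural input is that $q\in\mathcal{T}$ means $EqE=0$, which forces $ERE=(Eq^*E)(qE)=0$ and hence $RE=0$ (as $R$ is a projection): the ranges of $E$ and $R$ are orthogonal.

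Finally I would exploit this orthogonality. Conjugating by a unitary, which is an isometry of the Hermitian matrices preserving $\mathrm{tr}(E S R S)$, I may take $E$ and $R$ to be the coordinate projections onto two disjoint blocks of size $d$, whereupon a direct trace computation gives $Q(S)=\|S_{12}\|^2$, the squared Frobenius norm of the corresponding off--diagonal $d\times d$ block of $S$. In the Hermitian parametrization that block contributes $2\|S_{12}\|^2$ to $\|S\|^2$ (its conjugate block $S_{21}=S_{12}^*$ is slaved to it), so the form $\|S\|^2-Q(S)$ has eigenvalue $\tfrac12$ on the $2d^2$ real dimensions indexed by $S_{12}$ and eigenvalue $1$ elsewhere. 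Hence $\det(I-\widetilde Q)=(\tfrac12)^{2d^2}=2^{-2d^2}$ and $J_q=2^{-d^2}$, independent of $q$, completing the proof. The delicate points to get right are the bookkeeping of the factor $2$ coming from the Hermitian (rather than free complex) parametrization of $\mathcal{N}_q$, and the verification that the orthogonality $RE=0$ holds at every $q\in\CU_N\cap\mathcal{T}$.
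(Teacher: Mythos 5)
Your argument is correct, and it reaches the constant $2^{d^2}$ by a genuinely different route from the paper's. The paper never touches the normal bundle directly: it proves quantitative two-sided estimates $|d(H,\CU_N)-T_1(H)^{1/2}|\leq O(\epsilon^2)$ and $|d(H,\CU_N\cap\mathcal{T})-T_2(H)^{1/2}|\leq O(\epsilon^2)$ (Proposition \ref{prop:TtH}, via the matrix exponential and an explicit minimization over skew--Hermitian $R$), observes that $T_1$ and $T_2$ differ only in the weight of $\|C_1\|^2$ ($\tfrac12$ versus $1$), converts one sublevel set into the other by the substitution $C_1\mapsto\sqrt{2}\,C_1$ with Jacobian $2^{d^2}$ (Lemma \ref{lem:fixABvolC}), sandwiches $Vol(H\in\mathcal{T}:d(H,\CU_N)\leq\epsilon)$ between $2^{d^2}$ times the volumes of the $\epsilon/\alpha$- and $\alpha\epsilon$-tubes around $\CU_N\cap\mathcal{T}$ via Fubini (Proposition \ref{prop:applyfubini}), and finishes with Theorem \ref{th:gray} letting $\alpha\to1$. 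You instead compute the pointwise Jacobian $J_q=|\det(\Pi_{\mathcal{N}_q}|_{\nu_q})|$ of the projection between the two relevant $N^2$-dimensional normal spaces and integrate over the base; your eigenvalue $\tfrac12$ on the $2d^2$ real dimensions of the Hermitian block $S_{12}$ is exactly the paper's coefficient $\tfrac12$ on $\|C_1\|^2$ in $T_1$, seen in Lie-algebra coordinates, and your identity $EqE=0\Rightarrow ER=RE=0$ is sound (write $ERE=(EqE)^*(EqE)=0$ and use positivity of the projection $R$). Your approach is slicker and more conceptual, and it localizes the whole difficulty into one clean linear-algebra determinant; the paper's is more elementary and self-contained, at the cost of the lengthy $\epsilon^2$-bookkeeping of Lemma \ref{lem:TtH}. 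Two small points you should make explicit: the identification $\mathcal{T}\cap T_q\CU_N=T_q(\CU_N\cap\mathcal{T})$ used for injectivity requires the transversality of $\CU_N$ and $\mathcal{T}$ (equivalently the codimension count of Proposition \ref{prop:T}; alternatively, your own computation $\det(I-\widetilde{Q})=2^{-2d^2}\neq0$ already yields the isomorphism), and one must check that $\{H\in\mathcal{T}:d(H,\CU_N)\leq\epsilon\}$ is contained in a tube of radius $O(\epsilon)$ around $\CU_N\cap\mathcal{T}$ so that the normal-bundle parametrization captures the whole set --- this is the content of the inequality $T_2\leq 2T_1$ in the paper and follows from the same transversality, but it is not automatic from the first-order expansion alone.
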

\begin{proposition}\label{prop:limits}
Let $\psi:\mathcal{T}\rightarrow\R$ be a smooth mapping defined on $\mathcal{T}$ and such that $\psi(H)$ depends only on the $A$ and $B$ part of $H$, but not on the part $C$. Denote $\psi(H)=\psi(A,B)$. Then,
\[
\lim_{\epsilon\rightarrow0}\frac{\int_{H\in\mathcal{T}:d(H,\CU_N)\leq \epsilon}\psi(H)\,dH}{Vol(H\in\mathcal{M}_N(\C):d(H,\CU_N)\leq\epsilon)}=
\]
\[
 \frac{2^{d^2}Vol(\CU_{N-d})^2}{Vol(\CU_N)Vol(\CU_{N-2d})}\dashint_{(A^*, B) \in \CU_{(N-d)\times d}}\psi(A,B) \,d(A,B).
\]
\end{proposition}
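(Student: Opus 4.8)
The plan is to estimate the numerator and the denominator of the ratio separately as $\epsilon\to0$, combine them, and then rewrite the resulting integral over $\CU_N\cap\mathcal{T}$ as an integral over the Stiefel product $\CU_{(N-d)\times d}^2$ using the fibration underlying Proposition \ref{prop:T}. For the denominator, $\CU_N$ is a compact embedded submanifold of $\mathcal{M}_N(\C)\cong\R^{2N^2}$ of real dimension $N^2$, hence of codimension $c=N^2$, so Theorem \ref{th:gray} gives
\[
Vol(H\in\mathcal{M}_N(\C): d(H,\CU_N)\le\epsilon)=Vol(\CU_N)\,Vol(x\in\R^{N^2}:\|x\|\le1)\,\epsilon^{N^2}+O(\epsilon^{N^2+1}).
\]

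The central step will be the following weighted refinement of Proposition \ref{prop:comparevolumes}:
\[
\int_{H\in\mathcal{T}: d(H,\CU_N)\le\epsilon}\Psi(H)\,dH=2^{d^2}Vol(x\in\R^{N^2}:\|x\|\le1)\,\epsilon^{N^2}\int_{H_0\in\CU_N\cap\mathcal{T}}\Psi(H_0)\,dH_0+o(\epsilon^{N^2}).
\]
Setting $\Psi\equiv1$ recovers Proposition \ref{prop:comparevolumes} exactly, so this is its weighted version, and I would prove it by carrying $\Psi$ through the same tubular-neighborhood parametrization of the tube $\{H\in\mathcal{T}: d(H,\CU_N)\le\epsilon\}$. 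Two facts let the weight pass through the leading order. First, the intersection $\CU_N\cap\mathcal{T}$ is transverse: by Proposition \ref{prop:T} its codimension in $\mathcal{M}_N(\C)$ is $N^2+2d^2$, the sum of the codimensions of $\CU_N$ and of $\mathcal{T}$, so that for $H\in\mathcal{T}$ one has $d(H,\CU_N\cap\mathcal{T})\le c\,d(H,\CU_N)\le c\epsilon$ uniformly (using compactness of $\CU_N\cap\mathcal{T}$). Second, $\Psi$ is smooth, hence Lipschitz near the intersection, so $\Psi(H)$ may be replaced by $\Psi(H_0)$, with $H_0$ the nearest point of $\CU_N\cap\mathcal{T}$, at the cost of $O(\epsilon)\cdot O(\epsilon^{N^2})=o(\epsilon^{N^2})$. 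The normal-slice volume is constant along $\CU_N\cap\mathcal{T}$ because the block-diagonal isometry $H\mapsto PHQ^*$, with $P=\mathrm{diag}(P_1,P_2)$, $Q=\mathrm{diag}(Q_1,Q_2)$, $P_1,Q_1\in\CU_d$, $P_2,Q_2\in\CU_{N-d}$, preserves both $\mathcal{T}$ and $\CU_N$ and acts transitively on $\CU_N\cap\mathcal{T}$; its value is pinned to $2^{d^2}Vol(x\in\R^{N^2}:\|x\|\le1)$ precisely by the $\Psi\equiv1$ case.

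Next I would convert the integral over $\CU_N\cap\mathcal{T}$ into one over $\CU_{(N-d)\times d}^2$. The map $H_0=\bigl(\begin{smallmatrix}0&A\\B&C\end{smallmatrix}\bigr)\mapsto(A^*,B)$ from Proposition \ref{prop:T} is a Riemannian submersion onto $\CU_{(N-d)\times d}^2$ with fibers isometric to $\CU_{N-2d}$ (the admissible $C$-blocks), and $\Psi(H_0)=\Psi(A,B)$ is constant on these fibers, so the coarea formula (Theorem \ref{th:coarea}) yields
\[
\int_{\CU_N\cap\mathcal{T}}\Psi(H_0)\,dH_0=Vol(\CU_{N-2d})\int_{(A^*,B)\in\CU_{(N-d)\times d}^2}\Psi(A,B)\,d(A,B).
\]
Dividing the numerator estimate by the denominator estimate, letting $\epsilon\to0$, writing $\int=Vol(\CU_{(N-d)\times d})^2\dashint$, and using $Vol(\CU_{(N-d)\times d})=Vol(\CU_{N-d})/Vol(\CU_{N-2d})$, all factors of $Vol(x\in\R^{N^2}:\|x\|\le1)$ and $\epsilon^{N^2}$ cancel and the constant collapses to $\dfrac{2^{d^2}Vol(\CU_{N-d})^2}{Vol(\CU_N)Vol(\CU_{N-2d})}$, which is the claim.

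The main obstacle is the weighted numerator estimate in the second paragraph. The delicate points are showing that the tube around $\CU_N\cap\mathcal{T}$ inside $\mathcal{T}$ has the correct leading volume with a \emph{constant} normal density — equivalently, that the weight does not interact with the anisotropy responsible for the $2^{d^2}$ factor — together with a uniform control of the error. Once Proposition \ref{prop:comparevolumes} is granted, the weight enters only through the Lipschitz replacement of $\Psi(H)$ by $\Psi(H_0)$ and through the constancy of the normal density, both of which reduce to the transversality in Proposition \ref{prop:T} and to the block-diagonal symmetry; the final fibration step is then routine given Proposition \ref{prop:T}.
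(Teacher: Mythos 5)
Your proposal is correct and follows essentially the same route as the paper: the denominator via Theorem \ref{th:gray}, the weighted tube estimate for the numerator via nearest-point projection onto $\CU_N\cap\mathcal{T}$, Lipschitz replacement of $\Psi(H)$ by $\Psi(H_0)$, unitary invariance to make the normal-slice volume constant, and pinning that constant with the $\Psi\equiv1$ case (Proposition \ref{prop:comparevolumes}) --- which is exactly the content of the paper's Lemmas \ref{lem:almostthere} and \ref{lem:limits}. The final reduction to the Stiefel product is also the paper's argument, merely phrased as the submersion $H_0\mapsto(A^*,B)$ instead of the coarea formula applied to the parametrization $\xi:\CU_{N-d}^2\rightarrow\CU_N\cap\mathcal{T}$.
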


\subsubsection{Proof of Proposition \ref{prop:T}}\label{sec:proofpropT}
Let
\begin{equation}\label{eq:xi}
\begin{matrix}
\xi:&\CU_{N-d}^2&\rightarrow&\CU_N\cap\mathcal{T}\\
&(U,V)&\mapsto&\begin{pmatrix}I_d&0\\0&U\end{pmatrix} J \begin{pmatrix}I_d&0\\0&V^*\end{pmatrix}
\end{matrix}
\end{equation}
where
\[
J=\begin{pmatrix}0&I_d&0\\I_d&0&0\\0&0&I_{N-2d}\end{pmatrix}.
\]
We claim that $\xi$ is surjective. Indeed, let
\[
H=\begin{pmatrix}0&A\\B&C\end{pmatrix}\in\CU_N\cap\mathcal{T}.
\]
From $HH^*=I_N$ we have that $A$ satisfies $AA^*=I_d$, i.e. the rows of $A$ can be completed to form a unitary basis of $\C^{N-d}$. Namely, there exists $V\in\CU_{N-d}$ such that $A=(I_d\;0)V$. Similarly, there exists $U\in\CU_{N-d}$ such that $B=U\binom{I_d}{0}$. Then,
\[
H=\begin{pmatrix}I_d&0\\0&U\end{pmatrix} \begin{pmatrix}0&I_d&0\\I_d&R_1&
R_2\\0&R_3&R_4\end{pmatrix} \begin{pmatrix}I_d&0\\0&V\end{pmatrix},
\]
where
\[
R=\begin{pmatrix}R_1&R_2\\R_3&R_4\end{pmatrix}
\]
 satisfies $URV=C$. Now, this implies that the matrix
\[
\begin{pmatrix}0&I_d&0\\I_d&R_1&
R_2\\0&R_3&R_4\end{pmatrix}
\]
is unitary, which forces $R_1=0$, $R_2=0$, $R_3=0$ and $R_4$ unitary. That is
\[
H=\begin{pmatrix}I_d&0\\0&U\end{pmatrix} \begin{pmatrix}0&I_d&0\\I_d&0&0\\0&0&R_4\end{pmatrix} \begin{pmatrix}I_d&0\\0&V\end{pmatrix}=\begin{pmatrix}I_d&0\\0&U\end{pmatrix} J \begin{pmatrix}I_d&0&0\\0&I_d&0\\0&0&R_4\end{pmatrix} \begin{pmatrix}I_d&0\\0&V\end{pmatrix},
\]
that is
\[
H=\xi\left(U,V^*\begin{pmatrix}I_d&0\\0&R_4\end{pmatrix}^*\right),
\]
and the surjectivity of $\xi$ is proved. Moreover, this construction describes $\CU_N\cap\mathcal{T}$ as the orbit of $J$ under the action in $\mathcal{T}$ given by
\[
((U,V),X)\mapsto \begin{pmatrix}I_d&0\\0&U\end{pmatrix} X \begin{pmatrix}I_d&0\\0&V^*\end{pmatrix}.
\]
Then, $\CU_N\cap\mathcal{T}$ is a smooth manifold diffeomorphic to the quotient space
\[
\CU_{N-d}^2/S_J,
\]
where $S_J$ is the stabilizer of $J$. Now, $(U,V)\in S_J$ if and only if
\[
\begin{pmatrix}I_d&0&0\\0&U_1&U_2\\0&U_3&U_4\end{pmatrix} \begin{pmatrix}0&I_d&0\\I_d&0&0\\0&0&I_{N-2d}\end{pmatrix} \begin{pmatrix}I_d&0&0\\0&V_1^*&V_3^*\\0&V_2^*&V_4^*\end{pmatrix}= \begin{pmatrix}0&I_d&0\\I_d&0&0\\0&0&I_{N-2d}\end{pmatrix},
\]
which implies $U_1=I_d$, $U_2=0$, $U_3=0$, $V_1=I_d$, $V_2=0$, $V_3=0$ and $U_4=V_4$. Thus,
\begin{equation}\label{eq:S}
S_J=\left\{\left(\begin{pmatrix}I_d&0\\0&U_4\end{pmatrix},\begin{pmatrix}I_d&0\\0&U_4\end{pmatrix}\right):U_4\in\CU_{N-2d}\right\}.
\end{equation}
Then,
\[
\dim(\CU_N\cap\mathcal{T})=\dim(\CU_{N-d}^2/S_J)=2\dim(\CU_{N-d})^2-\dim(S_J)=2(N-d)^2-(N-2d)^2=N^2-2d^2.
\]
On the other hand, $\dim(\mathcal{T})=2N^2-2d^2$ and thus
\[
\text{codim}_\mathcal{T}(\CU_N\cap\mathcal{T})=2N^2-2d^2-(N^2-2d^2)=N^2,
\]
as claimed. We now apply the Coarea formula to $\xi$ to compute the volume of $\CU_N\cap\mathcal{T}$. Note that by unitary invariance the Normal Jacobian of $\xi$ is constant, and so is $Vol(\xi^{-1}(H))$. We can easily compute
\[
Vol(\xi^{-1}(H))\underset{\forall\;H}{=}Vol(\xi^{-1}(J))=Vol(S_J)\underset{(\ref{eq:S})}{=}\sqrt{2}^{(N-2d)^2}Vol(\CU_{N-2d}).
\]
For the Normal Jacobian of $\xi$, writing
\[
\dot U=\begin{pmatrix}\dot U_1&\dot U_2\\\dot U_3&\dot U_4\end{pmatrix},
\]
for an element in the tangent space to $\CU_{N-d}$ at $I_{N-d}$ (and similarly for $\dot V$), note that
\[
D\xi(I_{N-d},I_{N-d})(\dot U,\dot V)=\begin{pmatrix}0&\dot V_1^*&-\dot V_2\\\dot U_1&0&\dot U_2\\-\dot U_2^*&\dot V_2^*&\dot U_4+\dot V_4^*\end{pmatrix}.
\]
Thus, $D\xi(I_{N-d},I_{N-d})$ preserves the orthogonality of the natural basis of $T_U\CU_{N-d}\times T_V\CU_{N-d}$ but for the elements such that $\dot U_4\neq0$ or $\dot V_4\neq0$. We then conclude that $NJ(\xi)(I_{N-d},I_{N-d})=NJ(\eta)$ where
\[
\begin{matrix}
\eta:&\{M\in\mathcal{M}_{N-2d}(\C):M+M^*=0\}^2&\rightarrow&\{M\in\mathcal{M}_{N-2d}(\C):M+M^*=0\}\\
&(\dot U_4,\dot V_4)&\mapsto&\dot U_4+\dot V_4^*.
\end{matrix}
\]
It is a routine task to see that $\eta^*(L)=(L,L^*)$ which implies $\eta\eta^*(L)=2L$, that is
\[
\det(\eta\eta^*)=2^{\dim(\{M\in\mathcal{M}_{N-2d}(\C):M+M^*=0\})}=2^{(N-2d)^2}.
\]
Hence, $NJ(\eta)=\sqrt{\det(\eta\eta^*)}=\sqrt{2}^{(N-2d)^2}$. As we have pointed out above, the value of the Normal Jacobian of $\xi$ is constant. Thus, for every $U,V$,
\[
NJ(\xi)(U,V)=NJ(\eta)=\sqrt{2}^{(N-2d)^2}.
\]
The Coarea formula applied to $\xi$ then yields:
\[
Vol(\CU_{N-d}^2)=\int_{(U,V)\in\CU_{N-d}^2}1\,d(U,V)=\int_{H\in\CU_N\cap\mathcal{T}}\frac{Vol(\xi^{-1}(H))}{NJ(\xi)}\,dH=Vol(\CU_N\cap\mathcal{T})Vol(\CU_{N-2d}).
\]
The value of $Vol(\CU_N\cap\mathcal{T})$ is thus as claimed in Proposition \ref{prop:T}.
\subsubsection{Some notations}
Given a matrix of the form
\begin{equation}\label{eq:H}
H=\begin{pmatrix}0&\sigma&0\\\alpha&C_1&C_2\\0&C_3&C_4\end{pmatrix},
\end{equation}
($\alpha$ and $\sigma$ are $d\times d$ diagonal matrices with real positive ordered entries) we denote by $\tH$ the associated matrix
\[
\tH=\begin{pmatrix}\alpha&C_1&C_2\\0&\sigma&0\\0&U_0^*C_3&U_0^*C_4\end{pmatrix},
\]
where $U_0$ is some unitary matrix which minimizes the distance from $C_4$ to $\CU_{N-2d}$. Note that
\[
\tH=\begin{pmatrix}0&I&0\\I&0&0\\0&0&U_0^*\end{pmatrix}H,
\]
and hence
\[
d(H,\CU_N)=d(\tH,\CU_N).
\]
We also let
\[
T_1(H)=\|\alpha-I_d\|^2+\|\sigma-I_d\|^2+\|C_4-U_0\|^2+\frac{\|C_1\|^2+\|C_2\|^2+\|C_3\|^2}{2},
\]
\[
T_2(H)=\|\alpha-I_d\|^2+\|\sigma-I_d\|^2+\|C_4-U_0\|^2+\|C_1\|^2+\frac{\|C_2\|^2+\|C_3\|^2}{2}=T_1(H)+\frac{\|C_1\|^2}{2}.
\]
Note that
\begin{equation}\label{eq:t1t2d}
T_2(H)\geq T_1(H)\geq\frac{\|\tH-I_N\|^2}{2}\geq\frac{d(\tH,\CU_N)^2}{2}=\frac{d(H,\CU_N)^2}{2}
\end{equation}
\subsubsection{Approximate distance to $\CU_N$ and $\CU_N\cap\mathcal{T}$}
In this section we prove that for small values,
\[
d(H,\CU_N)\approx T_1(H)^{1/2},\quad d(H,\CU_N\cap \mathcal{T})\approx T_2(H)^{1/2}.
\]
More precisely:
\begin{proposition}\label{prop:TtH}
For sufficiently small $\epsilon>0$, if $d(H,\CU_N)\leq \epsilon$ then,
\[
|d(H,\CU_N)-T_1(H)^{1/2}|\leq O(\epsilon^2),
\]
\[
\left|d\left(H,\CU_N \cap\mathcal{T}\right)-T_2(H)^{1/2}\right|\leq O(\epsilon^2).
\]
Here, we are writing $O(\epsilon^2)$ for some function of the form $c(d)\epsilon^2$.
\end{proposition}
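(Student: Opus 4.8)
The plan is to use the elementary fact that the distance from a point to a smooth compact embedded submanifold equals, up to an error of second order in that distance, the norm of the component of the displacement that is normal to the manifold at any nearby reference point; the curvature (second fundamental form) enters the \emph{squared} distance only at third order. I will apply this to the two manifolds $\CU_N$ and $\CU_N\cap\mathcal{T}$, using as reference point
\[
W_0=\begin{pmatrix}0&I_d&0\\I_d&0&0\\0&0&U_0\end{pmatrix}\in\CU_N\cap\mathcal{T},
\]
with block sizes $d,d,N-2d$. First I would prove a perturbation lemma: if $H$ is in the canonical form \eqref{eq:H} and $d(H,\CU_N)\le\epsilon$, then $\|\alpha-I_d\|,\|\sigma-I_d\|,\|C_1\|,\|C_2\|,\|C_3\|,\|C_4-U_0\|$ are all $O(\epsilon)$. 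This is read off from $H^*H-I_N$ and $HH^*-I_N$ block by block: the $(1,1)$ blocks give $\alpha,\sigma\approx I_d$, the off-diagonal blocks, which equal $\alpha C_1$, $\alpha C_2$ and $\sigma C_3^*$, force $C_1,C_2,C_3\approx0$, and the $(3,3)$ block then yields $\|C_4-U_0\|=O(\epsilon)$ for the nearest unitary $U_0$ to $C_4$. In particular $D:=H-W_0\in\mathcal{T}$ has norm $O(\epsilon)$.

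For the first estimate I would use $d(H,\CU_N)=d(\tH,\CU_N)$, already noted in the text, to replace $H$ by $\tH=I_N+(\tH-I_N)$ with $\tH-I_N=O(\epsilon)$. Since the tangent space to $\CU_N$ at the identity is the space of skew-Hermitian matrices, its normal space is the space of Hermitian matrices, so $d(\tH,\CU_N)$ equals the norm of the Hermitian part of $\tH-I_N$ up to $O(\epsilon^2)$. A block-by-block evaluation of that Hermitian part is the heart of the computation: the three diagonal blocks $\alpha-I_d$, $\sigma-I_d$ and $U_0^*C_4-I_{N-2d}$ are \emph{already Hermitian} (the first two because $\alpha,\sigma$ are real diagonal in the canonical form, the last because $U_0$ is the unitary polar factor of $C_4$, so $U_0^*C_4$ is Hermitian), hence contribute their full squared norms $\|\alpha-I_d\|^2+\|\sigma-I_d\|^2+\|C_4-U_0\|^2$, whereas each off-diagonal block $C_1,C_2,C_3$ is halved, contributing $\tfrac12(\|C_1\|^2+\|C_2\|^2+\|C_3\|^2)$. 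The sum is exactly $T_1(H)$, giving $d(H,\CU_N)=T_1(H)^{1/2}+O(\epsilon^2)$.

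For the second estimate I cannot left-translate by a fixed unitary, since that would destroy the defining zero block of $\mathcal{T}$; instead I would work intrinsically at $W_0$. Writing a tangent vector to $\CU_N$ at $W_0$ as $SW_0$ with $S$ skew-Hermitian, a direct computation shows its top-left $d\times d$ block equals $S_{12}$, so by the dimension count of Proposition \ref{prop:T} we have $T_{W_0}(\CU_N\cap\mathcal{T})=\{SW_0:\,S^*=-S,\ S_{12}=0\}\subseteq T_{W_0}\CU_N$. The orthogonal complement of the smaller tangent space inside the larger is therefore spanned by the vectors $SW_0$ with $S$ supported on its $(1,2)$ and $(2,1)$ blocks, which equal $Z(P):=\mathrm{diag}(P,-P^*,0)$ for $P\in\mathcal{M}_d(\C)$. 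Because one tangent space sits inside the other, the two squared distances differ, up to $O(\epsilon^3)$, by the energy of $D$ in this extra normal space; since $\langle D,Z(P)\rangle=-\mathrm{Re}\,\mathrm{tr}(PC_1)$ and $\|Z(P)\|^2=2\|P\|^2$, a one-line Riesz computation gives that energy as $\tfrac12\|C_1\|^2$. Adding this to the previous result yields $d(H,\CU_N\cap\mathcal{T})^2=T_1(H)+\tfrac12\|C_1\|^2+O(\epsilon^3)=T_2(H)+O(\epsilon^3)$, and taking square roots gives the claim.

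The routine block algebra above is not where the difficulty lies; the main obstacle is the \emph{uniform} control of the error terms. I must check that the nearest-point projection onto each of $\CU_N$ and $\CU_N\cap\mathcal{T}$ is well defined and smooth on a tube of a fixed radius, that replacing the true foot point by the fixed $W_0$ and discarding the second fundamental form perturbs the squared distance only by $O(\epsilon^3)$, and crucially that the implied constant can be taken to depend on $d$ alone and not on $N$. The last point is the delicate one: it relies on the fact that all of the nontrivial geometry is concentrated in the fixed-size corner blocks, while the large block $C_4$ only enters through $\|C_4-U_0\|$, whose contribution is governed by the distance to $\CU_{N-2d}$ and carries curvature constants independent of $N$. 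Establishing these uniform estimates, rather than the identification of the leading term, is the real content of the proposition.
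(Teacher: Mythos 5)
Your argument is correct in substance and lands on the same leading-order identities as the paper, but it reorganizes the proof in a way worth comparing. For the first estimate you and the paper do essentially the same thing: the paper proves an intermediate lemma (Lemma~\ref{lem:TtH}) that replaces $d(\tH,\CU_N)$ by $\min_{R+R^*=0}\|\tH-(I+R)\|$ via the matrix exponential, and then solves that minimization block by block; your ``norm of the Hermitian part of $\tH-I_N$'' is exactly that minimum, computed the same way (diagonal blocks Hermitian, off-diagonal blocks halved). For the second estimate you genuinely diverge: the paper re-runs the constrained minimization $\min_{R+R^*=0,\,R_2=0}\|\tH-(I+R)\|$ after left-translating to the identity, whereas you stay at $W_0$, identify $T_{W_0}(\CU_N\cap\mathcal{T})=\{SW_0:S^*=-S,\ S_{12}=0\}$ by the dimension count of Proposition~\ref{prop:T}, and compute the extra normal energy $\tfrac12\|C_1\|^2$ directly, recovering $T_2=T_1+\tfrac12\|C_1\|^2$. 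Your version is arguably cleaner conceptually, since it makes transparent \emph{why} only the $C_1$ term is promoted from weight $\tfrac12$ to weight $1$; I checked the block algebra ($\langle D,Z(P)\rangle=-\mathrm{Re}\,\mathrm{tr}(PC_1)$, $\|Z(P)\|^2=2\|P\|^2$) and it is right.

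The one place where your writeup is materially thinner than the paper's is the error control that you yourself flag as ``the real content.'' You invoke the general principle that the distance to a submanifold equals the normal component of the displacement from a nearby reference point up to second order, with constants governed by the reach and second fundamental form, and you assert (correctly, but without proof) that these constants can be taken to depend on $d$ alone. The paper does not leave this abstract: it gets the uniform constants for free from the explicit estimate $\|e^R-I-R\|\leq 2\|R\|^2$ (dimension-free for the Frobenius norm) together with the block bounds $\|C_1\|,\|C_2\|,\|C_3\|\leq 4\sqrt{d}\,\epsilon$ and $\|U_0^*C_4-I\|\leq 20d\epsilon$ extracted from $HH^*-I$ and $H^*H-I$. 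To make your proof complete you should either import that exponential-map computation in place of the abstract tubular-neighborhood appeal, or quantify the reach/curvature bounds for $\CU_N$ and $\CU_N\cap\mathcal{T}$ explicitly; as written, the claim that the implied constant is $c(d)$ and not $c(d,N)$ is asserted rather than established.
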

Before proving Proposition \ref{prop:TtH} we state the following intermediate result.
\begin{lemma}\label{lem:TtH}
There is an $\epsilon_0>0$ such that $\|\tH-I_N\|\leq \epsilon<\epsilon_0$ implies:
\[
T_1(H)^{1/2}-9\epsilon^2\leq d(H,\CU_N)\leq T_1(H)^{1/2}+9\epsilon^2,
\]
\[
T_2(H)^{1/2}-30\epsilon^2\leq d\left(H,\CU_N \cap\mathcal{T}\right)\leq T_2(H)^{1/2}+30\epsilon^2.
\]
\end{lemma}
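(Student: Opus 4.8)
The plan is to turn both statements into a local, second-order computation at $I_N$. Since $\tH=WH$ for the block-row-swap unitary $W$ with $\tH=WH$ introduced in the preceding notations, left multiplication by $W$ is an isometry of $\mathcal{M}_N(\C)$ that fixes $\CU_N$ and carries $\CU_N\cap\mathcal{T}$ to $\CU_N\cap W\mathcal{T}$, where $W\mathcal{T}=\{M:M_{21}=0\}$ is the set of matrices whose middle $d\times d$ block of the first block column vanishes (in the $d,d,N-2d$ partition). Hence $d(H,\CU_N)=d(\tH,\CU_N)$ and $d(H,\CU_N\cap\mathcal{T})=d(\tH,\CU_N\cap W\mathcal{T})$, so I only need to estimate the distance from $\tH$ to these two manifolds. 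Writing $E=\tH-I_N$, so $\|E\|\le\epsilon$, I would record the two relevant normal spaces at $I_N$: the normal space of $\CU_N$ is the space $\mathrm{Herm}$ of Hermitian matrices, while the normal space of $\CU_N\cap W\mathcal{T}$ is $\mathcal{N}=\mathrm{Herm}\oplus\mathcal{K}$, where $\mathcal{K}$ denotes the skew-Hermitian matrices supported on the $(1,2)$ and $(2,1)$ blocks. Indeed, the tangent space of the intersection is the skew-Hermitian matrices with vanishing $(1,2)$ block, whose real dimension $N^2-2d^2$ agrees with Proposition~\ref{prop:T}.

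The key algebraic point is that, because $U_0$ minimizes $\|C_4-U_0\|$, the polar decomposition makes $U_0^*C_4$ Hermitian, so the $(3,3)$ block $E_{33}=U_0^*C_4-I$ of $E$ is Hermitian. A block computation of $\mathrm{Herm}(E)=\tfrac12(E+E^*)$ then yields $\|\Pi_{\mathrm{Herm}}(E)\|^2=T_1(H)$ exactly, while the projection onto $\mathcal{K}$ contributes the extra term $\|C_1\|^2/2$, giving $\|\Pi_{\mathcal{N}}(E)\|^2=T_2(H)$. Thus the two assertions of the lemma are precisely that $d(\tH,\CU_N)$ and $d(\tH,\CU_N\cap W\mathcal{T})$ agree with $\|\Pi_{\mathrm{Herm}}(E)\|$ and $\|\Pi_{\mathcal{N}}(E)\|$ up to an $O(\epsilon^2)$ error, and the whole lemma reduces to the standard fact that the distance to a smooth submanifold through $I_N$ equals the norm of the normal projection up to second order, made quantitative.

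For the upper bounds I would exhibit explicit nearby points. For $\CU_N$, take $Q=\exp(S)$ with $S=\tfrac12(E-E^*)$; then $\tH-Q=\mathrm{Herm}(E)-(\exp(S)-I-S)$ and $\|\exp(S)-I-S\|\le\tfrac{e}{2}\|S\|^2\le\tfrac{e}{2}\epsilon^2$, giving $d(\tH,\CU_N)\le T_1(H)^{1/2}+9\epsilon^2$. For $\CU_N\cap W\mathcal{T}$ the candidate must be unitary \emph{and} have vanishing $(1,2)$ block, so I would use the local parametrization of $\CU_N\cap W\mathcal{T}$ near $I_N$ coming (in the $\mathcal{T}$ frame) from the map $\xi$ of Proposition~\ref{prop:T} as a second-order retraction of the tangent component $(\mathrm{id}-\Pi_{\mathcal{N}})(E)$, controlling the remainder by a constant multiple of $\epsilon^2$. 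For the lower bounds I argue against an arbitrary competitor: since $I_N$ lies on each manifold, a nearest point $Q$ obeys $\|\tH-Q\|\le\|E\|\le\epsilon$, hence $\|Q-I_N\|\le2\epsilon$ and $Q=\exp(Z)$ with $Z$ skew-Hermitian, $\|Z\|\le3\epsilon$, and $Q-I_N=Z+R$ with $\|R\|\le5\epsilon^2$. Splitting $\tH-Q=E-(Q-I_N)$ orthogonally along normal and tangent spaces gives $\|\tH-Q\|\ge\|\Pi_{\mathcal{N}}(E)\|-\|\Pi_{\mathcal{N}}(Q-I_N)\|$; for $\CU_N$ the normal part of $Q-I_N$ is $\mathrm{Herm}(R)=O(\epsilon^2)$, while for $\CU_N\cap W\mathcal{T}$ the constraint $Q_{21}=0$ forces $Z_{21}=-R_{21}$, hence $\|Z_{12}\|=\|Z_{21}\|=O(\epsilon^2)$, so the $\mathcal{K}$-component of $Q-I_N$ is also $O(\epsilon^2)$ and $\|\Pi_{\mathcal{N}}(Q-I_N)\|=O(\epsilon^2)$.

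\textbf{Main obstacle.} The delicate step is the upper bound for the constrained manifold $\CU_N\cap W\mathcal{T}$: unlike for $\CU_N$, exponentiating a tangent vector need not preserve the vanishing of the $(1,2)$ block, so one must build an explicit nearby point on the manifold and verify its second-order accuracy with the explicit constant $30$. The remaining estimates are routine: the constants $9$ and $30$ are deliberately generous, leaving ample room to absorb all higher-order remainders, and the inequalities $T_2(H)\ge T_1(H)\ge\|\tH-I_N\|^2/2$ of \eqref{eq:t1t2d}, together with the complementary bound $T_1(H),T_2(H)\le\|\tH-I_N\|^2\le\epsilon^2$, ensure that each lower bound is non-trivial only when the corresponding normal projection genuinely dominates the $O(\epsilon^2)$ error.
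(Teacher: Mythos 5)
Your proposal is correct and follows essentially the same route as the paper: both arguments linearize $\CU_N$ (and its intersection with the zero-block constraint) at $I_N$ via the matrix exponential, identify $T_1(H)^{1/2}$ and $T_2(H)^{1/2}$ as the distances from $\tH$ to the corresponding affine tangent spaces --- your normal-projection identities $\|\Pi_{\mathrm{Herm}}(E)\|^2=T_1(H)$ and $\|\Pi_{\mathcal N}(E)\|^2=T_2(H)$, resting on the polar-decomposition fact that $U_0^*C_4$ is Hermitian, are exactly the paper's explicit least-squares minimization of $\|\tH-(I+R)\|$ over skew-Hermitian $R$ (with $R_2=0$ for the constrained case) --- and both absorb the remainder $e^R-I-R$ into the $O(\epsilon^2)$ terms, including the same device of using the vanishing-block constraint on the nearest point to force the $(1,2)/(2,1)$ components of its logarithm to be $O(\epsilon^2)$. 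The one step you flag as delicate, exhibiting an actual point of $\CU_N\cap\mathcal{T}$ within $T_2(H)^{1/2}+O(\epsilon^2)$ of $H$, is also the only place where the paper's own write-up is terse, so your plan to use the parametrization $\xi$ of Proposition~\ref{prop:T} as a second-order retraction there is a reasonable completion rather than a genuinely different method.
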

\begin{IEEEproof}
We will use the concept of normal coordinates (see for example \cite[p. 14]{Gray04}). Consider the exponential mapping in $\CU_N$, which is given by the matrix exponential
\[
\begin{matrix}
T_I\CU_N=\{R\in\mathcal{M}_N(\C):R+R^*=0\}&\rightarrow&\CU_N\\
R&\mapsto&e^R=I+R+\sum_{k\geq2}\frac{R^k}{k!},
\end{matrix}
\]
which is an isometry from a neighborhood of $0\in T_I\CU_N$ to a neighborhood of $I\in\CU_N$ and defines the normal coordinates. Thus, for sufficiently small $\epsilon_1>0$ there exists $\epsilon_0>0$ such that if $U\in\CU_N$, $\|U-I\|<\epsilon_0$  then there exists a skew--symmetric matrix $R$ such that
\[
U=e^R,\quad \|R\|=d_{\CU_N}(U,I),\quad \|R\|\leq\epsilon_1.
\]
%Moreover,
%\[
%d_{\CU_N}(I,e^R)=\|R\|\leq\epsilon_2
%\]

%
%satisfies $d_{\CU_N}(I_N,e^R)=\|R\|$ (Riemannian distance in $\CU_N$) for small enough $\epsilon_1>0$ and that its image contains a neighborhood $I_{\epsilon_2}=\{U\in\CU_N:d_{\CU_N}(U,I)<\epsilon_2\}$ of $I$ in $\CU_N$. Let $\epsilon_0$ be small enough for guaranteeing that the closest point $U\in\CU_N$ to $\tilde{H}$ with $\|\tilde{H}-I\|\leq\epsilon_0$ satisfies $U\in I_{\epsilon_2}$.
Let $R\in\mathcal{M}_N(\C)$ be a skew--Hermitian matrix such that
\[
\|\tH-e^R\|=d(\tH,\CU_N)=\delta\leq\epsilon,\quad  \|R\|=d_{\CU_N}(e^R,I), \quad \|R\|\leq\epsilon_1.
\]
%Note that
%\[
%\|R\|=d_{\CU_N}(I_N,e^R)=d_{\CU_N}(I_N,U)
%\]
%\[
%\|R\|=d(I_N,e^R)\leq \|I_N-\tH\|+\|\tH-e^R\|\leq\epsilon+\delta\leq2\epsilon.
%\]
Let $S=\sum_{k\geq2}R^k/k!$. Then, $e^R=I+R+S$ and
\[
\|S\|\leq \sum_{k\geq2}\frac{\|R\|^k}{k!}\leq\|R\|^2.
\]
If we denote $a=\|e^R-I\|=\|R+S\|$ and $b=d_{\CU_N}(e^R,I)=\|R\|$, we have proved that
\[
b-2b^2\leq a\leq b+2b^2.
\]
Assuming that $\epsilon_1<1/3$ (so $b<1/3$) and doing some arithmetic, this implies
\[
a+6a^2\geq b,\quad\text{that is}\quad \|R\|\leq \|e^R-I\|+6\|e^R-I\|^2.
\]
Now,
\[
\|e^R-I\|\leq\|e^R-\tilde{H}\|+\|\tilde{H}-I\|\leq 2\epsilon,
\]
which implies
\[
\|R\|\leq 2\epsilon+24\epsilon^2\leq 3\epsilon.
\]
In particular, $\|S\|\leq 9\epsilon^2$.
%\[
%\|S\|\leq \sum_{k\geq2}\frac{\|R\|^k}{k!}\leq\sum_{k\geq2}\frac{(2\epsilon)^k}{k!} \leq\sum_{k\geq2}\epsilon^k=\frac{\epsilon^2}{1-\epsilon}\leq2\epsilon^2.
%\]
%Thus, we have
%\[
%\|\tH-e^R\|=\|\tH-\sum_{k\geq0} R^k/k!\|\geq\|\tH-(I+R)\|-\sum_{k\geq2}\frac{\|R\|^k}{k!}\geq \|\tH-(I+R)\|-\sum_{k\geq2}\frac{(2\epsilon)^k}{k!}.
%\]
%Using that $2^k\leq k!$ for $k\geq2$,
We conclude:
\[
d(H,\CU_N)=d(\tH,\CU_N)=\|\tH-e^R\|\geq \|\tH-(I+R)\|-\|S\|\geq \|\tH-(I+R)\|-9\epsilon^2.
\]
We now solve the following elementary minimization problem:
\[
\min_{R:R+R^*=0}\|\tH-(I+R)\|.
\]
Let
\[
R=\begin{pmatrix}R_1&R_2&R_3\\-R_2^*&R_5&R_6\\-R_3^*&-R_6^*&R_9\end{pmatrix},\quad R_1+R_1^*=0,R_5+R_5^*=0,R_9+R_9^*=0.
\]
Then, $\|\tH-(I+R)\|$ is minimized when $R_1=0$, $R_5=0$, $R_9=0$ and
\begin{align*}
R_2=&argmin(\|C_1-R_2\|^2+\|R_2\|^2)\\R_3=&argmin(\|C_2-R_3\|^2+\|R_3\|^2)\\R_6=&argmin(\|U_0^*C_3+R_6^*\|^2+\|R_6\|^2).
\end{align*}
It is easily seen that the solutions to these problems are:
\begin{align*}
R_2=&\frac{C_1}{2}\rightarrow\|C_1-R_2\|^2+\|R_2\|^2=\frac{\|C_1\|^2}{2},\\R_3=&\frac{C_2}{2}\rightarrow\|C_2-R_3\|^2+\|R_3\|^2=\frac{\|C_2\|^2}{2}\\R_6=&-\frac{C_3^*U_0}{2}\rightarrow\|U_0^*C_3+R_6^*\|^2+\|R_6\|^2=\frac{\|C_3\|^2}{2}.
\end{align*}
We have then proved
\[
\min_{R:R+R^*=0}\|\tH-(I+R)\|=T_1(\tH)^{1/2},
\]
and the minimum is reached at
\begin{equation}\label{eq:R}
R=\begin{pmatrix}0&C_1/2&C_2/2\\-C_1^*/2&0&-C_3U_0^*/2\\-C_2^*/2&U_0C_3^*/2&0\end{pmatrix}
\end{equation}
Hence,
\[
d(H,\CU_N)\geq T_1(\tH)^{1/2}-9\epsilon^2,
\]
and the first lower bound claimed in the lemma follows. For the upper bound let $R$ be defined by (\ref{eq:R}) and note that (following a similar reasoning to the one above)
\[
d(H,\CU_N)=d(\tH,\CU_N)\leq \|\tH-e^R\|\leq \|\tilde{H}-(I+R)\|+\sum_{k\geq2}\frac{\|R\|^k}{k!}=T_1(\tH)^{1/2}+\sum_{k\geq2}\frac{\left(\frac{\|C_1\|^2+\|C_2\|^2+\|C_3\|^2}{2}\right)^{k/2}}{k!}.
\]
Now, $\|\tH-I_N\|\leq\epsilon$ in particular implies $\|C_1\|^2+\|C_2\|^2+\|C_3\|^2\leq\epsilon^2$ and then we have
\[
d(H,\CU_N)\leq T_1(\tH)^{1/2}+\sum_{k\geq2}\frac{\left(\frac{\epsilon^2}{2}\right)^{k/2}}{k!}\leq T_1(\tH)^{1/2}+2\epsilon^2,
\]
as wanted. Now, for the second claim of the lemma, the same argument is used but now $R$ is such that $e^R$ minimizes $\|\tH-e^R\|$ and
\[
e^R=\begin{pmatrix}*&*&*\\0&*&*\\ * &*&*\end{pmatrix}.
\]
Now, from the equality
\[
I+R=e^R-S,
\]
and arguing as above we have that
\[
\|R_2\|\leq \|S\|\leq 9\epsilon^2,\text{ which implies }\|R-\tilde{R}\|=\sqrt{2}\|R_2\|\leq20\epsilon^2,
\]
where we denote by $\tilde{R}$ the matrix resulting from letting $R_2=0$. Thus,
\[
|\|\tH-e^R\|-\|\tH-(I+\tilde{R})\||\leq| \|\tH-I-R\|-\|\tH-I-\tilde{R}\||+\|S\|\leq\|\tilde{R}-R\|+9\epsilon^2\leq30\epsilon^2.
\]
%Hence, using that $\|\tilde{R}\|\leq\|R\|$,
%\[
%|\|\tH-e^R\|-\|\tH-e^{\tilde{R}}\||\leq |\|\tH-e^R\|-\|\tH-(I+\tilde{R})\||+\sum_{k\geq2}\frac{\|\tilde{R}\|^k}{k!}\leq6\epsilon^2.
%\]
We have then proved
\[
\left|d\left(H,\CU_N \cap\mathcal{T}\right)-\min_{R:R+R^*=0,R_2=0}\|\tH-(I+R)\|\right|\leq 30\epsilon^2,
\]
and as before we can easily see that the minimum is reached when $R_1=0$, $R_2=0$, $R_5=0$, $R_9=0$, $R_3=C_2/2$ and $R_6=C_3^*U_0/2$ which proves that
\[
\min_{R:R+R^*=0,R_2=0}\|\tH-(I+R)\|=T_2(H).
\]
This finishes the proof of the lemma.
\end{IEEEproof}
\emph{Proof of Proposition \ref{prop:TtH}}

Let $E$ be a matrix such that $\|E\|\leq\epsilon<1$ and $H=U+E$ for some unitary matrix $U$. Then,
\[
\|HH^*-I\|=\|UU^*+UE^*+EU^*+EE^*-I\|=\|UE^*+EU^*+EE^*\|\leq2\epsilon+\epsilon^2\leq 3\epsilon.
\]
On the other hand,
\[
HH^*-I=\begin{pmatrix}\sigma^2-I&\sigma C_1^*& \sigma C_3^*\\C_1\sigma&X&X\\C_3\sigma&X&C_3C_3^*+C_4C_4^*-I\end{pmatrix},
\]
where the entries $X$ are terms which we do not need to compute. In particular, we have $\|C_1\sigma\|\leq 3\epsilon$ and
\begin{equation}\label{eq:xx1}
\|\sigma^2-I\|\leq 3\epsilon,
\end{equation}
which implies $\|\sigma^{-2}\|=\|\sigma^{-2}-I+I\|\leq \sqrt{d}+4\epsilon$ and hence
\[
\|C_1\|=\|C_1\sigma\sigma^{-1}\|\leq \|C_1\sigma\|\|\sigma^{-1}\|\leq3\epsilon\sqrt{\sqrt{d}+3\epsilon}\leq 4\sqrt{d}\epsilon.
\]
A similar argument works for $C_3$ as well, and using a symmetric argument for $H^*H$ we get the same bound for $C_2$ and an equivalent bound for $\alpha$ to that of (\ref{eq:xx1}). Summarizing these bounds, we have:
\begin{equation}\label{eq:xx2}
\|C_1\|^2+\|C_2\|^2+\|C_3\|^2\leq 48d\epsilon^2
\end{equation}
Moreover, we also have
\[
\|C_4C_4^*-I\|\leq\|C_3C_3^*\|+\|C_3C_3^*+C_4C_4^*-I\|\leq 16d\epsilon^2+4\epsilon\leq20 d\epsilon,
\]
which implies
\[
\sum_{j=0}^{N-d}(\beta_j^2-1)^2=\|C_4C_4^*-I\|^2\leq 400d^2\epsilon^2
\]
 where the $\beta_j$ are the singular values of $C_4$. In particular,
\[
\|U_0^*C_4-I_{N-d}\|^2=d(C_4,U_{N-d})^2=\sum_{j=1}^{N-d}(\beta_j-1)^2\leq \sum_{j=1}^{N-d}(\beta_j-1)^2(\beta_j+1)^2=\sum_{j=1}^{N-d}(\beta_j^2-1)^2\leq 400d^2\epsilon^2,
\]
and we conclude that
\begin{equation}\label{eq:xx3}
\|U_0^*C_4-I_{N-d}\|\leq 20d\epsilon.
\end{equation}
Using (\ref{eq:xx1}), (\ref{eq:xx2}) and (\ref{eq:xx3}) above we get:
\[
\|\tH-I_N\|^2=\|\sigma-I_d\|^2+\|\alpha-I_d\|^2+\|C_1\|^2+\|C_2\|^2+\|C_3\|^2+ \|U_0^*C_4-I_{N-d}\|^2\leq c(d)^2\epsilon^2,
\]
where $c(d)$ depends only on $d$. Let $\epsilon$ be small enough for $c(d)\epsilon$ to satisfy the hypotheses of Lemma \ref{lem:TtH}. The Proposition \ref{prop:TtH} follows from applying that lemma.
\subsubsection{How the sets of closeby matrices to $\CU_N$ and $\CU_N\cap\mathcal{T}$ compare}
Our main result in this section is the following.
\begin{proposition}\label{prop:applyfubini}
Let $\alpha>1$. For sufficiently small $\epsilon>0$, we have:
\[
2^{d^2}Vol\left(H\in\mathcal{T}:d(H,\CU_N\cap \mathcal{T})\leq\frac{\epsilon}{\alpha }\right)\leq
\]
\[
Vol(H\in\mathcal{T}:d(H,\CU_N)\leq\epsilon)\leq
\]
\[
2^{d^2}Vol(H\in\mathcal{T}:d(H,\CU_N\cap \mathcal{T})\leq\alpha \epsilon)
\]
\end{proposition}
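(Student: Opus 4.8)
The plan is to reduce the comparison of the two genuine distance tubes to a comparison of the sublevel sets of the explicit functions $T_1$ and $T_2$ introduced above, extract the constant $2^{d^2}$ from an elementary integration, and finally absorb the $O(\epsilon^2)$ gap between the true distances and $T_1^{1/2},T_2^{1/2}$ into the multiplicative slack $\alpha>1$. This last point explains why the statement carries $\alpha\epsilon$ and $\epsilon/\alpha$ in place of $\epsilon$: the linear gaps $(\alpha-1)\epsilon$ and $(1-\alpha^{-1})\epsilon$ dominate the quadratic errors once $\epsilon$ is small.

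The heart of the argument is the exact identity
\[
Vol\left(H\in\mathcal{T}:T_1(H)\le r^2\right)=2^{d^2}\,Vol\left(H\in\mathcal{T}:T_2(H)\le r^2\right),\qquad r>0.
\]
It holds because, by the very definitions of $T_1$ and $T_2$, one has $T_2=T_1+\tfrac12\|C_1\|^2$, both functions share a common summand that does not involve the block $C_1$, and they depend on $C_1$ only through $\|C_1\|^2$; moreover $C_1$ runs over the $d\times d$ complex matrices, a Euclidean factor of real dimension $2d^2$ orthogonal to the remaining coordinates. Slicing each sublevel set along the $C_1$-factor (Fubini), the fiber over a fixed value of the other coordinates is a Euclidean ball whose radius for $\{T_1\le r^2\}$ is $\sqrt2$ times its radius for $\{T_2\le r^2\}$; as these balls sit in $\R^{2d^2}$, their volumes differ by $(\sqrt2)^{2d^2}=2^{d^2}$, a ratio preserved upon integrating out the remaining coordinates.

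With this identity in hand I would close the two inequalities by sandwiching with Proposition \ref{prop:TtH}. For the upper bound, $d(H,\CU_N)\le\epsilon$ forces $T_1(H)^{1/2}\le\epsilon+O(\epsilon^2)$, so the tube around $\CU_N$ sits inside $\{T_1\le(\epsilon+O(\epsilon^2))^2\}$; the identity replaces this by $2^{d^2}$ times the volume of $\{T_2\le(\epsilon+O(\epsilon^2))^2\}$; and on the latter set, after checking via $d(H,\CU_N)\le\sqrt2\,T_1^{1/2}\le\sqrt2\,T_2^{1/2}$ from (\ref{eq:t1t2d}) that Proposition \ref{prop:TtH} still applies, one gets $d(H,\CU_N\cap\mathcal{T})\le T_2^{1/2}+O(\epsilon^2)\le\alpha\epsilon$. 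The lower bound is the mirror image: since $\CU_N\cap\mathcal{T}\subseteq\CU_N$ we have $d(H,\CU_N)\le d(H,\CU_N\cap\mathcal{T})$, so $d(H,\CU_N\cap\mathcal{T})\le\epsilon/\alpha$ yields $T_2^{1/2}\le\epsilon/\alpha+O(\epsilon^2)$, the identity supplies the factor $2^{d^2}$ in the opposite direction, and $\alpha^{-1}<1$ absorbs the quadratic error to give $d(H,\CU_N)\le\epsilon$.

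The step I expect to be the real obstacle is making the volume identity rigorous, because the block $C_1$ is defined only after reducing the $A$- and $B$-blocks to the normal form (\ref{eq:H}) by singular-value decompositions; hence $C_1$ is not a fixed linear coordinate on $\mathcal{T}$ and the naive Fubini does not literally apply on all of $\mathcal{T}$. I would circumvent this by using that $d(\cdot,\CU_N)$, $d(\cdot,\CU_N\cap\mathcal{T})$, and therefore $T_1$ and $T_2$, are invariant under the isometric action of block-diagonal unitaries (left and right multiplication by unitaries of sizes $d$ and $N-d$), for which the normal form (\ref{eq:H}) is a transversal slice. Integrating first along the group orbits through the coarea formula produces a Jacobian that depends only on the singular values $\alpha,\sigma$ and not on $C_1$, so it is common to the $T_1$- and $T_2$-computations and cancels in the ratio; on the slice $C_1$ is an honest Euclidean coordinate and the clean Fubini above applies, delivering exactly the factor $2^{d^2}$.
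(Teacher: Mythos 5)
Your proposal is correct and follows essentially the same route as the paper: the exact factor $2^{d^2}$ comes from the $\sqrt{2}$-dilation of the $C_1$ block relating the sublevel sets of $T_1$ and $T_2$ (the paper's Lemma~\ref{lem:fixABvolC}), and the two inclusions are obtained by sandwiching the true distances between $T_1^{1/2}$ and $T_2^{1/2}$ via Proposition~\ref{prop:TtH}, with the multiplicative slack $\alpha>1$ absorbing the $O(\epsilon^2)$ errors. The only difference is that the ``obstacle'' you flag is handled more simply in the paper: it applies Fubini over $(A,B)$ first and, for each fixed $(A,B)$, reduces to the normal form~(\ref{eq:H}) by the measure-preserving change of variables $C\mapsto U_B^*CV_A$ coming from the SVDs of $A$ and $B$ (Lemma~\ref{lem:fixABvolC2}), so no coarea-along-orbits argument or Jacobian cancellation is needed.
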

Before the proof we state two technical lemmas.
\begin{lemma}\label{lem:fixABvolC}
Let $\sigma,\alpha$ be as in (\ref{eq:H}). Then,
\[
Vol\left(C:T_1\begin{pmatrix}0&\sigma&0\\\alpha&C_1&C_2\\ 0& C_3&C_4\end{pmatrix}\leq\epsilon\right)=2^{d^2} Vol\left(C:T_2\begin{pmatrix}0&\sigma&0\\\alpha&C_1&C_2\\ 0& C_3&C_4\end{pmatrix}\leq\epsilon\right).
\]
\end{lemma}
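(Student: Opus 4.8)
The plan is to exploit the fact that $T_1$ and $T_2$ are identical except for the weight attached to the block $C_1$. Reading off the definitions, we may write
\[
T_1(H)=W+\frac{\|C_1\|^2}{2},\qquad T_2(H)=W+\|C_1\|^2,
\]
where
\[
W=\|\alpha-I_d\|^2+\|\sigma-I_d\|^2+\|C_4-U_0\|^2+\frac{\|C_2\|^2+\|C_3\|^2}{2}.
\]
The first thing I would record is that $W$ does not involve $C_1$ at all: $\alpha$ and $\sigma$ are fixed as in (\ref{eq:H}), and $U_0$, being a minimizer of the distance from $C_4$ to $\CU_{N-2d}$, is a function of $C_4$ only. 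Hence $W$ depends on $(C_2,C_3,C_4)$ alone, and the only distinction between the two sublevel sets is the coefficient of $\|C_1\|^2$.

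Next I would note that in the block decomposition (\ref{eq:H}) the matrix $C_1$ is $d\times d$ complex, so it ranges over $\mathcal{M}_{d\times d}(\C)$, a space of real dimension $2d^2$, with $\|C_1\|$ the associated Euclidean (Frobenius) norm. Because the integration variable $C=(C_1,C_2,C_3,C_4)$ carries the product Lebesgue measure, I can apply Fubini and integrate first over $C_1$ with $(C_2,C_3,C_4)$ held fixed.

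For fixed $(C_2,C_3,C_4)$ the $C_1$-slice of $\{T_1\le\epsilon\}$ is the Euclidean ball $\|C_1\|\le\sqrt{2(\epsilon-W)}$, while that of $\{T_2\le\epsilon\}$ is the ball $\|C_1\|\le\sqrt{\epsilon-W}$ (both empty when $W>\epsilon$). Since the volume of a ball of radius $r$ in $\R^{2d^2}$ is proportional to $r^{2d^2}$, the two slice-volumes differ by the constant ratio $(\sqrt{2})^{2d^2}=2^{d^2}$, independently of $(C_2,C_3,C_4)$. Integrating this pointwise identity over $(C_2,C_3,C_4)$ then yields the claimed equality. Equivalently, and more directly, the linear change of variables $C_1\mapsto\sqrt{2}\,C_1$ (leaving $C_2,C_3,C_4$ fixed) carries $\{T_2\le\epsilon\}$ bijectively onto $\{T_1\le\epsilon\}$ and has constant Jacobian $(\sqrt{2})^{2d^2}=2^{d^2}$, producing the factor in one stroke.

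There is essentially no serious obstacle here beyond careful bookkeeping. The only two points demanding attention are that $U_0$ is a function of $C_4$ only, so that $W$ is genuinely free of $C_1$, and that the \emph{real} dimension of the block $C_1$ is $2d^2$ rather than $d^2$; the latter is precisely what turns the radius rescaling by $\sqrt{2}$ into the exponent $d^2$ in the factor $2^{d^2}$.
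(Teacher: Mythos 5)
Your proof is correct and, in its second formulation (the change of variables $C_1\mapsto\sqrt{2}\,C_1$ with Jacobian $(\sqrt{2})^{2d^2}=2^{d^2}$), is exactly the argument the paper gives. The preliminary Fubini/ball-slice computation and the explicit check that $U_0$ depends only on $C_4$ are harmless extra bookkeeping on top of the same idea.
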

\begin{IEEEproof}
Let
\[
S_i(C)=T_i\begin{pmatrix}0&A\\B&C\end{pmatrix},\quad i=1,2,
\]
where $A=(\sigma\;0)$ and $B^T=(\alpha\;0)$. The claim of the lemma is that
\[
Vol(C:S_1(C)\leq\epsilon)=2^{d^2}Vol(C:S_2(C)\leq\epsilon).
\]
Indeed, consider the mapping
\[
\varphi\begin{pmatrix}C_1&C_2\\ C_3&C_4\end{pmatrix}= \begin{pmatrix}\sqrt{2}C_1&C_2\\ C_3&C_4\end{pmatrix},
\]
which has Jacobian equal to $\sqrt{2}^{2d^2}=2^{d^2}$. The change of variables theorem yields:
\[
2^{d^2}Vol(C:S_1(\varphi(C))\leq\epsilon)=Vol(\varphi(C):S_1(\varphi(C))\leq\epsilon) =Vol(C:S_1(C)\leq\epsilon).
\]
The lemma follows from the fact that $S_1(\varphi(C))=S_2(C)$.
\end{IEEEproof}
\begin{lemma}\label{lem:fixABvolC2}
Let $\alpha>1$ and let $A,B$ be complex matrices of respective sizes $d\times(N-d)$ and $(N-d)\times d$. Then, for sufficiently small $\epsilon>0$ we have
\[
2^{d^2} Vol\left(C:d\left(\begin{pmatrix}0&A\\B&C\end{pmatrix},\CU_N\cap\mathcal{T}\right)\leq\frac{\epsilon}{\alpha}\right)\leq
\]
\[
Vol\left(C:d\left(\begin{pmatrix}0&A\\B&C\end{pmatrix},\CU_N\right)\leq\epsilon\right)\leq
\]
\[
2^{d^2} Vol\left(C:d\left(\begin{pmatrix}0&A\\B&C\end{pmatrix},\CU_N\cap\mathcal{T}\right)\leq\alpha\epsilon\right).
\]
\end{lemma}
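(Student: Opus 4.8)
The plan is to first put $A,B$ into the diagonal normal form of (\ref{eq:H}), then trade the two distance functions for the explicit quantities $T_1,T_2$ via Proposition \ref{prop:TtH}, and finally combine the exact volume identity of Lemma \ref{lem:fixABvolC} with the slack $\alpha>1$ to absorb the $O(\epsilon^2)$ errors. Throughout, $\alpha$ denotes the scalar $>1$ of the statement; the diagonal block called $\alpha$ in (\ref{eq:H}) I rename $\beta$ to avoid a collision.

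First I would reduce to the normal form. Taking singular value decompositions $A=P_A(\sigma\ 0)W_A^*$ and $B=W_B\binom{\beta}{0}P_B^*$ with $P_A,P_B\in\CU_d$, $W_A,W_B\in\CU_{N-d}$ and $\sigma,\beta$ diagonal positive, I left--multiply $H=\begin{pmatrix}0&A\\B&C\end{pmatrix}$ by $\mathrm{diag}(P_A^*,W_B^*)$ and right--multiply by $\mathrm{diag}(P_B,W_A)$. This keeps the top--left $d\times d$ block equal to $0$ (so it preserves $\mathcal{T}$), maps $\CU_N$ onto $\CU_N$, replaces $A$ by $(\sigma\ 0)$ and $B$ by $\binom{\beta}{0}$, and acts on the variable block by the isometry $C\mapsto W_B^*CW_A$. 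Being a two--sided unitary multiplication, it is a Frobenius isometry of $\mathcal{M}_N(\C)$ fixing both $\CU_N$ and $\CU_N\cap\mathcal{T}$ setwise, hence it preserves the two distances; and since $C\mapsto W_B^*CW_A$ is volume--preserving, all three volumes in the statement are unchanged. Thus I may assume $H$ has the form (\ref{eq:H}), so that $T_1,T_2$ are defined.

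Next I would translate distances into $T_1,T_2$. On $\{C:d(H,\CU_N)\le\epsilon\}$ Proposition \ref{prop:TtH} gives $d(H,\CU_N)=T_1(H)^{1/2}+O(\epsilon^2)$, and, since $\CU_N\cap\mathcal{T}\subseteq\CU_N$ forces $d(H,\CU_N)\le d(H,\CU_N\cap\mathcal{T})$, the same proposition applies on $\{C:d(H,\CU_N\cap\mathcal{T})\le\rho\}$ and gives $d(H,\CU_N\cap\mathcal{T})=T_2(H)^{1/2}+O(\epsilon^2)$. The converse implications needed below — that smallness of $T_i(H)^{1/2}$ forces the corresponding distance to be small with the same $O(\epsilon^2)$ control — are exactly Lemma \ref{lem:TtH}, whose hypothesis $\|\tH-I_N\|\le\epsilon'$ is met because (\ref{eq:t1t2d}) yields $\|\tH-I_N\|\le\sqrt2\,T_i(H)^{1/2}$. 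For the upper bound I then chain
\[
\{C:d(H,\CU_N)\le\epsilon\}\subseteq\{C:T_1(H)^{1/2}\le\epsilon+c\epsilon^2\},
\]
apply Lemma \ref{lem:fixABvolC} in the form $Vol\{T_1^{1/2}\le\tau\}=2^{d^2}Vol\{T_2^{1/2}\le\tau\}$ with $\tau=\epsilon+c\epsilon^2$, and close with $\{T_2^{1/2}\le\epsilon+c\epsilon^2\}\subseteq\{d(H,\CU_N\cap\mathcal{T})\le\alpha\epsilon\}$, which holds once $\epsilon+c\epsilon^2\le\alpha\epsilon$. The lower bound is the mirror image: from $\{d(H,\CU_N\cap\mathcal{T})\le\epsilon/\alpha\}\subseteq\{T_2^{1/2}\le\epsilon/\alpha+c\epsilon^2\}$ I apply Lemma \ref{lem:fixABvolC} again and finish with $\{T_1^{1/2}\le\epsilon/\alpha+c\epsilon^2\}\subseteq\{d(H,\CU_N)\le\epsilon\}$, valid once $\epsilon/\alpha+c\epsilon^2\le\epsilon$.

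The main obstacle is the uniform bookkeeping of the $O(\epsilon^2)$ boundary errors and verifying that the fixed gap $\alpha>1$ dominates them: both $\epsilon+c\epsilon^2\le\alpha\epsilon$ and $\epsilon/\alpha+c\epsilon^2\le\epsilon$ hold only below an $\epsilon$--threshold depending on $\alpha$ and on $d,\sigma,\beta$, which is precisely why the statement asks for ``sufficiently small $\epsilon$'' together with $\alpha>1$. If $\sigma$ or $\beta$ is far from $I_d$, all three neighborhoods are empty for small $\epsilon$ and the inequalities hold trivially. A secondary point to check is that the normal--form reduction is an honest isometry, so that it leaves $T_1,T_2$ and all the relevant neighborhoods genuinely invariant rather than merely reparametrizing them.
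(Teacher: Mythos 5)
Your proposal is correct and follows essentially the same route as the paper: reduce to the normal form (\ref{eq:H}) by singular value decompositions and unitary invariance, trade the two distances for $T_1^{1/2}$ and $T_2^{1/2}$ via Proposition \ref{prop:TtH}, apply the exact $2^{d^2}$ volume identity of Lemma \ref{lem:fixABvolC}, and let the slack $\alpha>1$ absorb the $O(\epsilon^2)$ boundary errors for sufficiently small $\epsilon$. Your extra care about the notational collision with the diagonal block $\alpha$ and about verifying the hypotheses of Lemma \ref{lem:TtH} via (\ref{eq:t1t2d}) is a welcome tightening of details the paper leaves implicit, but it does not change the argument.
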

\begin{IEEEproof}
Let $U_A,V_A,U_B,V_B$ be such that
\[
A=U_A(\sigma\;0)V_A^*,\quad B=U_B\binom{\alpha}{0}V_B^*
\]
are singular value decompositions of $A$ and $B$ respectively. Then,
\[
Vol\left(C:d\left(\begin{pmatrix}0&A\\B&C\end{pmatrix},\CU_N\right)\leq\epsilon\right)= Vol\left(C:d\left(\begin{pmatrix}U_A^*&0\\0&U_B^*\end{pmatrix} \begin{pmatrix}0&A\\B&C\end{pmatrix}\begin{pmatrix}V_B&0\\0&V_A\end{pmatrix},\CU_N\right)\leq\epsilon\right)=
\]
\[
Vol\left(C:d\left(\begin{pmatrix}0&(\sigma\;0)\\\binom{\alpha}{0}&U_BCV_A^*\end{pmatrix},\CU_N\right)\leq\epsilon\right)= Vol\left(C:d\left(\begin{pmatrix}0&(\sigma\;0)\\\binom{\alpha}{0}&C\end{pmatrix},\CU_N\right)\leq\epsilon\right),
\]
where the last inequality follows from unitary invariance of the volume. Let $H$ be as in (\ref{eq:H}). From Proposition \ref{prop:TtH}, we conclude:
\[
 Vol\left(C:d(H,\CU_N)\leq\epsilon\right)\leq Vol(C:T_1(H)^{1/2}\leq\epsilon+c(d)\epsilon^2)=
\]
\[
Vol(C:T_1(H)\leq(\epsilon+c(d)\epsilon^2)^2)\underset{\text{Lemma \ref{lem:fixABvolC}}}{=} 2^{d^2}Vol(C:T_2(H)\leq(\epsilon+c(d)\epsilon^2)^2).
\]
From (\ref{eq:t1t2d}), for sufficiently small $\epsilon>0$, $T_2(H)\leq(\epsilon+c(d)\epsilon^2)^2$ implies $d(H,\CU_N)$ is as small as wanted. Hence, from Proposition \ref{prop:TtH}, for sufficiently small $\epsilon>0$ we have
\[
Vol(C:T_2(H)\leq(\epsilon+c(d)\epsilon^2)^2)= Vol(C:T_2(H)^{1/2}\leq\epsilon+c(d)\epsilon^2)\leq
\]
\[
Vol\left(C:d\left(H,\CU_N \cap\mathcal{T}\right)\leq\epsilon+2c(d)\epsilon^2 \right).
\]
In particular, for every $\alpha>1$ and for sufficiently small $\epsilon>0$ we have proved that
\[
 Vol\left(C:d(H,\CU_N)\leq\epsilon\right)\leq 2^{d^2} Vol\left(C:d\left(H,\CU_N \cap\mathcal{T}\right)\leq\alpha\epsilon \right).
\]
This proves the upper bound of the lemma. The lower bound is proved with a symmetric argument, using the opposite inequalities of Proposition \ref{prop:TtH}.

\end{IEEEproof}
\emph{Proof of Proposition \ref{prop:applyfubini}}
Let $\alpha>1$. From Fubini's Theorem,
\[
Vol(H\in\mathcal{T}:d(H,\CU_N)\leq\epsilon)=\int_{A\in\mathcal{M}_{d\times(N-d)}(\C),B\in\mathcal{M}_{(N-d)\times d}(\C)} Vol(C:d(H,\CU_N)\leq\epsilon)\,d(A,B).
\]
From Lemma \ref{lem:fixABvolC2}, for sufficiently small $\epsilon>0$ this is at most
\[
\int_{A\in\mathcal{M}_{d\times(N-d)}(\C),B\in\mathcal{M}_{(N-d)\times d}(\C)} 2^{d^2} Vol(C:d(H,\CU_N\cap\mathcal{T})\leq\alpha\epsilon)\,d(A,B).
\]
Again from Fubini's Theorem, this last equals
\[
2^{d^2}Vol(H:d(H,\CU_N\cap\mathcal{T})\leq\alpha\epsilon),
\]
proving the upper bound of the proposition. The lower bound follows from a symmetrical argument.

\subsubsection{Proof of Proposition \ref{prop:comparevolumes}}
Let $\alpha>1$. From Proposition \ref{prop:applyfubini}, we have
\[
\lim_{\epsilon\rightarrow0}\frac{Vol(H\in\mathcal{T}:d(H,\CU_N)\leq\epsilon)}{\epsilon^{N^2}}\leq 2^{d^2}\lim_{\epsilon\rightarrow0}\frac{Vol(H\in\mathcal{T}:d(H,\CU_N\cap\mathcal{T})\leq\alpha\epsilon)}{\epsilon^{N^2}}.
\]
Note that $N^2$ is the (real) codimension of $\CU_N\cap\mathcal{T}$ inside $\mathcal{T}$.
Thus, from Theorem \ref{th:gray},
\[
\lim_{\epsilon\rightarrow0}\frac{Vol(H\in\mathcal{T}:d(H,\CU_N\cap\mathcal{T})\leq\alpha\epsilon)}{\epsilon^{N^2}}=Vol(\CU_N\cap\mathcal{T})\alpha^{N^2}Vol(x\in\R^{N^2}:\|x\|\leq1).
\]
We have thus proved that for every $\alpha>1$ we have
\[
\lim_{\epsilon\rightarrow0}\frac{Vol(H\in\mathcal{T}:d(H,\CU_N)\leq\epsilon)}{\epsilon^{N^2}}\leq 2^{d^2}Vol(\CU_N\cap\mathcal{T})\alpha^{N^2}Vol(x\in\R^{N^2}:\|x\|\leq1).
\]
This implies:
\[
\lim_{\epsilon\rightarrow0}\frac{Vol(H\in\mathcal{T}:d(H,\CU_N)\leq\epsilon)}{\epsilon^{N^2}}\leq 2^{d^2}Vol(\CU_N\cap\mathcal{T})Vol(x\in\R^{N^2}:\|x\|\leq1).
\]
The reverse inequality is proved the same way using the other inequality of Proposition \ref{prop:applyfubini}.

\subsubsection{Integrals of functions of the subset of matrices in $\mathcal{T}$ which are close to $\CU_N$}
We are now close to the proof of Proposition \ref{prop:limits}, but we still need some preparation. We state two lemmas.
\begin{lemma}\label{lem:almostthere}
Let $\psi:\mathcal{T}\rightarrow[0,\infty)$ be a smooth mapping. Then,
\[
\lim_{\epsilon\rightarrow0}\frac{1}{\epsilon^{N^2}}\int_{H\in\mathcal{T}:d(H,\CU_N)\leq \epsilon}\psi(H)\,dH= 2^{d^2}Vol(\CU_N\cap\mathcal{T})Vol(x\in\R^{N^2}:\|x\|\leq1)
\dashint_{U\in\CU_N\cap\mathcal{T}}\psi(U)\,dU
\]
\end{lemma}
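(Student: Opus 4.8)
The plan is to prove the statement in two reductions: first replace the integrand $\Psi$ by its value at the nearest point of $\CU_N\cap\mathcal{T}$, and then evaluate the resulting integral exactly by exploiting the homogeneity of $\CU_N\cap\mathcal{T}$. To set this up I would first observe that, by Proposition \ref{prop:TtH}, every $H$ with $d(H,\CU_N)\leq\epsilon$ also satisfies $d(H,\CU_N\cap\mathcal{T})\leq c_0\,\epsilon$ for a constant $c_0$ depending only on $d$; hence for $\epsilon$ small the $\epsilon$-tube around $\CU_N$ lies inside a fixed tubular neighbourhood of the compact manifold $\CU_N\cap\mathcal{T}$, on which the nearest-point projection $\pi$ (sending $H$ to the point of $\CU_N\cap\mathcal{T}$ closest to it) is well defined and smooth. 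Since $\Psi$ is smooth it is Lipschitz on a compact neighbourhood of $\CU_N\cap\mathcal{T}$, so that $|\Psi(H)-\Psi(\pi(H))|\leq L\,d(H,\CU_N\cap\mathcal{T})\leq Lc_0\,\epsilon$ there. Using that $Vol(H\in\mathcal{T}:d(H,\CU_N)\leq\epsilon)=O(\epsilon^{N^2})$, which is part of Proposition \ref{prop:comparevolumes}, this gives
\[
\int_{H\in\mathcal{T}:d(H,\CU_N)\leq\epsilon}\Psi(H)\,dH=\int_{H\in\mathcal{T}:d(H,\CU_N)\leq\epsilon}\Psi(\pi(H))\,dH+O(\epsilon^{N^2+1}),
\]
so after dividing by $\epsilon^{N^2}$ the two integrals share the same limit and it suffices to treat $\Psi\circ\pi$.

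For the second reduction I would use the isometric action of $\mathcal{G}=\CU_{N-d}^2$ on $\mathcal{T}$ given by $((U,V),X)\mapsto\operatorname{diag}(I_d,U)\,X\,\operatorname{diag}(I_d,V^*)$, which preserves $\mathcal{T}$, preserves $\CU_N$ and $\CU_N\cap\mathcal{T}$, and acts transitively on $\CU_N\cap\mathcal{T}$ (this is exactly the orbit description of $\CU_N\cap\mathcal{T}$ as $\mathcal{G}/S_J$ established in the proof of Proposition \ref{prop:T}). Because each $g\in\mathcal{G}$ is an isometry fixing $\CU_N$ setwise, it maps the $\epsilon$-tube to itself and satisfies $\pi(gH)=g\,\pi(H)$. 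Consequently the pushforward measure $\mu_\epsilon=\pi_*(\text{Lebesgue measure restricted to the tube})$, i.e.\ $\mu_\epsilon(W)=Vol(H\in\mathcal{T}:d(H,\CU_N)\leq\epsilon,\ \pi(H)\in W)$, is $\mathcal{G}$-invariant. Since $\CU_N\cap\mathcal{T}$ is the homogeneous space $\mathcal{G}/S_J$, its Riemannian volume is the unique $\mathcal{G}$-invariant measure up to a scalar, whence $\mu_\epsilon=c(\epsilon)\,Vol$ with
\[
c(\epsilon)=\frac{Vol(H\in\mathcal{T}:d(H,\CU_N)\leq\epsilon)}{Vol(\CU_N\cap\mathcal{T})}.
\]
By the definition of the pushforward this yields the exact identity $\int_{d(H,\CU_N)\leq\epsilon}\Psi(\pi(H))\,dH=c(\epsilon)\int_{\CU_N\cap\mathcal{T}}\Psi\,dU$, with no error term.

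Combining the two steps and invoking Proposition \ref{prop:comparevolumes} to evaluate $\lim_{\epsilon\to0}c(\epsilon)/\epsilon^{N^2}=2^{d^2}\,Vol(x\in\R^{N^2}:\|x\|\leq1)$, I obtain
\[
\lim_{\epsilon\to0}\frac{1}{\epsilon^{N^2}}\int_{H\in\mathcal{T}:d(H,\CU_N)\leq\epsilon}\Psi(H)\,dH=2^{d^2}\,Vol(\CU_N\cap\mathcal{T})\,Vol(x\in\R^{N^2}:\|x\|\leq1)\dashint_{U\in\CU_N\cap\mathcal{T}}\Psi(U)\,dU,
\]
which is the claim. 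The main obstacle is the second step: one must justify that passing from $\Psi$ to $\Psi\circ\pi$ makes the integral genuinely factor through $\CU_N\cap\mathcal{T}$, and, above all, that the pushforward $\mu_\epsilon$ is \emph{exactly} proportional to the Riemannian volume. The latter rests on the equivariance $\pi(gH)=g\,\pi(H)$ together with uniqueness of the invariant measure on the homogeneous space $\CU_N\cap\mathcal{T}$; the value of the proportionality constant is then forced by the already-established global tube volume of Proposition \ref{prop:comparevolumes}, so that no separate local tube computation is needed.
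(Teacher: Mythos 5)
Your proposal is correct and follows essentially the same route as the paper: approximate $\Psi$ by $\Psi\circ\pi$ using smoothness (error $O(\epsilon^{N^2+1})$ after integrating over the tube), show the resulting integral is proportional to $\dashint_{\CU_N\cap\mathcal{T}}\Psi$, fix the constant by testing against $\Psi\equiv1$, and finish with Proposition \ref{prop:comparevolumes}. The only (cosmetic) difference is that you obtain the proportionality from $\mathcal{G}$-equivariance of $\pi$ and uniqueness of the invariant measure on the homogeneous space $\CU_{N-d}^2/S_J$, whereas the paper disintegrates the tube integral over the fibers of $\pi$ via the coarea formula and observes that the fiber integral is a constant $c_\epsilon$ by unitary invariance.
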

\begin{IEEEproof}
For sufficiently small $\epsilon>0$, given $H\in\mathcal{T}$ such that $d(H,\CU_N)<\epsilon$, there is a unique $U\in\mathcal{U}\cap\mathcal{T}$ such that the distance $d(H,\mathcal{U}\cap\mathcal{T})$ is minimized (see for example \cite[p. 32]{Gray04}). Let $\pi(H)$ be such $U$. Moreover, $\pi$ is a smooth mapping. From Theorem \ref{th:coarea} we thus have
\[
\int_{H\in\mathcal{T}:d(H,\CU_N)\leq \epsilon}\psi\,dH=\int_{U\in\CU_N\cap\mathcal{T}}\int_{H\in\mathcal{T}:d(H,\CU_N)\leq \epsilon,\pi(H)=U}NJ\pi(H)\psi(H)\,dH\,dU.
\]
Now, $\psi$ is smooth and hence $\psi(H)=\Psi(U)+O(\epsilon)$. We thus have
\begin{align*}
\int_{H\in\mathcal{T}:d(H,\CU_N)\leq \epsilon}\psi\,dH&=\int_{U\in\CU_N\cap\mathcal{T}}\psi(U)\int_{H\in\mathcal{T}:d(H,\CU_N)\leq \epsilon,\pi(H)=U}NJ\pi(H)\,dH\,dU\\&+O(\epsilon) Vol(H\in\mathcal{T}:d(H,\CU_N)\leq\epsilon).
\end{align*}
The integral inside this last expression is unitary invariant and thus its value is a constant $c_\epsilon$. Moreover, the same argument applied to $\psi\equiv1$ yields
\[
Vol(H\in\mathcal{T}:d(H,\CU_N)\leq \epsilon)=\int_{U\in\CU_N\cap\mathcal{T}}c_\epsilon\,dU.
\]
That is,
\[
c_\epsilon=\frac{Vol(H\in\mathcal{T}:d(H,\CU_N)\leq \epsilon)}{Vol(\CU_N\cap\mathcal{T})}.
\]
We have then proved
\[
\int_{H\in\mathcal{T}:d(H,\CU_N)\leq \epsilon}\psi\,dH= \frac{Vol(H\in\mathcal{T}:d(H,\CU_N)\leq \epsilon)}{Vol(\CU_N\cap\mathcal{T})} \left(\int_{U\in\CU_N\cap\mathcal{T}}\psi(U)\,dU+O(\epsilon)\right)=
\]
\[
Vol(H\in\mathcal{T}:d(H,\CU_N)\leq \epsilon)\left(\dashint_{U\in\CU_N\cap\mathcal{T}}\Psi(U)\,dU+O(\epsilon)\right).
\]
The lemma follows from Proposition \ref{prop:comparevolumes}.
\end{IEEEproof}

\begin{lemma} \label{lem:limits}
Let $\psi$ be a smooth mapping. Then,
\[
\lim_{\epsilon\rightarrow0}\frac{\int_{H\in\mathcal{T}:d(H,\CU_N)\leq \epsilon}\psi(H)\,dH}{Vol(H\in\mathcal{M}_N(\C):d(H,\CU_N)\leq\epsilon)}=
 \frac{2^{d^2}Vol(\CU_N\cap\mathcal{T})}{Vol(\CU_N)}\dashint_{U\in\CU_N\cap\mathcal{T}}\psi(U)\,dU
\]
\end{lemma}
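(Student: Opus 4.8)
\emph{Proof proposal.} The plan is to reduce the statement to the combination of Lemma \ref{lem:almostthere} (which controls the numerator) and the tube formula of Theorem \ref{th:gray} (which controls the denominator), so that the conclusion follows just by matching the powers of $\epsilon$ and cancelling common factors. The observation that makes the two tubes comparable is that the real codimension of $\CU_N$ inside $\mathcal{M}_N(\C)$ equals the real codimension of $\CU_N\cap\mathcal{T}$ inside $\mathcal{T}$: both are $N^2$, the former because $\dim_\R\mathcal{M}_N(\C)=2N^2$ while $\dim_\R\CU_N=N^2$, and the latter by Proposition \ref{prop:T}. Hence both families of closeby matrices shrink at the same rate $\epsilon^{N^2}$, so their quotient has a finite, nonzero limit.

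First I would divide numerator and denominator by $\epsilon^{N^2}$, writing the left--hand side as the quotient of
\[
\frac{1}{\epsilon^{N^2}}\int_{H\in\mathcal{T}:d(H,\CU_N)\leq \epsilon}\Psi(H)\,dH
\quad\text{and}\quad
\frac{1}{\epsilon^{N^2}}Vol(H\in\mathcal{M}_N(\C):d(H,\CU_N)\leq\epsilon),
\]
and then pass to the limit $\epsilon\mapsto0$ in each factor separately. For the numerator, Lemma \ref{lem:almostthere} applies verbatim and gives
\[
\lim_{\epsilon\mapsto0}\frac{1}{\epsilon^{N^2}}\int_{H\in\mathcal{T}:d(H,\CU_N)\leq \epsilon}\Psi(H)\,dH
=2^{d^2}Vol(\CU_N\cap\mathcal{T})\,Vol(x\in\R^{N^2}:\|x\|\leq1)\dashint_{U\in\CU_N\cap\mathcal{T}}\Psi(U)\,dU.
\]
For the denominator, I would invoke Theorem \ref{th:gray} with $X=\CU_N$, $Y=\mathcal{M}_N(\C)$ and codimension $c=N^2$, obtaining
\[
\lim_{\epsilon\mapsto0}\frac{1}{\epsilon^{N^2}}Vol(H\in\mathcal{M}_N(\C):d(H,\CU_N)\leq\epsilon)=Vol(\CU_N)\,Vol(x\in\R^{N^2}:\|x\|\leq1).
\]

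Finally I would take the ratio of these two limits. The common factor $Vol(x\in\R^{N^2}:\|x\|\leq1)$ cancels, leaving exactly
\[
\frac{2^{d^2}Vol(\CU_N\cap\mathcal{T})}{Vol(\CU_N)}\dashint_{U\in\CU_N\cap\mathcal{T}}\Psi(U)\,dU,
\]
as claimed. There is essentially no hard step here: all the analytic content has been front--loaded into Lemma \ref{lem:almostthere} and into the earlier tube--volume computations. The only point that requires a moment's care is confirming that the exponent $N^2$ produced by Theorem \ref{th:gray} on the denominator matches the exponent $N^2$ already present in Lemma \ref{lem:almostthere} on the numerator --- precisely the codimension coincidence noted above --- so that the $\epsilon^{N^2}$ factors cancel and a finite, nonzero limit results.
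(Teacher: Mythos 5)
Your proposal is correct and follows essentially the same route as the paper: the paper likewise obtains the denominator's asymptotics $Vol(\CU_N)\epsilon^{N^2}Vol(x\in\R^{N^2}:\|x\|\leq1)(1+O(\epsilon))$ from Theorem \ref{th:gray} and then concludes directly from Lemma \ref{lem:almostthere}, with the ball-volume factor cancelling in the quotient exactly as you describe.
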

\begin{IEEEproof}
From Theorem \ref{th:gray} and using that the codimension of $\CU_N$ in $\mathcal{M}_N(\C)$ is $N^2$ we know that
\[
Vol(H\in\mathcal{M}_N(\C):d(H,\CU_N)\leq\epsilon)= Vol(\CU_N)\epsilon^{N^2}Vol(x\in\R^{N^2}:\|x\|\leq1)(1+O(\epsilon)),
\]
where $\lim_{\epsilon\rightarrow0}O(\epsilon)=0$. The lemma now follows from Lemma \ref{lem:almostthere}.
\end{IEEEproof}

\subsubsection{proof of Proposition \ref{prop:limits}}
This result is almost inmediate from Lemma \ref{lem:limits} and Proposition \ref{prop:T}. Let $\xi$ be the mapping defined in (\ref{eq:xi}). We have computed the Normal Jacobian of $\xi$ and the volume of the preimage of $\xi$ in Section \ref{sec:proofpropT}. From Theorem \ref{th:coarea},
\[
\int_{(U,V)\in \CU_{N-d}^2}\Psi(\xi(U,V))\,d(U,V)=\int_{H\in\CU_N\cap\mathcal{T}}\Psi(H)\frac{Vol(\xi^{-1}(H))}{NJ\xi}\,dH=Vol(\CU_{N-2d})\int_{H\in\CU_N\cap\mathcal{T}}\Psi(H)\,dH.
\]
Hence, as $\Psi$ does not depend on $C$, and writing $\Psi(H)=\Psi(A,B)$ (note the abuse of notation),
\[
\int_{H\in\CU_N\cap\mathcal{T}}\Psi(H)\,dH=\frac{1}{Vol(\CU_{N-2d})} \int_{(U,V)\in \CU_{N-d}^2}\Psi\left((I_d\;0)V^*,U\binom{I_d}{0}\right)\,d(U,V).
\]
Normalizing we get
\[
\dashint_{H\in\CU_N\cap\mathcal{T}}\Psi(H)\,dH=\dashint_{(U,V)\in \CU_{N-d}^2}\Psi\left((I_d\;0)V^*,U\binom{I_d}{0}\right)\,d(U,V).
\]
Now, generating at random unitary matrices $U,V$ and then taking $(I_d\;0)V^*,U\binom{I_d}{0}$ is the same as generating at random two elements in the Stiefel manifold $\CU_{(N-d)\times d}$. The proposition is proved.
\subsection{Proof of Theorem \ref{th:main3}}
Recall that we have defined $\CH_\epsilon$ in (\ref{eq:Hepsilon}), and we want to compute the limit (\ref{eq:Hepsilon2}):
\[
\lim_{\epsilon\rightarrow0}\frac{Vol(\CHI\cap\CH_\epsilon)Vol(\CS)}{Vol(\CH_\epsilon)}\dashint_{H\in \CHI\cap\CH_\epsilon}\det(\Psi\Psi^*)\,dH= \lim_{\epsilon\rightarrow0}\frac{Vol(\CS)}{Vol(\CH_\epsilon)}\int_{H\in \CHI\cap\CH_\epsilon}\det(\Psi\Psi^*)\,dH.
\]
Now, we use Fubini's theorem to convert the last integral into an iterated integral
\[
\int_{H_{(k_1,l_1)}\in\mathcal{T},d(H_{(k_1,l_1)},\CU_{N})<\epsilon}\cdots\int_{H_{(k_1,l_1)}\in\mathcal{T},d(H_{(k_r,l_r)},\CU_{N})<\epsilon}\det(\Psi\Psi^*)\,dH_{(k_r,l_r)}\cdots\,dH_{(k_1,l_1)},
\]
where $(k_1,l_1),\ldots,(k_r,l_r)$, $r=K(K-1)$ are all the pairs $(k,l)$ with $k\neq l$, ordered with respect to some (irrelevant) criterion. From Proposition \ref{prop:limits}, the last inner integral satisfies:
\[
\int_{H_{(k_1,l_1)}\in\mathcal{T},d(H_{(k_r,l_r)},\CU_{N})<\epsilon}\det(\Psi\Psi^*)\,dH_{(k_r,l_r)}=O(\epsilon^*)+ Vol(H\in\mathcal{M}_{N}(\C):d(H,\CU_{N})\leq\epsilon)\times
\]
\[
 \frac{2^{d^2}Vol(\CU_{N-d})^2}{Vol(\CU_N)Vol(\CU_{N-2d})}\dashint_{(A^*, B) \in \CU_{(N-d)\times d}}\det(\Psi\Psi^*) \,d(A,B),
\]
where $\Psi$ is computed for
\[
H_{(k_r,l_r)}=\begin{pmatrix}0_{d\times d}&A\\B&0_{(N-d)\times(N-d)}\end{pmatrix}.
\]
Here, $O(\epsilon^*)$ is an expression such that
\[
\lim_{\epsilon\rightarrow0}\frac{O(\epsilon^*)}{Vol(H\in\mathcal{M}_{N}(\C):d(H,\CU_{N})\leq\epsilon)}=0.
\]
By repeating the procedure and using Fubini's theorem again to convert the iterated integral into a unique multiple integral, we conclude:
\[
\int_{H\in \CHI\cap\CH_\epsilon}\det(\Psi\Psi^*)\,dH=O(\epsilon^*)+Vol(H\in\mathcal{M}_{N}(\C):d(H,\CU_{N})\leq\epsilon)^{K(K-1)}\times
\]
\[
 \left(\frac{2^{d^2}Vol(\CU_{N-d})^2}{Vol(\CU_N)Vol(\CU_{N-2d})}\right)^{K(K-1)}\dashint_{(A_{kl}^*, B_{kl}) \in \CU_{(N-d)\times d},k\neq l}\det(\Psi\Psi^*) \,d(A_{kl},B_{kl}),
\]
where $\Psi$ is computed for
\[
H_{kl}=\begin{pmatrix}0_{d\times d}&A_{kl}\\B_{kl}&0_{(N-d)\times(N-d)}\end{pmatrix}.
\]
Here, $O(\epsilon^*)$ is an expression such that
\[
\lim_{\epsilon\rightarrow0}\frac{O(\epsilon^*)}{Vol(H\in\mathcal{M}_{N}(\C):d(H,\CU_{N})\leq\epsilon)^{K(K-1)}}=0.
\]
On the other hand, also from Fubini's theorem we have
\[
Vol(\CH_\epsilon)=Vol(H\in\mathcal{M}_{N}(\C):d(H,\CU_{N})\leq\epsilon)^{K(K-1)}.
\]
The claim of the Theorem \ref{th:main3} follows.
%%%%%%%%%%%%%%%%%%%%%%%%%%%%%%%%%%%%%%%%%%%%%%%%%%%%%%%%%%%%%%%%%%%%%%%%%%

\section{Proof of Theorem \ref{th:numsol_singlebeam}}
\label{sec:proof4}
The proof of this theorem is a generatization of the computation in Section \ref{sec:example}. From Theorem \ref{th:main2}, the number of solutions is given by
\begin{equation}\label{eq:numsols_as_expectation}
\#(\pi_1^{-1}(H_0)) = C \, E\left[|\det(\Psi)|^2 \right],
\end{equation}
where $C$ is the constant defined in Theorem \ref{th:main2} and $\Psi$ is a square matrix of size $L=K(K-1)$. The expectation of the square absolute value of the determinant is
\begin{equation}\label{eq:expectation_squared_determinant}
E[|\det(\Psi)|^2]=E\left[\sum_{\sigma\in S_L}\prod_{i=1}^L\Psi_{\sigma(i) i}\sum_{\delta\in S_L}\prod_{i=1}^L\Psi^*_{\delta(i) i}\right]=E\left[\sum_{\substack{\sigma\in S_L\\\delta \in S_L}}\prod_{i=1}^L\Psi_{\sigma(i) i}\Psi^*_{\delta(i) i}\right],
\end{equation}
where $\sigma,\delta \in S_L$ are permutations of the set $(1,\ldots,L)$, and $\Psi_{ij}$ is the $ij$-th entry of the matrix $\Psi$. 
We note that if $\delta\neq\sigma$ then $\prod_{i=1}^L\Psi_{\sigma(i) i}\Psi^*_{\delta(i) i}$ equals the product of a Gaussian random variable times a non-negative quantity and a quantity depending on other Gaussian variables. By the same argument as in Section \ref{ex:example_th2}, we conclude:
\[
E\left[\prod_{i=1}^L\Psi_{\sigma(i) i}\Psi^*_{\delta(i) i}\right]=0,\quad \sigma\neq\delta.
\]
Thus,
\begin{align}\label{eq:expectation_permanent}
E[|\det(\Psi)|^2] &= E\left[\sum_{\sigma\in S_{K(K-1)}}\prod_{i=1}^{K(K-1)}|\Psi_{\sigma(i) i}|^2\right] \stackrel{(1)}{=}\sum_{\sigma\in S_{K(K-1)}}\prod_{i=1}^{K(K-1)}E[|\Psi_{\sigma(i) i}|^2] \\ \notag
&\stackrel{(2)}{=} \left(\prod_{k\neq l}\frac{1}{(N_kM_l-1)}\right)\sum_{\sigma\in S_{K(K-1)}}\prod_{i=1}^{K(K-1)} \mathbf{1}[\Psi_{\sigma(i) i}\neq 0] \stackrel{(3)}{=} \prod_{k\neq l}\frac{1}{(N_kM_l-1)} \per(T). 
\end{align}
A brief explanation of each step follows:
\begin{itemize}
\item[(1)] Independence among different $\Psi_{\sigma(i) i}$ for a given $\sigma$.
\item[(2)] Every non-zero addend in the sum is the product of $K(K-1)$ independent Beta-distributed random variables. In fact, we note that 
\[
|\Psi_{\sigma(i) i}|^2=\frac{|z_1|^2}{|z_1|^2+|z_2|^2+\cdots+|z_{N_kM_l-1}|^2},
\]
where each $z_i$ is a complex Gaussian random variable, whose real and complex parts are $N(0,1)$ variables (i.e. $z_i$ is a $CN(0,2)$ variable). The distribution of the quotient above is then well known: $|\Psi_{\sigma(i) i}|^2\sim \text{Beta}(1,N_kM_l-2)$ is a beta distribution with parameters $1$ and $N_kM_l-2$, and its expected value equals $E[|\Psi_{\sigma(i) i}|^2]=1/(N_kM_l-1)$ where the values of $k$ and $l$ depend uniquely the row $\sigma(i)$. Therefore, $\prod_{i=1}^{K(K-1)}E[|\Psi_{\sigma(i) i}|^2]=\prod_{k\neq l}\frac{1}{(N_kM_l-1)}$. The notation $\mathbf{1}[P]$ denotes the indicator function which equals 1 if the predicate $P$ is true and 0 otherwise.
\item[(3)] The sum can be identified as a Leibniz-like expansion of the permanent of a (0,1)-matrix $T$ which is built by replacing the non-zero elements of $\Psi$ by ones. More specifically, the matrix $T$ will always have $N_k+M_l-2$ ones per row and $K-1$ ones per column.
\end{itemize}
Combining \eqref{eq:numsols_as_expectation} and \eqref{eq:expectation_permanent}, the compact closed-form expression for the number of solutions in \eqref{eq:numsols_singlebeam_closedform} is obtained.

For the second part of the theorem we note that $T$, with the appropriate row and column ordering, is almost exactly equal to the matrix $A$ obtained by setting $m=n=K$, $w_{ij}=\mathbf{1}[i\neq j]$, $c_i=N_i-1$ and $r_i=K-M_i$ in the notations of \cite[Lemma 9]{Barvinok2010}. To obtain matrix $A$ of \cite[Lemma 9]{Barvinok2010} from our matrix $T$ one just adds $K$ rows containing $M_l$ ones each and $K$ columns containing $K$ ones each. A detailed inspection of the matrices shows that $\per(A)=\per(T)\prod_lM_l$ and then \cite[Lemma 9]{Barvinok2010} implies the second claim of the theorem.

%%%%%%%%%%%%%%%%%%%%%%%%%%%%%%%%%%%%%%%%%%%%%%%%%%%%%%%%%%%%%%%%%%%%%%%%%%

\nocite{Gonzalez2013}
\bibliographystyle{IEEEtran}
\bibliography{myreferences}
\end{document}